\newcites{online}{References}
\newcommand{\beginsupplement}{%
        \setcounter{table}{0}
        \renewcommand{\thetable}{S\arabic{table}}%
        \setcounter{figure}{0}
        \renewcommand{\thefigure}{S\arabic{figure}}%
        \setcounter{page}{0}
     }
\newcommand{\pkg}[1]{{\fontseries{b}\selectfont #1}} 
\newtheorem{theorem}{Theorem}
\newtheorem{corollary}{Corollary}[theorem]
\newtheorem{lemma}{Lemma}
\DeclareMathOperator*{\argmin}{arg\,min}
\begin{document}

\def\spacingset#1{\renewcommand{\baselinestretch}%
{#1}\small\normalsize} \spacingset{1}

\title{\bf On the Use of Information Criteria for Subset Selection in Least Squares Regression}
\author{Sen Tian\footnote{E-mail: st1864@stern.nyu.edu} \quad Clifford M. Hurvich  \quad Jeffrey S. Simonoff \\\\
  Department of Technology, Operations, and Statistics, \\Stern School of Business, New York University.}
\date{}
\maketitle

\begin{abstract}
  Least squares (LS)-based subset selection methods are popular in linear regression modeling. Best subset selection (BS) is known to be NP-hard and has a computational cost that grows exponentially with the number of predictors. Recently, \citet{Bertsimas2016} formulated BS as a mixed integer optimization (MIO) problem and largely reduced the computation overhead by using a well-developed optimization solver, but the current methodology is not scalable to very large datasets. In this paper, we propose a novel LS-based method, the best orthogonalized subset selection (BOSS) method, which performs BS upon an orthogonalized basis of ordered predictors and scales easily to large problem sizes. Another challenge in applying LS-based methods in practice is the selection rule to choose the optimal subset size $k$. Cross-validation (CV) requires fitting a procedure multiple times, and results in a selected $k$ that is random across repeated application to the same dataset. Compared to CV, information criteria only require fitting a procedure once, but they require knowledge of the effective degrees of freedom for the fitting procedure, which is generally not available analytically for complex methods. Since BOSS uses orthogonalized predictors, we first explore a connection for orthogonal non-random predictors between BS and its Lagrangian formulation (i.e., minimization of the residual sum of squares plus the product of a regularization parameter and $k$), and based on this connection propose a heuristic degrees of freedom (hdf) for BOSS that can be estimated via an analytically-based expression. We show in both simulations and real data analysis that BOSS using a proposed Kullback-Leibler based information criterion AICc-hdf has the strongest performance of all of the LS-based methods considered and is competitive with regularization methods, with the computational effort of a single ordinary LS fit. Supplementary materials are attached at the end of the main document. An R package \pkg{BOSSreg}, the computer code to reproduce the results for this article, and the complete set of simulation results are available online\footnote{\url{https://github.com/sentian/BOSSreg}. A stable version of the R package is available on \textit{CRAN}.}.

\end{abstract}

\noindent%
{\it Keywords:} Best subset selection; Cross validation; Effective degrees of freedom; Information criteria; Least squares.


\section{Introduction}
Suppose that we have the data generating process
\begin{equation}
\mathbf{y}=\mathbf{\mu}+\mathbf{\epsilon},
\label{eq:truemodel_def}
\end{equation}
where $\mathbf{y}\in \mathcal{R}^n$ is the response vector, $\mathbf{\mu} \in \mathcal{R}^n$ is the fixed mean vector, and $\mathbf{\epsilon} \in \mathcal{R}^n$ is the noise vector. The mean vector is estimated based on a fixed design matrix $\mathbf{X}\in \mathcal{R}^{n\times p}$. We assume the error $\mathbf{\epsilon} \sim \mathcal{N}(0,\sigma^2 I)$.

\subsection{Best subset selection}
Best subset selection (BS) \citep{Hocking1967} seeks the set of predictors that best fit the data in terms of quadratic error for each given subset size $k = 0,1,\cdots,K$ and $K = \min\{n,p\}$, i.e. it solves the following constrained optimization problem:
\begin{equation}
\min_{\beta_0, \beta} \frac{1}{2} \lVert \mathbf{y}-\beta_0\mathtt{1} -\mathbf{X\beta}\rVert_2^2 \quad \text{subject to} \quad \lVert \mathbf{\beta} \rVert_0 \le k,
\label{eq:bestsubset-setup}
\end{equation}
where $\lVert \mathbf{\beta} \rVert_0 = \sum_{i=1}^{p} \mathtt{1}(\beta_i \ne 0)$ is the number of non-zero coefficients in $\mathbf{\beta}$. Note that to simplify the discussion, we assume that the intercept term $\beta_0=0$ throughout the paper, except in the real data examples, where all of the fitting procedures include an intercept.

BS is known to be an NP-hard problem \citep{Natarajan1995}, and its computational cost grows exponentially with the dimension $p$. Many attempts have been made to reduce the computational cost of the method. The most well-known approach is the branch-and-bound algorithm ``leaps'' \citep{Furnival1974} that solves \eqref{eq:bestsubset-setup} in seconds for $p$ being up to around $30$. More recently, \citet{Bertsimas2016} formulated \eqref{eq:bestsubset-setup} using a mixed integer operator (MIO), and largely reduced the computing overhead by using a well-developed optimization solver such as {\tt GUROBI} or {\tt CPLEX}. However, according to \citet{Hastie2017}, MIO normally takes about $3$ minutes to find the solution at a given size $k$ for a problem with $n=500$ and $p=100$. The current methodology is not scalable to very large datasets, and solving \eqref{eq:bestsubset-setup} remains a challenge for most real world applications.

In order to select the optimal tuning parameter, e.g. the subset size $k$ in \eqref{eq:bestsubset-setup}, one often applies either cross-validation (CV) or an information criterion. CV requires fitting the modeling procedure multiple times, and results in a selected value that is random across repeated application to the same data. Information criteria avoid these difficulties by augmenting the training error with a function of the effective degrees of freedom (edf) that characterizes the model complexity. \citet{Efron1986} defined the edf for a general fitting rule $\hat{\mu}$ as:
\begin{equation}
\text{edf}(\hat{\mu}) = \frac{1}{\sigma^2} \sum_{i=1}^{n} \text{cov}(\hat{\mu}_i,y_i).
\label{eq:edf}
\end{equation}
It is easy to verify that edf for the linear regression of $y$ upon a prespecified $X$ is the number of estimated coefficients $p$. However, \citet{janson2015effective} showed in simulations that there can be a large discrepancy between the edf of BS at size $k$ and $k$ itself. Similar evidence can be found in \citet{Tibshirani2015}, where the author quantifies the difference as the search degrees of freedom, which accommodates the amount of searching that BS performs in order to choose a best $k$-predictor subset. Unfortunately, the edf of BS does not have an analytical expression except when $X$ is orthogonal and $\mu=0$ \citep{Ye1998}. Numerically, we can apply tools like data perturbation \citep{Ye1998}, bootstrap \citep{Efron2004} or data randomization \citep{Harris2016} to estimate edf, but all rely on tunings of some hyperparameters and can be computationally intensive. 

This paper is motivated by the above challenges. We propose a novel least squares (LS)-based subset selection method, best orthogonalized subset selection (BOSS), that has the computational cost of a single ordinary LS fit. We also propose a heuristic degrees of freedom (hdf) for BOSS, which makes it feasible to use information criteria to select the optimal subset of BOSS. We show that a Kullback-Leibler (KL)-based information criterion, AICc-hdf, provides the best finite sample performance compared to other types of information criteria and CV, and we further demonstrate that BOSS using AICc-hdf is competitive compared to other types of LS-based methods and regularization methods. 

\subsection{Optimism theorem and information criteria for BS}
\label{sec:optimism}
Information criteria are designed to provide an unbiased estimate of the prediction (or testing) error, and can be derived from the so-called optimism theorem. Denote $\Theta$ as an error measure, $\text{err}$ as the training error, $\text{Err}$ as the testing error, $y^0$ as a new response vector with the same distribution but independent of the original $y$, and $E_0$ is the expectation taken over $y^0$. \citet{Efron1986} defined the optimism as 
\begin{equation*}
\text{op} = \text{Err} - \text{err},
\end{equation*}
and introduced the optimism theorem,
\begin{equation*}
E(\text{op}) = E(\text{Err}) - E(\text{err}).
\end{equation*}
A straightforward result from the optimism theorem is that 
\begin{equation}
\widehat{\text{Err}} = \text{err} + E(\text{op})
\label{eq:err_eop}
\end{equation}
is an unbiased estimator of $E(\text{Err})$, and is intended to balance the trade-off between model fit and model complexity. The challenge is to find $E(\text{op})$ for a given fitting rule $\hat{\mu}$ and error measure $\Theta(y,\hat{\mu})$.

When the error measure $\Theta$ is the squared error (SE), i.e. $\Theta(y_i,\hat{\mu}_i)=(y_i-\hat{\mu}_i)^2$, $\text{err}_{\text{SE}}$ (denoted as the training error when $\Theta$ is SE) then becomes the residual sum of squares $\text{RSS} = \sum_{i=1}^{n} \Theta(y_i, \hat{\mu}_i)$, and the testing error $\text{Err}_\text{SE} =\sum_{i=1}^n E_0[\Theta(y^0_i,\hat{\mu}_i)]$. \citet{Ye1998} and \citet{Efron2004} proved that for a general fitting rule $\hat{\mu}$ such as BS, $E(\text{op}_{\text{SE}})=2\sigma^2 \cdot \text{edf}(\hat{\mu})$, and hence $\widehat{\text{Err}}_{\text{SE}}$ in \eqref{eq:err_eop} becomes C$_p$-edf, where 
\begin{equation}
\text{C}_p\text{-edf} = \text{RSS} + 2 \sigma^2 \cdot \text{edf}.
\label{eq:cp_edf}
\end{equation}
These authors also showed that the traditional C$_p$,
\begin{equation*}
\text{C}_p\text{-ndf} = \text{RSS} + 2 \sigma^2 \cdot \text{ndf},
\end{equation*}
can be greatly biased when applied for BS, where ndf is the ``naive degrees of freedom'' that ignores any searching over model fits the procedure does in a given application. This is because C$_p$-ndf \citep{mallows1973some} was derived for a linear estimation rule $\hat{\mu} = Hy$ where $H$ is independent of $y$, which is not the case for BS. Here ndf equals $\text{Tr}(H)$. A further major issue regarding applying C$_p$ in practice is that it requires an estimate of $\sigma^2$. 

Another commonly used error measure is the deviance, which is (up to a constant)
\begin{equation}
\Theta = -2 \log f(y|\mu,\sigma^2),
\label{eq:deviance_def}
\end{equation}
\sloppy where $f$ is a pre-specified parametric model. Let $\hat{\mu}$ and $\hat{\sigma}^2$ be the maximum likelihood estimators obtained by maximizing $f(y|\mu,\sigma^2)$. We then have $\text{err}_{\text{KL}} = -2 \log f (y|\hat{\mu},\hat{\sigma}^2)$ and $\text{Err}_{\text{KL}}  = -2 E_0 \left[ \log f(y^0|\hat{\mu},\hat{\sigma}^2)\right] $, where the latter is the Kullback-Leibler (KL) discrepancy. For a linear estimation procedure, assuming asymptotic normality of $\hat{\mu}$ and $\hat{\sigma}^2$ ($f$ not necessarily Gaussian) and the true model distribution being contained in the specified parametric model $f$, \citet{konishi2008information} proved that $E(\text{op}_{\text{KL}}) = 2 \cdot \text{ndf} + o(1)$, and AIC \citep{Akaike1973},
\begin{equation*}
-2 \log f (y|\hat{\mu},\hat{\sigma}^2) + 2 \cdot \text{ndf},
\end{equation*}
asymptotically equals $\widehat{\text{Err}}_\text{KL}$ \eqref{eq:err_eop}. If $f$ follows a Gaussian distribution, as assumed in \eqref{eq:truemodel_def}, AIC can be expressed as
\begin{equation*}
\text{AIC-ndf} = n \log\left(\frac{\text{RSS}}{n}\right) + 2 \cdot \text{ndf}.
\end{equation*}
\citet{Hurvich1989} replaced the asymptotic $E(\text{op}_\text{KL})$ with its exact value, for Gaussian linear regression with an assumption that the predictors with non-zero true coefficients are included in the model, and proposed using the corrected AIC
\begin{equation*}
\text{AICc-ndf} = n \log\left(\frac{\text{RSS}}{n}\right) + n \frac{n+\text{ndf}}{n-\text{ndf}-2}.
\label{eq:aicc_ndf}
\end{equation*}
Neither AIC nor AICc has a penalty term depending upon $\sigma^2$, a clear advantage over C$_p$.

It remains a challenge to derive a KL-based information criterion for BS. \citet{Liao2018} estimated $E(\text{op}_{\text{KL}})$ via Monte Carlo simulations, but this relies on thousands of fits of the procedure, which is not computationally feasible for large datasets. 

In this work, we propose the use of AICc-edf
\begin{equation}
\text{AICc-edf} = n \log\left(\frac{\text{RSS}}{n}\right) + n \frac{n+\text{edf}}{n-\text{edf}-2}
\label{eq:aicc_edf}
\end{equation}
for this purpose. We demonstrate that $E(\text{AICc-edf})$ approximates $E(\text{Err}_\text{KL})$ well for BS. Moreover, both AICc-edf and $\widehat{\text{Err}}_\text{KL}$ generally choose the same subset when used as selection rules. Furthermore, a feasible implementation AICc-hdf (replacing edf with hdf in \eqref{eq:aicc_edf}) works reasonably well as a selection rule for BS with an orthogonal $X$ and for our proposed method BOSS with a general $X$.

\subsection{The structure and contributions of this paper}
The rest of the paper is organized as follows. In Section \ref{sec:ic_bs}, we introduce the hdf for BS in the presence of an orthogonal $X$, and we show in simulations that AICc-hdf provides the best finite sample performance compared to other types of information criteria and CV. In Section \ref{sec:boss}, we consider a general $X$ and propose the method BOSS. We provide numerical evidence that AICc-hdf is a reasonable selection rule for BOSS. Furthermore, we compare the performance of BOSS with that of LS-based methods and regularization methods in simulations and real data examples. In Section \ref{sec:justification_aicc_hdf}, 
we provide a theoretical justification for hdf in a restricted scenario, and numerical justifications in general situations. We also justify AICc-edf as an approximation of $E(\text{Err}_\text{KL})$. Lastly, we provide conclusions and potential future works in Section \ref{sec:conclusion}.

Below is guidance for applying LS-based methods in practice for data analysts.
\begin{itemize}
	\item Using information criteria in a naive way by plugging in the subset size as the degrees of freedom can lead to significantly worse performance than using edf and the feasible hdf.
	\item AICc is a better selection rule in terms of predictive performance in comparison to C$_p$, and the advantage is particularly strong when $p$ is large.
	\item AICc is not only more computationally efficient than cross-validation (CV), but also can result in subsets with better predictive performance, especially when the signal-to-noise ratio (SNR) is high or the sample size $n$ is large. The SNR is defined as $\text{Var}(x^T \beta) / \sigma^2$.
	\item BOSS using AICc is generally the best LS-based method in comparison to BS and forward stepwise (FS, in which the set of candidate models is chosen by stepping in variables one at a time based on choosing the one that adds the most fit to the regression at that step) using CV as the selection rule, in terms of both computational efficiency and predictive performance.
	\item Compared to regularization methods, BOSS using AICc performs the best when SNR is high or the sample size $n$ is large. In terms of support recovery in a sparse true model, BOSS recovers the true predictors and rarely includes any false positives. In contrast, regularization methods generally overfit. 
\end{itemize}


\section{Information criteria for BS}
\label{sec:ic_bs}

The best orthogonalized subset selection (BOSS) method first orthogonalizes an ordered set of predictors, and performs BS upon the orthogonalized predictors. Under an orthogonal $X$, BS essentially ranks the predictors based on the magnitude of their regression coefficients, i.e. the coefficient vector at subset size $k$ is $\hat{\beta}_i(k)=z_i \mathbbm{1}_{(|z_{i}| \ge |z_{(k)}|)}$, where $z=X^T y$ and $z_{(k)}$ is the $k$-th largest coefficient in absolute value. In this section, we study the use of information criteria to select the optimal subset size $k$ for BS under orthogonal predictors, and the results provide the foundations for the selection rule of BOSS. 

Using information criteria that are designed under a prefixed set of predictors to choose the number of predictors among the best models for each number of predictors (as is done in BS) results in inferior performance. \citet{Ye1998} showed that C$_p$-ndf is typically biased, and proposed using C$_p$-edf. We show in this section that the danger of using information criteria in the naive way also exists for other types of information criteria, such as AICc. Since edf for BS does not have an analytical expression, we propose a heuristic degrees of freedom (hdf) that can be calculated via an analytically-based formula. We show via simulations that the subset selected using AICc-hdf (by replacing edf with hdf in \eqref{eq:aicc_edf}) has the best predictive performance compared to other types of information criteria and CV. We present the main results in this section and give the justifications of hdf and AICc in Section \ref{sec:justification_aicc_hdf}.

\subsection{Heuristic degrees of freedom (hdf) for BS under orthogonal predictors}
\label{sec:hdf}

The edf of BS has an analytical expression only when $X$ is orthogonal and the true model is $\mu=0$ \citep{Ye1998}. \citet{Tibshirani2015} studied the Lagrangian formulation of BS (LBS) and provided an analytical expression for edf without any restrictions on $\mu$ (the only assumption being that $X$ is orthogonal). To distinguish between the two methods, we use df$_C(k)$ and df$_L(\lambda)$ to denote edf of BS for subset size $k$ and edf of LBS for tuning parameter $\lambda$, respectively. 

For each regularization parameter $\lambda \ge 0$, LBS solves
\begin{equation}
	\min_\beta \frac{1}{2} \lVert \mathbf{y}-\mathbf{X\beta}\rVert_2^2 + \lambda\lVert \mathbf{\beta} \rVert_0.
\label{eq:lbestsubset-setup}
\end{equation} 
Both LBS \eqref{eq:lbestsubset-setup} and BS \eqref{eq:bestsubset-setup} are LS regressions of $y$ upon a certain subset of $X$. With orthogonal $X$, both problems have analytical solutions: $\hat{\beta}_i(\lambda)=z_i \mathbbm{1}_{(|z_{i}| \ge \sqrt{2\lambda})}$ for \eqref{eq:lbestsubset-setup} and $\hat{\beta}_i(k)=z_i \mathbbm{1}_{(|z_{i}| \ge |z_{(k)}|)}$ for \eqref{eq:bestsubset-setup}, where $z=X^T y$ and $z_{(k)}$ is the $k$-th largest coefficient in absolute value. These two problems are not equivalent, and there is no clear one-to-one correspondence between $\lambda$ in \eqref{eq:lbestsubset-setup} and $k$ in \eqref{eq:bestsubset-setup}. Indeed, for each $\lambda$ there exists a $k$ such that $\hat{\beta}(\lambda) = \hat{\beta}(k)$ where $\hat{\beta}(\lambda)$ is the solution of \eqref{eq:lbestsubset-setup} at $\lambda$ and $\hat{\beta}(k)$ is the solution of \eqref{eq:bestsubset-setup} at $k$, but the reverse does not necessarily hold, since there will be multiple $\lambda$ corresponding to the same solution $\hat{\beta}(k)$. Moreover, with a general $X$, solving \eqref{eq:lbestsubset-setup} does not guarantee recovery of the entire solution path given by solving \eqref{eq:bestsubset-setup} for $k=0,\dots,K$.

Under an orthogonal $X$, \citet{Tibshirani2015} derived an expression for df$_L(\lambda)$ as 
\begin{equation}
	\text{df}_L(\lambda) = E(k_L(\lambda)) + \frac{\sqrt{2\lambda}}{\sigma} \sum_{i=1}^{K} \left[\phi\left(\frac{\sqrt{2\lambda}-(X^T \mu)_i}{\sigma}\right) + \phi\left(\frac{-\sqrt{2\lambda}-(X^T \mu)_i}{\sigma}\right) \right],
	\label{eq:thdf_expression}
\end{equation}
where the expected subset size in the expression is given by 
\begin{equation}
	E(k_L(\lambda)) = \sum_{i=1}^{K} \left[1-\Phi\left(\frac{\sqrt{2\lambda}-(X^T \mu)_i}{\sigma}\right) + \Phi\left(\frac{-\sqrt{2\lambda}-(X^T \mu)_i}{\sigma}\right) \right].
	\label{eq:thdf_size_expression}
\end{equation}
Given the similarity of problems \eqref{eq:bestsubset-setup} and \eqref{eq:lbestsubset-setup}, we would like to approximate df$_C(k)$ with df$_L(\lambda)$. One implementation of this proceeds as follows. Note that df$_C(k)$ is a discrete function of $k=0,\cdots,K$ while df$_L(\lambda)$ is a continuous function of a real variable $\lambda\ge 0$. We propose an hdf that uses $\text{df}_L(\lambda)$ for a particular value of $\lambda$ depending on $k$ as a proxy for $\text{df}_C(k)$. Based on \eqref{eq:thdf_size_expression}, $\lambda$ and $E(k_L(\lambda))$ have a clear one-to-one correspondence, which implies that we can find a unique $\lambda_k^\star$ such that $E(k_L(\lambda_k^\star)) = k$ for each $k=1,\cdots,K$. The value of hdf is df$_L(\lambda_k^\star)$ obtained by substituting $\lambda^\star_k$ into \eqref{eq:thdf_expression}. We also let hdf$(0)=0$ since df$_C(0)=0$. The implementation process is summarized in Algorithm \ref{alg:hdf}. In place of $\mu$ and $\sigma$, we use the OLS estimates based on the full model, i.e. $\hat{\mu}=XX^T y$, $\hat{\sigma}^2 = \lVert y-\hat{\mu} \rVert_2^2/(n-p)$.

\begin{algorithm}
	\caption{The heuristic df (hdf) for BS under orthogonal predictors}\label{alg:hdf}
	Input: $X$ (orthogonal), $\sigma$ and $\mu$. For a given subset size $k$, 
	\begin{enumerate}[label=\arabic*.]
		\item Based on \eqref{eq:thdf_size_expression}, calculate $\lambda_k^\star$ such that $E(k_L(\lambda_k^\star)) = k$.
		\item Based on \eqref{eq:thdf_expression}, calculate hdf$(k) = \text{df}_L(\lambda_k^\star)$.
	\end{enumerate}
	Repeat the above steps for $k=1,\cdots,K$ and let hdf$(0)=0$, yielding hdf for each subset. 
	
\end{algorithm}

\subsection{Information criteria for BS under orthogonal predictors}
\label{sec:aicc_performance_bs}

The hdf makes it feasible to use information criteria to select the optimal subset for BS. For instance, by replacing edf with hdf in the expression of C$_p$-edf and AICc-edf, we have the feasible criteria C$_p$-hdf and AICc-hdf, respectively. In this section, we perform a comprehensive set of simulation studies to compare the performance of different information criteria for BS, and show that AICc-hdf provides the best finite sample performance. We include infeasible versions of the criteria based on the edf (these are infeasible since the edf would not be known in real data applications), since those would correspond to the ideal versions of the criteria. To calculate the edf, we fit the BS procedure on $1000$ replicated realizations of the response generated from the true model after fixing $X$, and estimate the definition \eqref{eq:edf} using the sample covariance. We also consider a numerical estimation of edf that is based on the parametric bootstrap, and we denote it as bdf. The detailed implementation of bdf and the benefit of parametric bootstrap are discussed in \citet{Efron2004}. In our experiment, we use $100$ bootstrapped samples. Also, by analogy to C$_p$ and AICc, we define BIC-edf as
\begin{equation*}
\text{BIC-edf} = n \log\left(\frac{\text{RSS}}{n}\right) + \log(n) \cdot \text{edf},
\end{equation*}
where the original BIC (or BIC-ndf in our notation) was introduced in \citet{schwarz1978estimating}. In addition to the information criteria, we also include 10-fold CV for comparison. Note that the CV results are only available for $p \le 30$, since orthogonality no longer holds for random subsamples and BS is therefore fitted using the ``leaps'' algorithm.

We consider two sparse true models (denoted as Orth-Sparse-Ex1 and Orth-Sparse-Ex2) that have $p_0=6$ signal predictors (those with non-zero coefficients), and a dense true model (denoted as Orth-Dense) where all predictors have non-zero coefficients. We also consider three signal-to-noise (SNR) ratios, and the SNR is defined as $\text{Var}(x^T \beta)/\sigma^2$. The average oracle $R^2$ (linear regression on the set of true predictors) corresponding to these three SNR values are roughly $20\%$, $50\%$ and $90\%$. We further consider eight combinations of $(n,p)$, resulting in $72$ different scenarios in this experiment. In each scenario, $1000$ replications of the response $y$ are generated by fixing the design matrix $X$. A fitting procedure $\hat{\mu}$ is evaluated via the average RMSE, where 
\begin{equation*}
\text{RMSE}(\hat{\mu}) = \sqrt{ \frac{1}{n} \lVert \hat{\mu}-X\beta \rVert_2^2}.
\end{equation*}
We summarize the results using two relative measures. The $\%$ worse than best possible BS defines the relative performance of the procedure relative to the BS model, where on a single fit, the subset with the minimum RMSE among all candidates is selected, as if an oracle tells us the best model. The relative efficiency defines the performance of the procedure relative to all other procedures considered in the experiment. It is a measure between $0$ and $1$, and higher value indicates better performance. We also present the sparsistency (number of true positives) and number of extra predictors (number of false positives). The details of the simulation setup and evaluation metrics are discussed in the Supplemental Material Section \ref{sec:simulation_setup_orthx}, and the complete simulation results are presented in the Online Supplemental Material\footnote{The complete results for simulation studies in this paper are available at \url{https://github.com/sentian/BOSSreg}.}. 

\begin{table}[ht]
\centering
\caption{Information criteria for BS under orthogonal $X$. The true model setup is Orth-Sparse-Ex1 (see Supplemental Material Section \ref{sec:simulation_setup_orthx} for details). 
                    The columns involving ``edf'' refer to infeasible selection rules since edf is estimated as if the true model is known, 
                    while other columns correspond to feasible rules.} 
\label{tab:ic_df_orthx_sparseex1}
\scalebox{0.7}{
\begin{tabular}{|c|c|c|cc|cc|cc|c|}
  \toprule 
 \multicolumn{1}{|r}{} & \multicolumn{1}{c}{} &       & \multicolumn{2}{c|}{C$_p$} & \multicolumn{2}{c|}{AICc} & \multicolumn{2}{c|}{BIC} & \multirow{2}[4]{*}{CV} \\
 \cmidrule{4-9}\multicolumn{1}{|r}{} & \multicolumn{1}{c}{} &       & edf   & ndf/hdf/bdf & edf   & ndf/hdf/bdf & edf   & ndf/hdf/bdf &     \\
 \midrule 
 \multicolumn{1}{|r}{} & \multicolumn{1}{c}{} &       & \multicolumn{7}{c|}{\% worse than the best possible BS} \\
 \midrule 
 \multirow{4}[4]{*}{n=200} & \multirow{2}[2]{*}{hsnr} & p=30 & 4 & 84/5/7 & 2 & 83/2/5 & 0 & 28/0/0 & 24 \\ 
   &  & p=180 & 1 & 338/30/32 & 0 & 392/1/2 & 0 & 206/0/0 & - \\ 
  \cmidrule{2-10} & \multirow{2}[2]{*}{lsnr} & p=30 & 20 & 25/36/33 & 21 & 24/37/35 & 68 & 23/68/67 & 28 \\ 
   &  & p=180 & 15 & 108/35/34 & 18 & 132/22/22 & 25 & 50/25/25 & - \\ 
  \midrule \multirow{4}[4]{*}{n=2000} & \multirow{2}[2]{*}{hsnr} & p=30 & 3 & 85/3/6 & 3 & 85/3/6 & 0 & 9/0/0 & 23 \\ 
   &  & p=180 & 0 & 334/1/3 & 1 & 337/1/3 & 0 & 60/0/0 & - \\ 
  \cmidrule{2-10} & \multirow{2}[2]{*}{lsnr} & p=30 & 3 & 85/6/7 & 3 & 85/5/6 & 0 & 9/0/0 & 23 \\ 
   &  & p=180 & 0 & 334/5/4 & 1 & 337/4/4 & 0 & 60/1/1 & - \\ 
   \midrule 
 \multicolumn{1}{|r}{} & \multicolumn{1}{r}{} &       & \multicolumn{7}{c|}{Relative efficiency} \\
 \midrule 
\multirow{4}[4]{*}{n=200} & \multirow{2}[2]{*}{hsnr} & p=30 & 0.96 & 0.54/0.95/0.93 & 0.98 & 0.55/0.98/0.96 & 1 & 0.78/1/1 & 0.81 \\ 
   &  & p=180 & 0.99 & 0.23/0.77/0.76 & 1 & 0.2/0.99/0.98 & 1 & 0.33/1/1 & - \\ 
  \cmidrule{2-10} & \multirow{2}[2]{*}{lsnr} & p=30 & 1 & 0.97/0.89/0.9 & 1 & 0.97/0.88/0.89 & 0.72 & 0.98/0.72/0.72 & 0.94 \\ 
   &  & p=180 & 1 & 0.55/0.86/0.86 & 0.97 & 0.5/0.95/0.95 & 0.93 & 0.77/0.93/0.93 & - \\ 
  \midrule \multirow{4}[4]{*}{n=2000} & \multirow{2}[2]{*}{hsnr} & p=30 & 0.97 & 0.54/0.97/0.94 & 0.97 & 0.54/0.97/0.94 & 1 & 0.92/1/1 & 0.81 \\ 
   &  & p=180 & 1 & 0.23/0.99/0.97 & 0.99 & 0.23/0.99/0.97 & 1 & 0.62/1/1 & - \\ 
  \cmidrule{2-10} & \multirow{2}[2]{*}{lsnr} & p=30 & 0.97 & 0.54/0.95/0.94 & 0.97 & 0.54/0.95/0.94 & 1 & 0.92/1/1 & 0.81 \\ 
   &  & p=180 & 1 & 0.23/0.96/0.96 & 0.99 & 0.23/0.96/0.96 & 1 & 0.62/0.99/0.99 & - \\ 
   \midrule 
 \multicolumn{1}{|r}{} & \multicolumn{1}{r}{} &       & \multicolumn{7}{c|}{Sparsistency (number of extra variables)} \\
 \midrule 
\multirow{4}[4]{*}{n=200} & \multirow{2}[2]{*}{hsnr} & p=30 & 6(0.1) & 6(3.9)/6(0.2)/6(0.2) & 6(0) & 6(3.8)/6(0.1)/6(0.1) & 6(0) & 6(0.6)/6(0)/6(0) & 6(0.7) \\ 
   &  & p=180 & 6(0) & 6(32.2)/6(6.4)/6(6.3) & 6(0) & 6(48.9)/6(0)/6(0) & 6(0) & 6(9.5)/6(0)/6(0) & - \\ 
  \cmidrule{2-10} & \multirow{2}[2]{*}{lsnr} & p=30 & 4.5(1.9) & 5.3(3.9)/4.2(4.9)/4.2(4) & 4.2(1.2) & 5.2(3.8)/3.3(2.2)/3.4(1.8) & 0.1(0) & 3.7(0.6)/0.1(0)/0.2(0) & 4(1.9) \\ 
   &  & p=180 & 1.9(0.5) & 5.3(32.2)/1.8(10.9)/1.9(9.8) & 1.1(0.1) & 5.6(49)/0.5(0)/0.6(0) & 0(0) & 4.2(8.4)/0(0)/0(0) & - \\ 
  \midrule \multirow{4}[4]{*}{n=2000} & \multirow{2}[2]{*}{hsnr} & p=30 & 6(0.1) & 6(3.8)/6(0.1)/6(0.2) & 6(0.1) & 6(3.8)/6(0.1)/6(0.2) & 6(0) & 6(0.1)/6(0)/6(0) & 6(0.6) \\ 
   &  & p=180 & 6(0) & 6(27.5)/6(0)/6(0) & 6(0) & 6(28.2)/6(0)/6(0) & 6(0) & 6(1.1)/6(0)/6(0) & - \\ 
  \cmidrule{2-10} & \multirow{2}[2]{*}{lsnr} & p=30 & 6(0.1) & 6(3.8)/6(0.2)/6(0.2) & 6(0.1) & 6(3.8)/6(0.2)/6(0.2) & 6(0) & 6(0.1)/6(0)/6(0) & 6(0.6) \\ 
   &  & p=180 & 6(0) & 6(27.5)/6(0.1)/6(0) & 6(0) & 6(28.2)/6(0.1)/6(0) & 6(0) & 6(1.1)/6(0)/6(0) & - \\ 
   \bottomrule 
\end{tabular}
}
\end{table}

\begin{table}[ht]
\centering
\caption{Information criteria for BS under orthogonal $X$. The true model setup is Orth-Dense (see Supplemental Material Section \ref{sec:simulation_setup_orthx} for details).} 
\label{tab:ic_df_orthx_dense}
\scalebox{0.7}{
\begin{tabular}{|c|c|c|cc|cc|cc|c|}
  \toprule 
 \multicolumn{1}{|r}{} & \multicolumn{1}{c}{} &       & \multicolumn{2}{c|}{C$_p$} & \multicolumn{2}{c|}{AICc} & \multicolumn{2}{c|}{BIC} & \multirow{2}[4]{*}{CV} \\
 \cmidrule{4-9}\multicolumn{1}{|r}{} & \multicolumn{1}{c}{} &       & edf   & ndf/hdf/bdf & edf   & ndf/hdf/bdf & edf   & ndf/hdf/bdf &     \\
 \midrule 
 \multicolumn{1}{|r}{} & \multicolumn{1}{c}{} &       & \multicolumn{7}{c|}{\% worse than the best possible BS} \\
 \midrule 
 \multirow{4}[4]{*}{n=200} & \multirow{2}[2]{*}{hsnr} & p=30 & 1 & 11/1/2 & 1 & 13/1/2 & 1 & 28/3/5 & 7 \\ 
   &  & p=180 & 7 & 45/21/20 & 9 & 52/18/19 & 18 & 26/39/42 & - \\ 
  \cmidrule{2-10} & \multirow{2}[2]{*}{lsnr} & p=30 & 15 & 10/16/16 & 20 & 10/21/20 & 27 & 16/27/27 & 16 \\ 
   &  & p=180 & 8 & 86/22/22 & 7 & 102/7/7 & 7 & 39/7/7 & - \\ 
  \midrule \multirow{4}[4]{*}{n=2000} & \multirow{2}[2]{*}{hsnr} & p=30 & 0 & 1/0/0 & 0 & 1/0/0 & 0 & 18/0/1 & 1 \\ 
   &  & p=180 & 6 & 34/8/8 & 6 & 34/8/8 & 19 & 7/36/37 & - \\ 
  \cmidrule{2-10} & \multirow{2}[2]{*}{lsnr} & p=30 & 2 & 11/3/3 & 2 & 11/3/3 & 44 & 41/36/45 & 10 \\ 
   &  & p=180 & 8 & 48/10/10 & 8 & 48/10/10 & 24 & 8/45/47 & - \\ 
   \midrule 
 \multicolumn{1}{|r}{} & \multicolumn{1}{r}{} &       & \multicolumn{7}{c|}{Relative efficiency} \\
 \midrule 
\multirow{4}[4]{*}{n=200} & \multirow{2}[2]{*}{hsnr} & p=30 & 1 & 0.91/1/1 & 1 & 0.9/1/0.99 & 1 & 0.79/0.98/0.96 & 0.95 \\ 
   &  & p=180 & 1 & 0.74/0.89/0.89 & 0.99 & 0.71/0.91/0.9 & 0.91 & 0.85/0.77/0.76 & - \\ 
  \cmidrule{2-10} & \multirow{2}[2]{*}{lsnr} & p=30 & 0.95 & 1/0.95/0.95 & 0.91 & 1/0.91/0.91 & 0.86 & 0.94/0.86/0.86 & 0.94 \\ 
   &  & p=180 & 1 & 0.58/0.88/0.88 & 1 & 0.53/1/1 & 1 & 0.77/1/1 & - \\ 
  \midrule \multirow{4}[4]{*}{n=2000} & \multirow{2}[2]{*}{hsnr} & p=30 & 1 & 0.99/1/1 & 1 & 0.99/1/1 & 1 & 0.85/1/0.99 & 0.99 \\ 
   &  & p=180 & 1 & 0.79/0.98/0.98 & 1 & 0.79/0.98/0.98 & 0.89 & 1/0.78/0.78 & - \\ 
  \cmidrule{2-10} & \multirow{2}[2]{*}{lsnr} & p=30 & 1 & 0.92/0.99/0.99 & 1 & 0.92/0.99/0.99 & 0.71 & 0.73/0.75/0.7 & 0.93 \\ 
   &  & p=180 & 1 & 0.73/0.98/0.98 & 1 & 0.73/0.98/0.98 & 0.87 & 1/0.74/0.73 & - \\ 
   \midrule 
 \multicolumn{1}{|r}{} & \multicolumn{1}{r}{} &       & \multicolumn{7}{c|}{Sparsistency (number of extra variables)} \\
 \midrule 
\multirow{4}[4]{*}{n=200} & \multirow{2}[2]{*}{hsnr} & p=30 & 30 & 24.7/29.5/29 & 30 & 24.2/29.4/28.8 & 30 & 20.9/28.8/27.5 & 26.6 \\ 
   &  & p=180 & 20.5 & 53.3/37.4/35.5 & 18.3 & 62.3/16.3/16.3 & 16.1 & 35/13.7/13.5 & - \\ 
  \cmidrule{2-10} & \multirow{2}[2]{*}{lsnr} & p=30 & 12.8 & 10.5/14.6/13 & 7.6 & 10.3/8.5/7.6 & 0 & 4/0/0 & 7.5 \\ 
   &  & p=180 & 0.8 & 39/14.5/13.7 & 0.3 & 55.2/0.2/0.3 & 0 & 11.8/0/0 & - \\ 
  \midrule \multirow{4}[4]{*}{n=2000} & \multirow{2}[2]{*}{hsnr} & p=30 & 30 & 29.8/30/29.9 & 30 & 29.8/30/29.9 & 30 & 28.6/30/29.9 & 29.8 \\ 
   &  & p=180 & 32.1 & 58.9/32.4/32.3 & 31.8 & 58.9/31.6/31.6 & 27 & 31.3/25/24.9 & - \\ 
  \cmidrule{2-10} & \multirow{2}[2]{*}{lsnr} & p=30 & 28.8 & 19.9/28.2/26.9 & 28.8 & 19.9/28.1/26.8 & 13.5 & 12.5/16.7/14.1 & 22.3 \\ 
   &  & p=180 & 13.9 & 43.8/14/14 & 13.6 & 44.1/13.3/13.3 & 9.1 & 13.4/7/6.8 & - \\ 
   \bottomrule 
\end{tabular}
}
\end{table}

A selected set of results is given in Tables \ref{tab:ic_df_orthx_sparseex1} and \ref{tab:ic_df_orthx_dense}. Here ``hsnr'' and ``lsnr'' represent the high and low SNR, respectively. A brief summary of the results is as follows:

\begin{itemize}
	\item Using information criteria in the naive way (with ndf) can be dangerous, especially when $p$ is large and SNR is high. For example, using ndf in AICc significantly overfits and can be almost $400$ times worse in terms of RMSE than using hdf for $n=200$, high SNR and $p=180$ in the sparse example. Increasing the sample size $n$ does not improve the performance of naive implementation of information criteria, and the overfiting persists.
	\item AICc-hdf generally does not lose much efficiency and performs similarly in terms of RMSE, in comparison to the infeasible AICc-edf. Increasing the sample size $n$ or SNR improves the performance of both AICc-edf and AICc-hdf. 
	\item AICc-hdf performs very similarly to AICc-bdf. Since bdf is calculated based on $100$ bootstrapped samples, it is roughly $100$ times more intensive than hdf in computations. 
	\item AICc-hdf is generally better than 10-fold CV, e.g. when $n$ is large or SNR is high. Note that 10-fold CV is roughly $10$ times heavier in terms of computation than AICc-hdf. It is also worth noticing that these findings are broadly consistent with the results reported by \citet{Taddy2017} for the gamma lasso method. 
	\item C$_p$-edf performs similarly to AICc-edf. In contrast, when we consider the feasible implementations (ndf/hdf/bdf), i.e. when $\sigma$ is estimated by full OLS, C$_p$ can suffer when $p$ is close to $n$, such as when $n=200$ and $p=180$. 
	\item Under a sparse true model BIC-hdf performs slightly better than AICc-hdf except when SNR is low and $n=200$, where BIC is considerably worse. Under a dense true model BIC-hdf is always outperformed by AICc-hdf. 
\end{itemize}
For the reasons presented above, we conclude that AICc-hdf is the best feasible selection rule for BS on orthogonal predictors, among all that have been considered. 


\section{Best orthogonalized subset selection (BOSS)}
\label{sec:boss}

For predictors in general position, BS is not computationally feasible for large problems. In this section, we propose a feasible LS-based subset selection method BOSS that is based on orthogonalizing the predictors. The orthogonalization not only makes the computation of BS feasible, but also allows us to take advantage of the superior performance of AICc-hdf as discussed in Section \ref{sec:aicc_performance_bs}. BOSS using AICc-hdf as the selection rule has computational cost of the order $O(npK)$, that is of the same order as a multiple regression on all $p$ predictors for the problem where $n>p$, and on a selected subset of $n$ predictors for $n \le p$. We further demonstrate the competitive performance of BOSS via simulations and real world examples. 

The main steps for BOSS can be summarized as follows: 1) order and orthogonalize the predictors, 2) perform BS on the set of orthogonalized predictors, 3) transform the coefficients back to the original space, and 4) use a selection rule such as AICc-hdf to choose the optimal single subset. 

\subsection{The solution path of BOSS}
\label{sec:boss_solutionpath}

BOSS starts by ordering and orthogonalizing the predictors, taking $K$ steps in total. The ordering is based on partial correlations with the response $y$, and the orthogonalization is based on QR decomposition with Gram-Schmidt. For step $k$, we use $X_{S_{k-1}}$ to denote the set of ordered predictors from the previous step, and use $Q_{S_{k-1}}$ to denote the orthogonal basis of $X_{S_{k-1}}$. From the remaining $p-k+1$ predictors, we choose the one that has the largest correlation with $y$ conditioning on $Q_{S_{k-1}}$, that is the correlation between $y$ and the residual from regressing a candidate predictor on $Q_{S_{k-1}}$. This costs $O(n)$ since we maintain the regression result, e.g. estimated coefficients and residual, in the previous steps. Repeating the above step for all $p-k+1$ predictors costs $O(n(p-k+1))$. We then update the QR decomposition, by adding the chosen predictor as a new column, which costs $O(n(K-k))$ via the modified Gram-Schmidt algorithm as discussed in \citet{hammarling2008updating}. After $K$ steps, we end up with an ordered set of predictors $X_{S_K}$ and its orthogonal basis $Q_{S_K}$, and the total cost for ordering and orthogonalization is $O(npK)$. We denote the regression coefficient vector of $y$ on $Q_{S_K}$ as $z$. BOSS then performs BS on $Q_{S_K}$, which is a ranking of predictors based on their absolute values of corresponding element in $z$, and the cost is $O(K\log(K))$. We use $\tilde{\gamma}(k_Q)$ to denote the BS coefficient vector at size $k_Q$, where $k_Q=1,\cdots,K$ specifies the subset size in the orthogonal space. Finally, BOSS transforms the coefficient vectors $\tilde{\gamma}=[\tilde{\gamma}(0),\cdots,\tilde{\gamma}(K)]$ back to the original space. Therefore, the total cost for the entire solution path of BOSS is on the order of $O(npK)$. The detailed implementation for obtaining the solution path is summarized as steps 1-5 in Algorithm \ref{alg:boss}.

\begin{algorithm}
	\caption{Best Orthogonalized Subset Selection (BOSS)}\label{alg:boss}
	\begin{enumerate}[label=\arabic*.]
		\item Standardize $y$ and the columns of $X$ to have mean $0$, and denote the means as $\bar{X}$ and $\bar{y}$.

		\textbf{Order and orthogonalize the predictors:}

		\item From the $p$ predictors, select the one that has the largest marginal correlation with the response $y$,  and denote it as $X_{S_1}$. Standardize $X_{S_1}$ to have unit $l_2$ norm and denote it as $Q_{S_1}$, where $S_1$ is the variable number of this chosen predictor. Calculate $R_{S_1}$ such that $X_{S_1} = Q_{S_1} R_{S_1}$. Let $S=\{1,\cdots, p\}$. Initialize vectors $\text{resid}_j=X_j$ where $j=1,\cdots,p$.
		\item For $k=2,\cdots,K$ ($K=\min\{n,p\}$):
		\begin{enumerate}[label=\alph*.]
			\item For each of the $p-k+1$ predictors $X_j$ in $X_{S \setminus S_{k-1} }$, calculate its partial correlations with the response $y$ conditioning on $Q_{S_{k-1}}$. 
			\begin{enumerate}[label=a\arabic*.]
				\item Regress $X_j$ on $Q_{S_{k-1} \setminus S_{k-2}}$ ($S_{k-2}=\emptyset$ if $k=2$), and denote the estimated coefficient as $r$. Update $\text{resid}_j = \text{resid}_j - r Q_{S_{k-1} \setminus S_{k-2}}$.
				\item Calculate the correlation between y and $\text{resid}_j$.
			\end{enumerate}
			\item Select the predictor that has the largest partial correlation in magnitude, augment $S_{k-1}$ with this predictor number and call it $S_{k}$.
			\item Update $Q_{S_{k-1}}$ and $R_{S_{k-1}}$ given the newly added column $X_{S_k \setminus S_{k-1}}$, and call them $Q_{S_k}$ and $R_{S_k}$. The update is based on the modified Gram-Schmidt algorithm as discussed in \citet{hammarling2008updating}.
		\end{enumerate}
		
		\textbf{BS on the orthogonalized predictors $Q_{S_{K}}$:}

		\item Calculate $\tilde{\gamma}_j (k_Q)= z_j \mathbbm{1}(|z_j| \ge |z_{(k_Q)}|)$, i.e. the $j$-th component of coefficient vector for subset size $k_Q$, where $z=Q_{S_{K}}^T y$ and $z_{(k_Q)}$ is the $k$-th largest entry in absolute values. Let $\tilde{\gamma} = [\tilde{\gamma} (0) \tilde{\gamma} (1) \cdots \tilde{\gamma} (K)]$.
		
		\textbf{Transform back to the original space:}

		\item Project $\tilde{\gamma}$, a $p \times (K+1)$ matrix, to the original space of $X_{S_{K}}$, i.e. back solving $R \tilde{B} = \tilde{\gamma}$, and re-order the rows of $\tilde{B}$ to their correspondences in $X$, i.e. $\hat{B} = O \tilde{B}$ where $O$ represents the ordering matrix s.t. $X_{S_{K}}=XO$. The intercept vector is $\hat{B}_0 = \bar{y} \mathbbm{1} - \hat{B}^T \bar{X}$. 

		\textbf{Select the subset:}

		\item Select the subset using AICc-hdf (replacing edf with hdf in \eqref{eq:aicc_edf}), where hdf is calculated via Algorithm \ref{alg:hdf}, by inputting $(Q_{S_{K}},\hat\sigma,\hat\mu)$. For $n > p$, we use OLS estimates based on the full model, i.e. $\hat{\mu}=Q_{S_{K}} z$, $\hat{\sigma}^2 = \lVert y-\hat{\mu} \rVert_2^2/(n-p)$. For $n \le p$, we use $\hat{\mu}=\hat{\mu}_l$ and $\hat{\sigma} = \sqrt{\lVert y-\hat{\mu}_l \rVert_2^2 / (n-\text{df}(\hat{\mu}_l)-1)}$ as discussed in \citet{reid2016study}, where $\hat{\mu}_l$ are the lasso fitted values based on 10-fold CV and df$(\hat{\mu}_l)$ is the corresponding number of non-zero coefficients in the lasso estimate. Note that the inclusion of an intercept term implies that hdf is increased by $1$. 
	\end{enumerate}
\end{algorithm}

The ordering of predictors is an important part of implementation of the algorithm. Consider a sparse true model with only two uncorrelated predictors $X=[X_1,X_2]$, $\beta=[0,1]^T$ and a high SNR. The best model in such a scenario is LS regression on $X_2$. Without the ordering step, the orthogonal basis is $Q=[Q_1,Q_2]$ s.t. $X=QR$, i.e. the predictors are orthogonalized in their physical orders. The one-predictor model ($k_Q=1$) of BS can either be $Q_1$ or $Q_2$, which when transformed back to the space of $X$ do not correspond to LS regression upon $X_2$. The former corresponds to LS estimates upon $X_1$, while the latter is a linear combination of LS estimates upon $X$ and LS estimates upon $X_1$; the former leads to a completely wrong model while the latter results in non-zero coefficients on both predictors. In contrast, if $X_2$ is the first variable orthogonalized, the best subset will be based on that variable alone, the correct choice. Therefore, the ordering step is crucial to both sparsity as well as predictive performance. It is worth noting that BOSS is flexible in terms of the choice of ordering rules. For instance, a less aggressive rule based on the lars algorithm \citep{efron2004least} could be adopted instead. We leave discussion of the benefits and drawbacks of different ordering rules to future work. Note that we show that the coefficients of BOSS can be expressed as a linear combination of the LS coefficients on subsets of $X$ in Theorem \ref{thm:correspondence}, the proof of which can be found in the Supplemental Material Section \ref{sec:correspondence}.

\begin{theorem}[Coefficients of BOSS are a linear combination of LS coefficients on subsets of $X$] 
	Suppose $X$ has full column rank and the columns are already ordered, i.e. $X=X_{S_{K}}$. $X=QR$ where $Q$ is an $n \times K$ matrix with orthonormal columns and $R$ is a $K \times K$ upper-triangular matrix. Let $S_k=\{j_1,j_2,\cdots,j_{k_Q}\}$ denote the support (position of predictors) of the best $k_Q$-predictor model given by BS upon $(Q,y)$, and use $\hat{\gamma}(k_Q)$ ($K$ by $1$) to denote the BS coefficients. The corresponding coefficients in the $X$ space, i.e. $\hat{\beta}(k_Q)$ s.t. $R\hat{\beta}(k_Q)=\hat{\gamma}(k_Q)$, can be expressed as
	\begin{equation*}
	\hat{\beta}(k_Q) = \sum_{j\in S_k} \left(\hat{\alpha}^{(j)} - \hat{\alpha}^{(j-1)}\right),
	\end{equation*}
	where the first $j$ entries in $\hat{\alpha}^{(j)}$ ($K$ by $1$) are LS coefficients of regressing $y$ upon $[X_1,X_2,\cdots,X_j]$ (the first $j$ columns in $X$), and the remaining $K-j$ entries are zero.

	\label{thm:correspondence}
\end{theorem}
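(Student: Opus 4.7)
The plan is to reduce the statement to a routine linear-algebra identity by exploiting the upper-triangular structure of $R$ and the uniqueness of the solution to $R\hat{\beta}(k_Q)=\hat{\gamma}(k_Q)$. Since $R$ has full column rank, it suffices to verify that both sides agree after multiplication by $R$ on the left; in other words, I only need to show that $R \sum_{j\in S_k}(\hat{\alpha}^{(j)}-\hat{\alpha}^{(j-1)})$ equals $\hat{\gamma}(k_Q)$.

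The first key step is to relate $\hat{\alpha}^{(j)}$ to $R$ explicitly. Because $X=QR$ and $R$ is upper-triangular, the first $j$ columns satisfy $[X_1,\dots,X_j] = [Q_1,\dots,Q_j]\,R_{1:j,\,1:j}$. Hence the orthogonal projection of $y$ onto $\mathrm{span}\{X_1,\dots,X_j\}$ coincides with the projection onto $\mathrm{span}\{Q_1,\dots,Q_j\}$, giving $[X_1,\dots,X_j]\hat{\alpha}^{(j)}_{1:j} = \sum_{i=1}^{j} z_i Q_i$. Cancelling the common left factor $Q_{1:j}$ yields $R_{1:j,\,1:j}\hat{\alpha}^{(j)}_{1:j} = (z_1,\dots,z_j)^T$. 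Padding with zeros to length $K$ and using upper-triangularity of $R$, this is equivalent to the identity
\begin{equation*}
R\hat{\alpha}^{(j)} = \tilde{z}^{(j)}, \qquad \tilde{z}^{(j)} := (z_1,\dots,z_j,0,\dots,0)^T.
\end{equation*}

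With that identity, the proof reduces to a telescoping computation. For any $j$, the difference $\tilde{z}^{(j)}-\tilde{z}^{(j-1)} = z_j e_j$, where $e_j$ is the $j$-th standard basis vector. Therefore
\begin{equation*}
R\sum_{j\in S_k}\bigl(\hat{\alpha}^{(j)}-\hat{\alpha}^{(j-1)}\bigr) = \sum_{j\in S_k}\bigl(\tilde{z}^{(j)}-\tilde{z}^{(j-1)}\bigr) = \sum_{j\in S_k} z_j e_j.
\end{equation*}
But the BS solution on an orthonormal $Q$ at size $k_Q$ is, by the hard-thresholding formula recalled in Section \ref{sec:boss_solutionpath}, exactly $\hat{\gamma}(k_Q)=\sum_{j\in S_k} z_j e_j$. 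Invertibility of $R$ then gives the stated equality.

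The only place that requires care is the bookkeeping at the two endpoints and the ordering convention. One must check that the telescoping is valid even though the indices in $S_k$ are not necessarily consecutive (so the sum of differences is not a single $\hat{\alpha}^{(j_{\max})}-\hat{\alpha}^{(0)}$), and that the convention $\hat{\alpha}^{(0)}=0$ together with $\tilde{z}^{(0)}=0$ is consistent with both $R\hat{\alpha}^{(j)}=\tilde{z}^{(j)}$ and the base case. Once these are in place, the rest is immediate, so I do not anticipate a substantive obstacle beyond notational care.
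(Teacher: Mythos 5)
Your proof is correct and follows essentially the same route as the paper's: the paper likewise telescopes the hard-thresholded solution as $\hat{\gamma}(k_Q)=\sum_{j\in S_k}(\hat{\gamma}^{(j)}-\hat{\gamma}^{(j-1)})$ with $\hat{\gamma}^{(j)}=(z_1,\dots,z_j,0,\dots,0)^T$, and uses the fact that $[X_1,\dots,X_j]$ and $[Q_1,\dots,Q_j]$ span the same space to identify $R^{-1}\hat{\gamma}^{(j)}$ with $\hat{\alpha}^{(j)}$; you simply run the argument in the forward direction (verifying $R\hat{\alpha}^{(j)}=\tilde{z}^{(j)}$ via the triangular block structure) and spell out the details more explicitly than the paper does.
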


\subsection{Connection to FS and the advantage of BOSS}

BOSS is closely related to FS. In fact, instead of performing BS on the set of orthogonalized predictors $Q_{S_{K}}$ in the fourth step of Algorithm \ref{alg:boss}, if we fit LS on the subset of $Q_{S_{K}}$ in a nested fashion, i.e. $\tilde{\gamma}_j (k_Q)= z_j \mathbbm{1}(j \le k_Q)$ and $z=Q_{S_{K}}^T y$, steps 1-5 of the algorithm provide the solution path of FS, and is similar to the orthogonal greedy algorithm discussed in \citet{ing2011stepwise}. Since BS on orthogonal predictors $Q_{S_K}$ is essentially ranking the predictors based on their LS coefficients ($O(K\log(K))$ operations), BOSS involves little additional computational cost compared to FS.

BOSS can, however, provide a better solution path than FS. At a given step, once a predictor is selected, it remains in the subsets of every following step of FS. In many circumstances, the greedy characteristic can lead to overfit, since noise predictors (those with $\beta_j=0$) step in during early steps. However, BS on the set of orthogonalized predictors gives the chance for BOSS to ``look back'' at the predictors that are already stepped in. By revisiting these predictors and allowing them to be dropped, BOSS can provide a solution path that is sparser, with better predictive performance compared to FS. 

We consider two numerical examples Sparse-Ex3 and Sparse-Ex4, where the true models are sparse. The true coefficient vectors for Sparse-Ex3 and Sparse-Ex4 are $\beta=[1_6,0_{p-6}]^T$ and $\beta=[1,-1,5,-5,10,-10,0_{p-6}]^T$, respectively. We consider a high SNR and a high correlation between predictors ($\rho=0.9$). For Sparse-Ex3, the signal predictors (those with non-zero coefficients) are pairwise correlated with noise predictors (correlation coefficient is denoted as $\rho$), while for Sparse-Ex4, the signal predictors are pairwise correlated with opposite effects. We generate the design matrix $X$ once, and draw $1000$ replications of the response $y$ based on \eqref{eq:truemodel_def}. The details of the model setup are given in Supplemental Material Section \ref{sec:simulation_setup_generalx}.

Figure \ref{fig:lossratio_fs_boss_k} shows the average RMSE along the solution paths of BS, FS and BOSS, for the two examples. When the true model is Sparse-Ex3, all three methods provide almost the same solution path. However, for Sparse-Ex4, we see a clear advantage of BOSS over FS in early steps up until about the fifteenth step. Recall that in Sparse-Ex4, there are $p_0=6$ predictors with $\beta_j \ne 0$ that are pairwise correlated with opposite effects, where each pair together leads to a high $R^2$ but each single one of them contributes little. When the correlation between the variables is high, the effect of one almost completely cancels out the effect of the other on $y$. Therefore all of the predictors (both true and noise predictors) have approximately zero marginal correlation with $y$, and they have equal chance of stepping in. Since the subsets along the solution path of FS are nested, if a noise predictor steps in during early steps, it remains in the subsets of every following step, and hence the subset containing both variables in the pair may appear in a late stage. In contrast, BOSS takes ordered predictors provided by FS, and re-orders them by performing BS upon their orthogonal basis, which gives a greater chance for both variables in the pair to appear early in the solution path of BOSS, and potentially results in a better predictive performance than FS. Furthermore, in this example, we note that BOSS provides a better solution path than BS until step $5$ (except the fourth step), and the two methods give similar performances in further steps.

\begin{figure}[ht!]
	\centering
	\includegraphics[width=\textwidth]{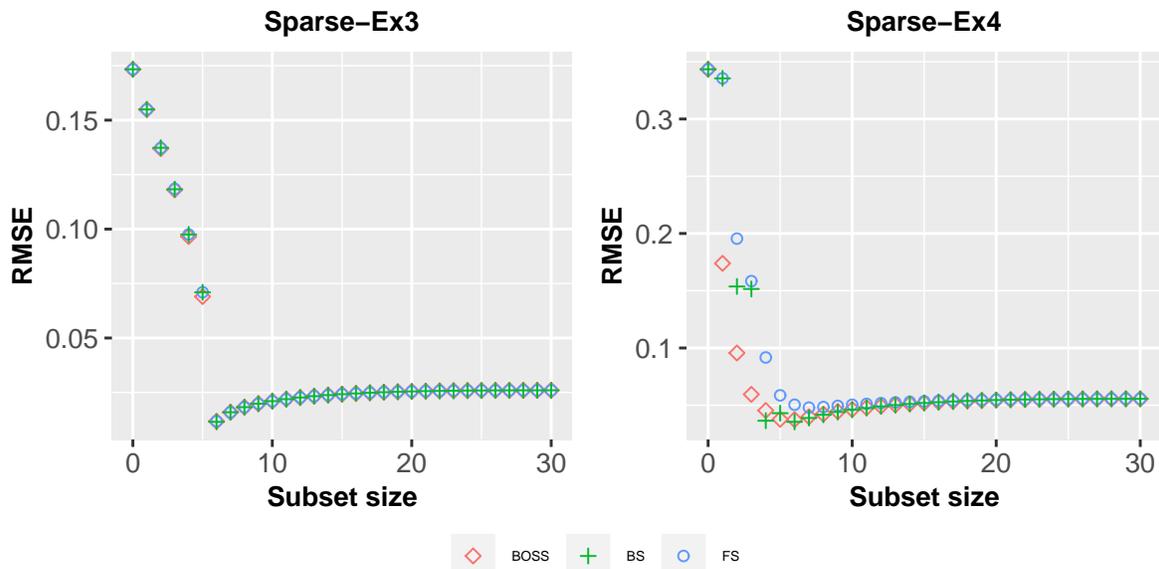}
	\caption{RMSE at each subset size, average over $1000$ replications. Note that for BOSS, the subset size $k_Q$ denotes the number of non-zero coefficients in $\tilde{\gamma}(k_Q)$. In both scenarios, we have $n=200$, $p=30$, $\rho=0.9$ and high SNR.}
	\label{fig:lossratio_fs_boss_k}
\end{figure}

\subsection{AICc-hdf as the selection rule for BOSS}

We apply AICc-hdf to choose the single optimal subset from the $K+1$ candidates. The implementation is discussed in step 6 of Algorithm \ref{alg:boss}. The hdf is calculated via Algorithm \ref{alg:hdf} based on the orthogonalized predictors $Q_{S_K}$. As to the estimation of $\mu$ and $\sigma$, if $n>p$, we use the OLS estimates based on the full model, i.e. $\hat{\mu}=Q_{S_{K}} z$, $\hat{\sigma}^2 = \lVert y-\hat{\mu} \rVert_2^2/(n-p)$. If $n \le p$, we use the estimates based on the lasso fit as discussed in \citet{reid2016study}, i.e. $\hat{\mu}=\hat{\mu}_l$ and $\hat{\sigma} = \sqrt{\lVert y-\hat{\mu}_l \rVert_2^2 / (n-\text{df}(\hat{\mu}_l)-1)}$, where $\hat{\mu}_l$ are the lasso fitted values based on 10-fold CV and df$(\hat{\mu}_l)$ is the corresponding number of non-zero coefficients in the lasso estimate. 

A numerical justification of using hdf is given in Figure \ref{fig:boss_aicc_hdf_kl}, where we compare averages of AICc-hdf and $\widehat{\text{Err}}_{\text{KL}}$ over $1000$ replications for BOSS under various true models. The sparse model (Sparse-Ex3) has $p_0=6$ predictors with non-zero coefficients, and all of the predictors in the dense model have non-zero coefficients ($p_0=p$). The correlation between predictors is $\rho=0.5$. We see that by using the sample average to represent the population mean, $E$(AICc-hdf) generally tracks the expected KL, $E(\text{Err}_{\text{KL}})$, reasonably well. Discrepancies can be observed at subset size $k<p_0$, where the set of true predictors is not entirely included in the model. The derivations of the classic AIC and AICc (both with ndf plugged in according to our notation) are based on an assumption that the true predictors are included in the model. In the situation where this assumption is violated, AICc will no longer be unbiased, and a similar conjecture can be made here for AICc in the context of BOSS. Last and most importantly, AICc-hdf yields the same average selected size as $\widehat{\text{Err}}_{\text{KL}}$ across all scenarios.

We find similar evidence in the Supplemental Material Figure \ref{fig:boss_cp_edf_hdf} that $E(\text{C}_p\text{-hdf})$ tracks the expected prediction error $E(\text{Err}_\text{SE})$ well in most cases, and they lead to the same average selected subset size; further supporting the use of hdf for BOSS. As discussed in Section \ref{sec:aicc_performance_bs}, using C$_p$ as the selection rule can perform poorly in practice because of the need to estimate $\sigma$. Evidence of a similar property when using C$_p$ as the selection rule for BOSS can be found in the Online Supplemental Material. For this reason, we prefer AICc in feasible versions of selection. 

\begin{figure}[ht!]
	\centering
	\includegraphics[width=0.9\textwidth]{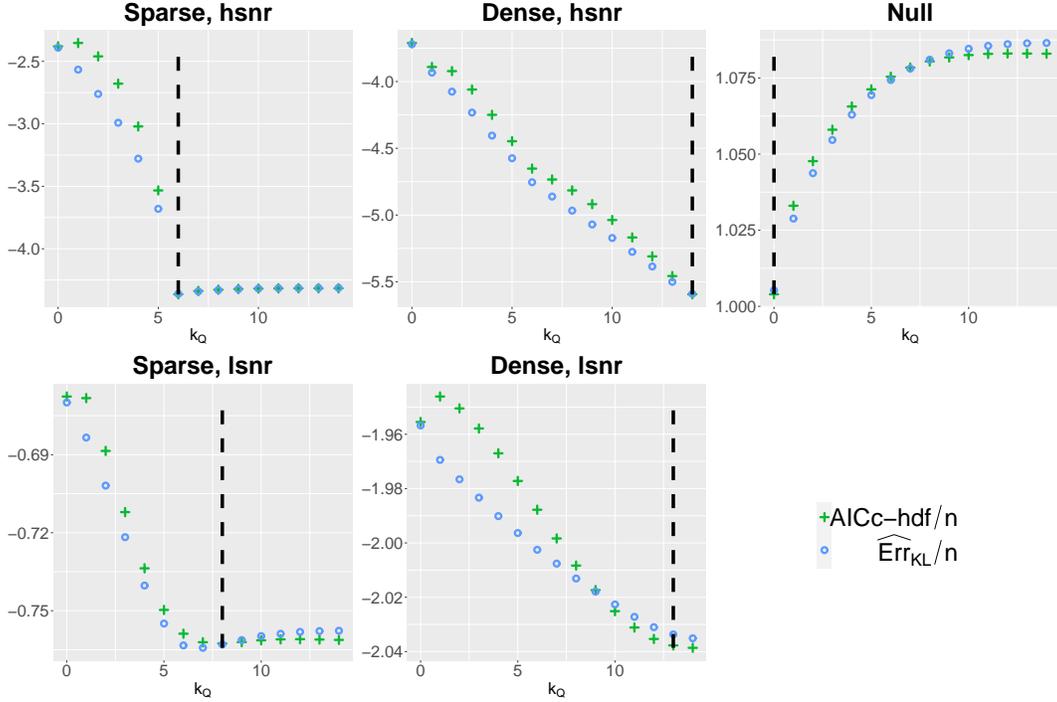}
	\caption{Averages of AICc-hdf and $\widehat{\text{Err}}_{\text{KL}}$ for BOSS over $1000$ replications. Here $X$ is general with $n=200$, $p=14$. Both criteria result in the same average of the selected subset size over the $1000$ replications (rounded to the nearest integer) as denoted by the dashed vertical lines. }
	\label{fig:boss_aicc_hdf_kl}
\end{figure}

\subsection{The performance of BOSS in simulations}
\label{sec:boss_performance}
We now study the performance of BOSS via simulations. Besides the above mentioned Sparse-Ex3, Sparse-Ex4 and Dense designs, we consider two additional sparse model designs that have different correlation structures between predictors. For all of the sparse examples, we take $p_0=6$. We further consider three levels of the correlations, $\rho \in [0,0.5,0.9]$, and twelve combinations of $(n,p)$, resulting in a total of $540$ configuration options. For each configuration, $1000$ replications are constructed and we present similar evaluation measures as introduced in Section \ref{sec:aicc_performance_bs}. One measure is the $\%$ worse than the best possible BOSS, where the best possible BOSS means that on a single fit, we choose the subset size $k_Q$ with the minimum RMSE among all $K+1$ candidates, as if an oracle tells us the best model. The details of the simulation setup are discussed in the Supplemental Material Section \ref{sec:simulation_setup_generalx}. The full set of results can be found in the Online Supplemental Material.

We looked at results using AICc-hdf, C$_p$-hdf and 10-fold CV for BOSS, and AICc-hdf performed the best (see Online Supplemental Material), so results for that method are presented here. For FS, we studied several information criteria that have been proposed in the literature, e.g. EBIC \citep{wang2009forward}, HDBIC and HDHQ \citep{ing2011stepwise}. None of these criteria provide strong enough penalties for larger subset sizes. For high dimensional problems ($n<p$), they tend to choose subsets with size close to $n$, which is far from the truth for sparse designs ($p_0=6$). To remedy this overfitting problem, \citet{ing2011stepwise} suggests using a stopping rule, and one example is to only consider subsets with size $k < 5\sqrt{n / \log(p)}$. We find that this ad-hoc stopping rule avoids the problem of not penalizing larger subsets sufficiently, and provides reasonably good performance for sparse designs. However, for dense designs, using the stopping rule gives subsets that substantially underfit. It is clear that using a fixed stopping rule is not appropriate for all problems. For this reason, we prefer 10-fold CV as the selection rule for FS, and we find it overall outperforms these suggested information criteria (see Online Supplemental Material). Finally, we fit BS via the ``leaps'' algorithm for the case where $p \le 30$, and use 10-fold CV as the selection rule. BOSS and FS are fitted using our {\tt{R}} package \pkg{BOSSReg}\footnote{\url{https://github.com/sentian/BOSSreg}. A stable version of the R package is available on \textit{CRAN}.}, and the BS is fitted using the {\tt{R}} package \pkg{leaps} \citep{FortrancodebyAlanMiller2020}.

We also consider some popular regularization methods, including lasso \citep{Tibshirani1996}, SparseNet \citep{Mazumder2011} and gamma lasso \citep{Taddy2017}. We use the {\tt{R}} packages \pkg{glmnet} \citep{Friedman2010}, \pkg{sparsenet} \citep{Mazumder2011}, and \pkg{gamlr} \citep{Taddy2017}, respectively, to fit them, which are all available on \textit{CRAN}. We also consider a simplified version of the relaxed lasso \citep{Meinshausen2007}, which was discussed in \citet{Hastie2017} and can be fitted using the {\tt{R}} package \pkg{bestsubset}\footnote{The package is available at https://github.com/ryantibs/best-subset. We appreciate Prof. Ryan Tibshirani for the suggestion of fitting the simplified relaxed lasso.}. As to the selection rule, we use AICc for lasso, and 10-fold CV for the rest. In addition to these selectors, we have also considered 10-fold CV for lasso. We find (in the Online Supplement) that 10-fold CV performs similarly to AICc for lasso. In fact, the use of AICc for lasso was explored in \citet{Flynn2013}, where the authors proved that AICc is asymptotically efficient while performing similarly to CV. We further notice (results given in the Online Supplement) that SparseNet generally performs better than the relaxed lasso and gamma lasso, and hence only the results for SparseNet will be presented here. 

Note that there is an extensive list of regression estimators existing in the literature and the list is still growing fast. For instance, recent studies by \citet{hazimeh2020fast} and \citet{bertsimas2020sparse} considered regularized best subset problems that combine a lasso or ridge penalty with the cardinality constraint (either in the Lagrangian or constrained form), and the authors provided fast solvers by using modern optimization tools. We expect the general conclusions below regarding the relative performance of BOSS and regularization methods hold for the new methods. Our simulation code \footnote{The code is available at \url{https://github.com/sentian/BOSSreg}.} is structured to be easily extendable, and we invite interested readers to perform further comparisons.

A selected set of simulation results is presented in Table \ref{tab:boss_regu} and \ref{tab:boss_regu_highdim}. Here is a brief summary of the results:

\begin{itemize}
	\item For BOSS, AICc-hdf has a significant advantage over CV in terms of predictive performance, except for low SNR and $n$ is small, in which case the selection rules are comparable. CV is also ten times heavier in terms of computation than AICc-hdf. These results are similar to the comparison of AICc-hdf and CV for BS with an orthogonal $X$ as discussed in Section \ref{sec:aicc_performance_bs}. Overall, the simulations indicate that AICc with hdf used in place of edf is a reasonable selection rule for an LS-based method that can be applied in practice without the requirement that the predictors are orthogonal. In the following discussions, when we refer to BOSS, we mean BOSS-AICc-hdf. 

	\item The performance of BOSS is comparable to the performance of BS when BS is feasible. With a small sample size $n=200$, BOSS performs either similarly to or better than BS for a high SNR, and it performs either similarly to or slightly worse than BS for a low SNR. With a large sample size $n=2000$, BOSS is generally better than BS. Furthermore, BOSS only requires fitting the procedure once while BS uses CV as the selection rule, and a single fit of BOSS only has computational cost $O(npK)$ so that BOSS is feasible for high dimensional problems.

	\item The performance of BOSS is generally better than the performance of FS. In the Dense model, and Sparse-Ex3 with $n=200$ and low SNR, we see that BOSS performs similarly to FS. In all other scenarios, the advantage of BOSS is obvious. For example, in Sparse-Ex4 with $n=200$, high SNR and $\rho=0.9$, FS is almost ten times worse than BOSS in terms of RMSE. Recall that Sparse-Ex4 is an example where FS has trouble stepping in all of the true predictors (with $\beta_j \ne 0$) in early steps. This is evidenced by the fact that FS chooses eight extra predictors on average in this situation, while BOSS only chooses approximately two extra predictors. Furthermore, FS based on CV is ten times computationally heavier than BOSS. 

	\item Compared to the regularization methods, with a small sample size $n=200$, BOSS is the best when SNR is high, lasso is the best when SNR is low and SparseNet is in between. The lasso has the property of ``over-select and shrink,'' in order to retain less bias on the large non-zero estimates. In a high SNR, this property can result in disastrous performance, especially when $p$ is large. For example, in Sparse-Ex3, high SNR, $\rho=0.5$ and $p=180$, the relative efficiency of lasso is only $0.43$ and it significantly overfits. However, this property can be beneficial when SNR is low, as a method like BS has higher chance to miss the true predictors (less sparsistency). With a large sample size $n=550$ and $n=2000$, BOSS is almost always the best even when SNR is low. 

	\item In terms of support recovery in the sparse true models, LS-based methods can recover the true predictors (those with $\beta_j \ne 0$) and rarely include any noise predictors (those with $\beta_j = 0$) when SNR is high or the sample size $n$ is large. However, SparseNet and lasso generally overfit, with the latter being worse in that regard. In the low SNR and small $n$ scenario, lasso and SparseNet have more opportunity to recover the true predictors, but it comes with a price of including more false positives. 

\end{itemize}

Given the spirit of the summary above, it is important to point out the relevant work of \citet{Hastie2017}, where the authors provide a comprehensive set of simulation comparisons on BS, lasso and relaxed lasso. The authors concluded that BS performs the best in high SNR, lasso is the best in low SNR while relaxed lasso is in between. Given the similarity we have noticed between BOSS and BS, it is not surprising that this coincides with our results for BOSS here when sample size is relatively small ($n=200$). However, we find BOSS to be the best for large sample size $n$ even when the SNR is low (note that \citet{Hastie2017} did not examine any sample sizes greater than $n=500$). Moreover, it should be noted that \citet{Hastie2017} focus on the best possible performance of each method by applying a separate validation set drawn from the true model, rather than on feasible selection, as is considered in this study.

\begin{table}[ht]
\centering
\caption{The performance of BOSS compared to other methods for $n>p$. Selection rules are 'AICc-hdf/CV' for BOSS, 
                AICc for lasso and CV for the remaining methods in the table, respectively.} 
\label{tab:boss_regu}
\scalebox{0.5}{
\begin{tabular}{|c|c|c|c|ccccc|ccccc|ccccc|}
  \toprule 
 \multicolumn{1}{|r}{} & \multicolumn{1}{r}{} & \multicolumn{1}{r}{} &       & \multicolumn{5}{c|}{Sprse-Ex3}        & \multicolumn{5}{c|}{Sparse-Ex4}       & \multicolumn{5}{c|}{Dense} \\
 \cmidrule{5-19}\multicolumn{1}{|r}{} & \multicolumn{1}{r}{} & \multicolumn{1}{r}{} &       & BOSS  & BS    & FS    & lasso & SparseNet & BOSS  & BS    & FS    & lasso & SparseNet & BOSS  & BS    & FS    & lasso & \multicolumn{1}{c|}{SparseNet}  \\
 \cmidrule{5-19}\multicolumn{1}{|c}{} & \multicolumn{1}{c}{} & \multicolumn{1}{c}{} &       & \multicolumn{15}{c|}{\% worse than the best possible BOSS}  \\
 \midrule 
 \multirow{8}[4]{*}{n=200} & \multirow{4}[2]{*}{hsnr} & \multirow{2}[1]{*}{$\rho=0.5$} & p=30 & 2/22 & 24 & 22 & 70 & 14 & 19/24 & 23 & 28 & 49 & 21 & 1/8 & 9 & 8 & 2 & 5 \\ 
   &  &  & p=180 & 1/18 & - & 21 & 135 & 17 & 4/15 & - & 16 & 82 & 13 & 14/13 & - & 16 & 47 & 8 \\ 
   &  & \multirow{2}[1]{*}{$\rho=0.9$} & p=30 & 5/41 & 17 & 41 & 66 & 12 & 21/33 & 21 & 56 & 73 & 28 & 2/9 & 10 & 8 & 2 & 8 \\ 
   &  &  & p=180 & 3/27 & - & 29 & 126 & 16 & 7/34 & - & 68 & 123 & -10 & 15/12 & - & 12 & 71 & 20 \\ 
  \cmidrule{2-19} & \multirow{4}[2]{*}{lsnr} & \multirow{2}[1]{*}{$\rho=0.5$} & p=30 & 30/25 & 25 & 25 & 0 & 7 & 35/32 & 23 & 34 & 35 & 23 & 18/17 & 16 & 18 & 10 & 11 \\ 
   &  &  & p=180 & 11/13 & - & 13 & -3 & 3 & 31/26 & - & 34 & 33 & 20 & 4/8 & - & 9 & 2 & 6 \\ 
   &  & \multirow{2}[1]{*}{$\rho=0.9$} & p=30 & 28/24 & 23 & 23 & -2 & 5 & 32/27 & 18 & 78 & 71 & 44 & 15/14 & 15 & 13 & 10 & 12 \\ 
   &  &  & p=180 & 16/16 & - & 15 & -5 & 1 & 17/18 & - & 36 & 33 & 35 & 12/10 & - & 10 & 15 & 9 \\ 
  \midrule \multirow{8}[4]{*}{n=2000} & \multirow{4}[2]{*}{hsnr} & \multirow{2}[1]{*}{$\rho=0.5$} & p=30 & 3/22 & 22 & 21 & 73 & 14 & 7/29 & 21 & 23 & 86 & 12 & 0/3 & 0 & 0 & 1 & 0 \\ 
   &  &  & p=180 & 1/22 & - & 22 & 130 & 14 & 6/28 & - & 21 & 174 & 12 & 8/9 & - & 12 & 37 & 8 \\ 
   &  & \multirow{2}[1]{*}{$\rho=0.9$} & p=30 & 2/21 & 21 & 22 & 74 & 13 & 32/33 & 16 & 33 & 108 & 12 & 0/3 & 1 & 1 & 2 & 1 \\ 
   &  &  & p=180 & 1/21 & - & 22 & 135 & 14 & 15/25 & - & 90 & 226 & 10 & 10/10 & - & 9 & 39 & 17 \\ 
  \cmidrule{2-19} & \multirow{4}[2]{*}{lsnr} & \multirow{2}[1]{*}{$\rho=0.5$} & p=30 & 5/22 & 21 & 21 & 73 & 14 & 13/30 & 22 & 23 & 61 & 15 & 2/9 & 10 & 10 & 2 & 7 \\ 
   &  &  & p=180 & 5/22 & - & 22 & 129 & 13 & 8/27 & - & 20 & 125 & 10 & 11/13 & - & 16 & 32 & 10 \\ 
   &  & \multirow{2}[1]{*}{$\rho=0.9$} & p=30 & 5/21 & 20 & 21 & 53 & 3 & 27/34 & 16 & 40 & 85 & 12 & 3/11 & 11 & 9 & 3 & 8 \\ 
   &  &  & p=180 & 4/17 & - & 17 & 92 & -5 & 14/27 & - & 104 & 179 & 20 & 12/13 & - & 11 & 40 & 18 \\ 
   \midrule 
 \multicolumn{1}{|c}{} & \multicolumn{1}{c}{} & \multicolumn{1}{c}{} &       & \multicolumn{15}{c|}{Relative efficiency} \\
 \midrule 
\multirow{8}[4]{*}{n=200} & \multirow{4}[2]{*}{hsnr} & \multirow{2}[1]{*}{$\rho=0.5$} & p=30 & 1/0.84 & 0.82 & 0.84 & 0.6 & 0.9 & 1/0.96 & 0.97 & 0.93 & 0.8 & 0.99 & 0.98/0.93 & 0.91 & 0.93 & 0.98 & 0.94 \\ 
   &  &  & p=180 & 1/0.85 & - & 0.84 & 0.43 & 0.86 & 1/0.91 & - & 0.9 & 0.57 & 0.92 & 0.95/0.96 & - & 0.93 & 0.73 & 1 \\ 
   &  & \multirow{2}[1]{*}{$\rho=0.9$} & p=30 & 1/0.74 & 0.9 & 0.74 & 0.63 & 0.93 & 1/0.91 & 1 & 0.78 & 0.7 & 0.95 & 0.98/0.91 & 0.9 & 0.92 & 0.98 & 0.93 \\ 
   &  &  & p=180 & 1/0.8 & - & 0.79 & 0.45 & 0.88 & 0.84/0.68 & - & 0.54 & 0.41 & 1 & 0.97/1 & - & 1 & 0.65 & 0.93 \\ 
  \cmidrule{2-19} & \multirow{4}[2]{*}{lsnr} & \multirow{2}[1]{*}{$\rho=0.5$} & p=30 & 0.77/0.8 & 0.8 & 0.8 & 1 & 0.93 & 0.91/0.93 & 1 & 0.92 & 0.91 & 1 & 0.93/0.94 & 0.95 & 0.93 & 1 & 0.99 \\ 
   &  &  & p=180 & 0.87/0.86 & - & 0.86 & 1 & 0.94 & 0.92/0.96 & - & 0.9 & 0.91 & 1 & 0.97/0.94 & - & 0.94 & 1 & 0.96 \\ 
   &  & \multirow{2}[1]{*}{$\rho=0.9$} & p=30 & 0.76/0.79 & 0.8 & 0.8 & 1 & 0.93 & 0.89/0.93 & 1 & 0.66 & 0.69 & 0.82 & 0.96/0.97 & 0.96 & 0.97 & 1 & 0.98 \\ 
   &  &  & p=180 & 0.82/0.82 & - & 0.83 & 1 & 0.95 & 1/0.99 & - & 0.86 & 0.88 & 0.87 & 0.97/0.99 & - & 1 & 0.95 & 1 \\ 
  \midrule \multirow{8}[4]{*}{n=2000} & \multirow{4}[2]{*}{hsnr} & \multirow{2}[1]{*}{$\rho=0.5$} & p=30 & 1/0.85 & 0.85 & 0.85 & 0.59 & 0.91 & 1/0.83 & 0.89 & 0.87 & 0.58 & 0.96 & 0.98/0.96 & 0.98 & 0.98 & 0.97 & 0.98 \\ 
   &  &  & p=180 & 1/0.83 & - & 0.83 & 0.44 & 0.89 & 1/0.83 & - & 0.88 & 0.39 & 0.95 & 1/0.99 & - & 0.96 & 0.79 & 1 \\ 
   &  & \multirow{2}[1]{*}{$\rho=0.9$} & p=30 & 1/0.84 & 0.85 & 0.84 & 0.59 & 0.9 & 0.84/0.84 & 0.96 & 0.84 & 0.54 & 1 & 1/0.97 & 0.99 & 0.99 & 0.98 & 0.99 \\ 
   &  &  & p=180 & 1/0.84 & - & 0.83 & 0.43 & 0.89 & 0.96/0.88 & - & 0.58 & 0.34 & 1 & 0.99/0.99 & - & 1 & 0.78 & 0.93 \\ 
  \cmidrule{2-19} & \multirow{4}[2]{*}{lsnr} & \multirow{2}[1]{*}{$\rho=0.5$} & p=30 & 1/0.86 & 0.86 & 0.86 & 0.61 & 0.92 & 1/0.87 & 0.93 & 0.92 & 0.7 & 0.99 & 0.98/0.91 & 0.9 & 0.91 & 0.98 & 0.93 \\ 
   &  &  & p=180 & 1/0.86 & - & 0.86 & 0.46 & 0.93 & 1/0.85 & - & 0.9 & 0.48 & 0.98 & 1/0.97 & - & 0.95 & 0.83 & 1 \\ 
   &  & \multirow{2}[1]{*}{$\rho=0.9$} & p=30 & 0.98/0.85 & 0.86 & 0.85 & 0.67 & 1 & 0.88/0.83 & 0.97 & 0.8 & 0.61 & 1 & 1/0.92 & 0.92 & 0.94 & 1 & 0.95 \\ 
   &  &  & p=180 & 0.91/0.81 & - & 0.81 & 0.49 & 1 & 1/0.9 & - & 0.56 & 0.41 & 0.95 & 0.99/0.99 & - & 1 & 0.8 & 0.94 \\ 
   \midrule 
 \multicolumn{1}{|c}{} & \multicolumn{1}{c}{} & \multicolumn{1}{c}{} &       & \multicolumn{15}{c|}{Sparsistency (number of extra variables)} \\
 \midrule 
\multirow{8}[4]{*}{n=200} & \multirow{4}[2]{*}{hsnr} & \multirow{2}[1]{*}{$\rho=0.5$} & p=30 & 6(0)/6(0.6) & 6(0.7) & 6(0.6) & 6(7.9) & 6(1.1) & 4.4(0.2)/5(1) & 5(1) & 4.8(1.1) & 5.7(10.4) & 4.8(2.1) & 29.6/26.1 & 25.1 & 26 & 29.1 & 27 \\ 
   &  &  & p=180 & 6(0)/6(0.3) & - & 6(0.4) & 6(16.6) & 6(2.4) & 4(0)/4.2(0.5) & - & 4.1(0.5) & 5.1(20.2) & 4.2(3.5) & 17/20.2 & - & 19.6 & 52.2 & 32.4 \\ 
   &  & \multirow{2}[1]{*}{$\rho=0.9$} & p=30 & 6(0.6)/6(2.1) & 6(0.8) & 6(2.1) & 6(9.2) & 6(1.6) & 5.1(2.8)/5.3(3.8) & 5(1) & 4.8(4.1) & 5.8(17.8) & 4.4(2.7) & 29.3/25.2 & 23 & 24.6 & 28.9 & 26.2 \\ 
   &  &  & p=180 & 6(0.1)/6(0.6) & - & 6(0.6) & 6(16.2) & 6(2.4) & 4.2(2.4)/4.3(4.3) & - & 4.3(8) & 4.6(44.2) & 4.1(3.1) & 15.6/21.3 & - & 17.2 & 54.4 & 37.7 \\ 
  \cmidrule{2-19} & \multirow{4}[2]{*}{lsnr} & \multirow{2}[1]{*}{$\rho=0.5$} & p=30 & 2.9(2)/3.6(2.4) & 3.4(2.1) & 3.5(2.3) & 5.1(6.9) & 4.7(5.3) & 2.3(1)/2.7(1.3) & 2.6(1) & 2.6(1.5) & 3.6(6.9) & 2.8(2.9) & 5.7/7.5 & 6.6 & 7.1 & 5.3 & 10.3 \\ 
   &  &  & p=180 & 0.3(0.1)/1(0.7) & - & 1(0.7) & 3(9.7) & 2.6(9.3) & 1(0.2)/1.6(0.9) & - & 1.3(0.8) & 2.2(10.7) & 2.1(6.5) & 0.2/1.1 & - & 0.9 & 2.7 & 6.9 \\ 
   &  & \multirow{2}[1]{*}{$\rho=0.9$} & p=30 & 1.9(2.3)/2.4(3) & 2.5(2.8) & 2.4(2.9) & 3.9(7.5) & 3.7(6.1) & 2.7(3.9)/3.2(5) & 2.7(0.9) & 2.2(4.4) & 3(10.1) & 2.9(8) & 4.1/5.2 & 4.3 & 4.5 & 8.4 & 5.9 \\ 
   &  &  & p=180 & 0.5(0.2)/1.1(1.1) & - & 1.1(1.1) & 3.2(11.1) & 3(10.8) & 0.7(1.7)/1.1(5.5) & - & 0.2(0.6) & 0.3(4.8) & 0.6(9) & 1/2.1 & - & 1.6 & 3.9 & 3.3 \\ 
  \midrule \multirow{8}[4]{*}{n=2000} & \multirow{4}[2]{*}{hsnr} & \multirow{2}[1]{*}{$\rho=0.5$} & p=30 & 6(0.1)/6(0.6) & 6(0.6) & 6(0.6) & 6(8.4) & 6(1) & 6(0.2)/6(0.6) & 6(0.6) & 6(0.6) & 6(10.9) & 6(0.8) & 30/30 & 30 & 30 & 30 & 30 \\ 
   &  &  & p=180 & 6(0)/6(0.4) & - & 6(0.4) & 6(21.5) & 6(2.3) & 6(0.1)/6(0.3) & - & 6(0.3) & 6(32.2) & 6(1.3) & 34.5/35.1 & - & 32.6 & 106.5 & 43 \\ 
   &  & \multirow{2}[1]{*}{$\rho=0.9$} & p=30 & 6(0)/6(0.6) & 6(0.6) & 6(0.6) & 6(9.2) & 6(1) & 6(0.4)/6(0.7) & 6(0.6) & 6(1.3) & 6(17.8) & 6(1.6) & 30/29.9 & 29.9 & 29.9 & 30 & 30 \\ 
   &  &  & p=180 & 6(0)/6(0.4) & - & 6(0.4) & 6(23.2) & 6(2.2) & 6(1.4)/6(1.7) & - & 5.9(3.8) & 6(72.7) & 6(8.7) & 35/38.6 & - & 30.2 & 109.6 & 52.4 \\ 
  \cmidrule{2-19} & \multirow{4}[2]{*}{lsnr} & \multirow{2}[1]{*}{$\rho=0.5$} & p=30 & 6(0.1)/6(0.6) & 6(0.6) & 6(0.6) & 6(8.3) & 6(0.7) & 4.2(0.4)/4.3(0.7) & 4.3(0.6) & 4.2(0.7) & 5.2(9.6) & 4.3(1) & 29/22.7 & 21.2 & 22.1 & 28 & 24.1 \\ 
   &  &  & p=180 & 6(0.1)/6(0.4) & - & 6(0.4) & 6(21.2) & 6(0.9) & 4(0.1)/4(0.4) & - & 4(0.4) & 4.6(26.1) & 4.1(1.3) & 16/17 & - & 14.3 & 61.8 & 25.3 \\ 
   &  & \multirow{2}[1]{*}{$\rho=0.9$} & p=30 & 5.8(0.3)/5.8(1.1) & 5.8(1.1) & 5.8(1.1) & 6(9.2) & 6(0.8) & 4.4(1.9)/4.4(1.7) & 4.3(0.6) & 4.3(2.2) & 5.4(16.6) & 4.2(2.5) & 28.8/21.2 & 18.5 & 20.2 & 27.6 & 23.5 \\ 
   &  &  & p=180 & 5.7(0.3)/5.7(0.7) & - & 5.7(0.7) & 6(23) & 6(1) & 4.1(3.6)/4.1(4.4) & - & 3.7(3.7) & 4.6(60.3) & 4.2(14.2) & 16.6/21.4 & - & 11.8 & 65.3 & 32.3 \\ 
   \bottomrule 
\end{tabular}
}
\end{table}

\begin{table}[ht]
\centering
\caption{The performance of BOSS compared to other methods for $n \le p$. Selection rules are for 'AICc-hdf/CV' for BOSS, 
                AICc for lasso and CV for the remaining methods in the table, respectively.} 
\label{tab:boss_regu_highdim}
\scalebox{0.52}{
\begin{tabular}{|c|c|c|c|ccccc|ccccc|ccccc|}
  \toprule 
 \multicolumn{1}{|r}{} & \multicolumn{1}{r}{} & \multicolumn{1}{r}{} &       & \multicolumn{5}{c|}{Sprse-Ex3}        & \multicolumn{5}{c|}{Sparse-Ex4}       & \multicolumn{5}{c|}{Dense} \\
 \cmidrule{5-19}\multicolumn{1}{|r}{} & \multicolumn{1}{r}{} & \multicolumn{1}{r}{} &       & BOSS  & BS & FS    & lasso    & SparseNet & BOSS  & BS & FS    & lasso    & SparseNet & BOSS  & BS & FS    & lasso  & \multicolumn{1}{c|}{SparseNet}  \\
 \cmidrule{5-19}\multicolumn{1}{|c}{} & \multicolumn{1}{c}{} & \multicolumn{1}{c}{} &       & \multicolumn{15}{c|}{\% worse than the best possible BOSS}  \\
 \midrule 
 \multirow{8}[4]{*}{n=200} & \multirow{4}[2]{*}{hsnr} & \multirow{2}[1]{*}{$\rho=0.5$} & p=550 & 2/19 & - & 20 & 168 & 25 & 4/10 & - & 11 & 94 & 12 & 7/13 & - & 13 & 70 & 3 \\ 
   &  &  & p=1000 & 9/18 & - & 18 & 176 & 25 & 8/10 & - & 11 & 103 & 13 & 7/15 & - & 17 & 113 & 3 \\ 
   &  & \multirow{2}[1]{*}{$\rho=0.9$} & p=550 & 6/48 & - & 48 & 150 & 38 & 9/33 & - & 67 & 186 & -9 & 14/14 & - & 17 & 131 & 17 \\ 
   &  &  & p=1000 & 14/37 & - & 38 & 167 & 22 & 16/99 & - & 122 & 198 & -21 & 11/14 & - & 18 & 158 & 125 \\ 
  \cmidrule{2-19} & \multirow{4}[2]{*}{lsnr} & \multirow{2}[1]{*}{$\rho=0.5$} & p=550 & 10/9 & - & 9 & -4 & 3 & 23/21 & - & 28 & 27 & 20 & 5/6 & - & 6 & 1 & 4 \\ 
   &  &  & p=1000 & 15/10 & - & 10 & -2 & 4 & 19/15 & - & 19 & 15 & 15 & 6/6 & - & 6 & 0 & 4 \\ 
   &  & \multirow{2}[1]{*}{$\rho=0.9$} & p=550 & 10/10 & - & 10 & -4 & 2 & 6/8 & - & 11 & 9 & 11 & 18/13 & - & 12 & 13 & 11 \\ 
   &  &  & p=1000 & 15/8 & - & 8 & -4 & 2 & 5/5 & - & 7 & 3 & 6 & 15/12 & - & 12 & 12 & 9 \\ 
  \midrule \multirow{8}[4]{*}{n=500} & \multirow{4}[2]{*}{hsnr} & \multirow{2}[1]{*}{$\rho=0.5$} & p=550 & 0/20 & - & 22 & 157 & 19 & 31/26 & - & 37 & 97 & 30 & 9/10 & - & 13 & 57 & 8 \\ 
   &  &  & p=1000 & 1/21 & - & 22 & 172 & 21 & 29/25 & - & 36 & 114 & 29 & 8/11 & - & 14 & 85 & 9 \\ 
   &  & \multirow{2}[1]{*}{$\rho=0.9$} & p=550 & 1/22 & - & 23 & 158 & 19 & 12/18 & - & 30 & 162 & 29 & 13/10 & - & 14 & 79 & 23 \\ 
   &  &  & p=1000 & 2/22 & - & 23 & 175 & 21 & 8/13 & - & 21 & 172 & 19 & 12/11 & - & 14 & 107 & 33 \\ 
  \cmidrule{2-19} & \multirow{4}[2]{*}{lsnr} & \multirow{2}[1]{*}{$\rho=0.5$} & p=550 & 34/28 & - & 28 & 23 & 21 & 24/31 & - & 33 & 71 & 28 & 16/12 & - & 14 & 17 & 11 \\ 
   &  &  & p=1000 & 25/26 & - & 26 & 21 & 18 & 17/25 & - & 26 & 67 & 20 & 14/12 & - & 15 & 17 & 11 \\ 
   &  & \multirow{2}[1]{*}{$\rho=0.9$} & p=550 & 30/25 & - & 26 & 18 & 17 & 12/14 & - & 73 & 82 & 34 & 6/7 & - & 8 & 14 & 10 \\ 
   &  &  & p=1000 & 25/26 & - & 25 & 17 & 15 & 6/10 & - & 41 & 44 & 28 & 7/7 & - & 8 & 17 & 12 \\ 
   \midrule 
 \multicolumn{1}{|c}{} & \multicolumn{1}{c}{} & \multicolumn{1}{c}{} &       & \multicolumn{15}{c|}{Relative efficiency} \\
 \midrule 
\multirow{8}[4]{*}{n=200} & \multirow{4}[2]{*}{hsnr} & \multirow{2}[1]{*}{$\rho=0.5$} & p=550 & 1/0.85 & - & 0.84 & 0.38 & 0.81 & 1/0.95 & - & 0.93 & 0.54 & 0.92 & 0.96/0.91 & - & 0.91 & 0.6 & 1 \\ 
   &  &  & p=1000 & 1/0.92 & - & 0.92 & 0.39 & 0.87 & 1/0.99 & - & 0.98 & 0.53 & 0.96 & 0.96/0.9 & - & 0.88 & 0.48 & 1 \\ 
   &  & \multirow{2}[1]{*}{$\rho=0.9$} & p=550 & 1/0.72 & - & 0.72 & 0.43 & 0.77 & 0.84/0.68 & - & 0.55 & 0.32 & 1 & 1/1 & - & 0.98 & 0.49 & 0.97 \\ 
   &  &  & p=1000 & 1/0.83 & - & 0.82 & 0.43 & 0.93 & 0.68/0.4 & - & 0.36 & 0.27 & 1 & 1/0.97 & - & 0.94 & 0.43 & 0.49 \\ 
  \cmidrule{2-19} & \multirow{4}[2]{*}{lsnr} & \multirow{2}[1]{*}{$\rho=0.5$} & p=550 & 0.88/0.88 & - & 0.88 & 1 & 0.94 & 0.97/0.99 & - & 0.94 & 0.94 & 1 & 0.96/0.95 & - & 0.95 & 1 & 0.97 \\ 
   &  &  & p=1000 & 0.85/0.89 & - & 0.89 & 1 & 0.94 & 0.96/1 & - & 0.96 & 1 & 1 & 0.95/0.95 & - & 0.95 & 1 & 0.97 \\ 
   &  & \multirow{2}[1]{*}{$\rho=0.9$} & p=550 & 0.87/0.87 & - & 0.87 & 1 & 0.94 & 1/0.98 & - & 0.95 & 0.97 & 0.95 & 0.95/0.99 & - & 0.99 & 0.98 & 1 \\ 
   &  &  & p=1000 & 0.84/0.89 & - & 0.89 & 1 & 0.94 & 0.98/0.98 & - & 0.97 & 1 & 0.98 & 0.94/0.98 & - & 0.98 & 0.98 & 1 \\ 
  \midrule \multirow{8}[4]{*}{n=500} & \multirow{4}[2]{*}{hsnr} & \multirow{2}[1]{*}{$\rho=0.5$} & p=550 & 1/0.84 & - & 0.83 & 0.39 & 0.85 & 0.96/1 & - & 0.92 & 0.64 & 0.97 & 0.99/0.98 & - & 0.95 & 0.69 & 1 \\ 
   &  &  & p=1000 & 1/0.84 & - & 0.83 & 0.37 & 0.84 & 0.97/1 & - & 0.91 & 0.58 & 0.97 & 1/0.97 & - & 0.94 & 0.58 & 0.99 \\ 
   &  & \multirow{2}[1]{*}{$\rho=0.9$} & p=550 & 1/0.82 & - & 0.82 & 0.39 & 0.84 & 1/0.95 & - & 0.87 & 0.43 & 0.87 & 0.98/1 & - & 0.97 & 0.62 & 0.89 \\ 
   &  &  & p=1000 & 1/0.83 & - & 0.82 & 0.37 & 0.84 & 1/0.95 & - & 0.89 & 0.4 & 0.91 & 0.99/1 & - & 0.97 & 0.54 & 0.84 \\ 
  \cmidrule{2-19} & \multirow{4}[2]{*}{lsnr} & \multirow{2}[1]{*}{$\rho=0.5$} & p=550 & 0.9/0.94 & - & 0.95 & 0.98 & 1 & 1/0.95 & - & 0.93 & 0.73 & 0.97 & 0.96/0.99 & - & 0.97 & 0.95 & 1 \\ 
   &  &  & p=1000 & 0.94/0.93 & - & 0.93 & 0.97 & 1 & 1/0.94 & - & 0.93 & 0.7 & 0.97 & 0.98/0.99 & - & 0.97 & 0.95 & 1 \\ 
   &  & \multirow{2}[1]{*}{$\rho=0.9$} & p=550 & 0.9/0.93 & - & 0.93 & 0.99 & 1 & 1/0.98 & - & 0.65 & 0.61 & 0.84 & 1/0.99 & - & 0.98 & 0.93 & 0.97 \\ 
   &  &  & p=1000 & 0.92/0.92 & - & 0.92 & 0.98 & 1 & 1/0.97 & - & 0.76 & 0.74 & 0.83 & 1/1 & - & 1 & 0.91 & 0.96 \\ 
   \midrule 
 \multicolumn{1}{|c}{} & \multicolumn{1}{c}{} & \multicolumn{1}{c}{} &       & \multicolumn{15}{c|}{Sparsistency (number of extra variables)} \\
 \midrule 
\multirow{8}[4]{*}{n=200} & \multirow{4}[2]{*}{hsnr} & \multirow{2}[1]{*}{$\rho=0.5$} & p=550 & 6(0)/6(0.3) & - & 6(0.3) & 6(19.7) & 6(4.1) & 4(0.1)/4.1(0.3) & - & 4.1(0.3) & 4.7(23.2) & 4.3(5.3) & 16.5/15.9 & - & 15.7 & 52.8 & 29.4 \\ 
   &  &  & p=1000 & 6(0.3)/6(0.2) & - & 6(0.2) & 6(20.4) & 6(4.6) & 4.1(0.5)/4.1(0.3) & - & 4.1(0.3) & 4.4(24.3) & 4.2(5.4) & 17.9/15 & - & 14.6 & 45.9 & 34.5 \\ 
   &  & \multirow{2}[1]{*}{$\rho=0.9$} & p=550 & 6(0.6)/6(1.4) & - & 6(1.3) & 6(20.5) & 6(9.3) & 4(2)/4(2.9) & - & 4(7.1) & 4.1(61.2) & 4.1(4.7) & 13.3/15.6 & - & 13.1 & 32.8 & 45.1 \\ 
   &  &  & p=1000 & 6(0.6)/6(0.8) & - & 6(0.8) & 6(20.8) & 6(5.3) & 3.8(6.3)/3.8(5.7) & - & 3.5(10.5) & 4(67) & 4(6.6) & 14.6/17.2 & - & 16.2 & 7.6 & 22.3 \\ 
  \cmidrule{2-19} & \multirow{4}[2]{*}{lsnr} & \multirow{2}[1]{*}{$\rho=0.5$} & p=550 & 0.4(0.3)/0.7(0.4) & - & 0.7(0.4) & 2.6(11.6) & 2.4(13.8) & 0.9(0.4)/1.1(0.7) & - & 0.8(0.6) & 1.5(11.6) & 1.6(10.1) & 0.2/0.6 & - & 0.6 & 8.3 & 8.3 \\ 
   &  &  & p=1000 & 0.6(1.1)/0.5(0.4) & - & 0.6(0.4) & 2.2(10.9) & 2.1(14.2) & 0.8(1.2)/0.6(0.5) & - & 0.4(0.4) & 1(9.3) & 1.2(10.7) & 0.5/0.5 & - & 0.5 & 8 & 8.4 \\ 
   &  & \multirow{2}[1]{*}{$\rho=0.9$} & p=550 & 0.4(0.5)/0.5(0.7) & - & 0.5(0.7) & 2.1(12.3) & 2.1(14.5) & 0.3(0.4)/0.3(0.6) & - & 0.1(0.3) & 0.2(6.7) & 0.2(7.7) & 0.8/1.5 & - & 1.4 & 7.8 & 4.2 \\ 
   &  &  & p=1000 & 0.5(1.5)/0.4(0.4) & - & 0.4(0.4) & 2(11.3) & 1.9(15.3) & 0.1(0.5)/0.1(0.4) & - & 0(0.3) & 0.1(6.3) & 0.1(6.6) & 0.8/1.2 & - & 1.2 & 9 & 5.1 \\ 
  \midrule \multirow{8}[4]{*}{n=500} & \multirow{4}[2]{*}{hsnr} & \multirow{2}[1]{*}{$\rho=0.5$} & p=550 & 6(0)/6(0.3) & - & 6(0.3) & 6(24.1) & 6(3.4) & 4.6(0.1)/5.1(0.8) & - & 4.8(0.6) & 5.7(31.8) & 5.1(6.5) & 21.8/23.1 & - & 21.3 & 109.5 & 34.3 \\ 
   &  &  & p=1000 & 6(0)/6(0.3) & - & 6(0.3) & 6(26.6) & 6(4.1) & 4.7(0.2)/5.1(0.8) & - & 4.7(0.5) & 5.4(37.2) & 5(7.1) & 22.2/21.8 & - & 20.4 & 99.8 & 35.7 \\ 
   &  & \multirow{2}[1]{*}{$\rho=0.9$} & p=550 & 6(0)/6(0.3) & - & 6(0.3) & 6(24.5) & 6(3.4) & 4.4(1.6)/4.5(4) & - & 4.1(0.4) & 4.7(66.3) & 4.1(4.2) & 20.8/27 & - & 19.8 & 131.9 & 55.4 \\ 
   &  &  & p=1000 & 6(0)/6(0.3) & - & 6(0.3) & 6(27.6) & 6(4) & 4.3(0.9)/4.3(1.9) & - & 4(0.4) & 4.4(81.5) & 4.1(4.7) & 19.6/22.3 & - & 17.7 & 147.5 & 61.2 \\ 
  \cmidrule{2-19} & \multirow{4}[2]{*}{lsnr} & \multirow{2}[1]{*}{$\rho=0.5$} & p=550 & 3.5(0.3)/4.1(1) & - & 4.1(0.9) & 5.8(23.4) & 5.2(12.1) & 2.7(0.3)/2.9(0.6) & - & 2.7(0.6) & 3.7(24.2) & 2.9(4.6) & 2.1/4.2 & - & 2.9 & 22.4 & 13 \\ 
   &  &  & p=1000 & 3.6(0.5)/3.8(0.9) & - & 3.8(0.9) & 5.6(25.1) & 4.9(13.6) & 2.5(0.4)/2.5(0.6) & - & 2.3(0.5) & 3.1(26.3) & 2.6(4.8) & 2.2/3.7 & - & 2.5 & 19.2 & 13.2 \\ 
   &  & \multirow{2}[1]{*}{$\rho=0.9$} & p=550 & 3(0.9)/3.5(1.7) & - & 3.5(1.7) & 5.3(23.7) & 4.6(13.5) & 2.3(10.3)/2.5(13.9) & - & 0.8(1.2) & 1.3(21.2) & 2.2(34.3) & 1.5/3.9 & - & 2 & 10.4 & 4 \\ 
   &  &  & p=1000 & 3(1.1)/3.1(1.4) & - & 3.1(1.4) & 5.1(25.9) & 4.4(15) & 1.4(8)/1.5(10.4) & - & 0.3(0.7) & 0.5(12.5) & 1.3(37.3) & 1.6/2.9 & - & 2 & 10.9 & 3.3 \\ 
   \bottomrule 
\end{tabular}
}
\end{table}

\subsection{The performance of BOSS in real data analysis}
\label{sec:real_data}

We implement BOSS on five real datasets. We consider four datasets from the StatLib library\footnote{http://lib.stat.cmu.edu/datasets/}, which is maintained at Carnegie Mellon University. The ``Housing'' data are often used in comparisons of different regression methods. The aim is to predict the housing values in the suburbs of Boston based on $13$ predictors, including crime rate, property tax rate, pupil-teacher ratio, etc. The ``Hitters'' data contain the 1987 annual salary for MLB players. For each player, it records $19$ different performance metrics happening in 1986, such as number of times at bat, number of hits, etc., and the task is to predict the salary based on these statistics. The ``Auto'' data are driven by prediction of the miles per gallon of vehicles based on features like the horsepower, weight, etc. The ``College'' data contain various statistics for a large number of US colleges from the 1995 issue of ``US News and World Report'', and we use these statistics to predict the number of applications received. We also consider a dataset from the Machine Learning Repository\footnote{https://archive.ics.uci.edu/ml} that is maintained by UC Irvine. The ``ForestFire'' data are provided by \citet{cortez2007data} and the aim is to use meteorological and other data to predict the burned area of forest fires that happened in the northeast region of Portugal. The authors considered several machine learning algorithms, e.g. support vector regression, and concluded that the best prediction in terms of RMSE is the naive mean vector.


In real data analysis, one almost always would consider an intercept term. The way that BOSS handles the intercept term is described in steps 5-6 of Algorithm \ref{alg:boss}. Specifically, we first center both $X$ and $y$, and fit BOSS using AICc-hdf without an intercept to get $\hat{\beta}$. Then we calculate the intercept by $
\hat{\beta}_0=\bar{y} - \bar{X}^T \hat{\beta}$, which can be easily shown to be equivalent to fitting an intercept in every subset considered by BOSS. 

We compare the performance of BOSS with LS-based methods BS and FS, and with regularization methods lasso and SparseNet. All of the methods are fitted with an intercept term. Note that for the Forest Fires dataset, we fit BS via MIO \citep{Bertsimas2016} using the {\tt{R}} package \pkg{bestsubset} \citep{Hastie2017}, where we restrict subset size $k=0,\dots,10$, with $3$ minutes as the time budget to find an optimal solution for each $k$, as suggested by the authors. For all of the other datasets, BS is fitted using the \pkg{leaps} package. To measure the performance of each method, we apply the leave-one-out (LOO) testing procedure, in which we fit the method on all observations except one, test the performance on that particular observation, and repeat the procedure for all $n$ observations. 

Table \ref{tab:realdata} presents the average RMSE, the average number of predictors and average running time for various methods given by LOO testing. We see that BOSS provides the best predictive performance in all datasets except the ``Housing'' and ``College'' data where lasso is the best for those datasets and its RMSE is $0.3\%$ and $0.04\%$ lower than those of BOSS, respectively. Due to a highly optimized implementation of the cyclical coordinate descent, the ``glmnet'' algorithm is extremely fast in providing the lasso solution. BS is still not scalable to large dimensions, even by using the modern tools. With the dimension $p=55$, it takes around $350$ seconds to perform 10-fold CV for subset sizes restricted to be no greater than $10$. However, we observe that BOSS is reasonably computationally efficient and much faster than BS, FS and SparseNet. 

\begin{table}[ht]
\centering
\caption{Performance of subset selection methods on real datasets. 
               The results are averages of leave-one-out (LOO) testing. 
               The selection rules are AICc-hdf for BOSS, AICc for lasso and 10-fold CV for the rest, respectively. 
               The intercept term is always fitted and is not counted in the number of predictors. 
               Minimal values for the metrics for each dataset are given in bold face.} 
\label{tab:realdata}
\scalebox{0.9}{
\begin{tabular}{|c|c|ccc|c|c|}
  \toprule 
 Dataset (n, p) & Metrics & BOSS  & BS    & FS    & lasso & SparseNet  \\
 \midrule\multirow{3}[2]{*}{Housing (506, 13)} & RMSE & 3.372 & 3.37 & 3.383 & \textbf{3.363} & 3.369 \\ 
   & \# predictors & 12.004 & \textbf{12.002} & 12.026 & 12.012 & \textbf{12.002} \\ 
   & running time (s) & 0.021 & 0.066 & 0.156 & 0.007 & 0.39 \\ 
  \midrule\multirow{3}[2]{*}{Hitters (263, 19)} & RMSE & \textbf{233.853} & 236.989 & 238.222 & 234.064 & 238.375 \\ 
   & \# predictors & 11.152 & 10.852 & \textbf{10.662} & 14.205 & 12.51 \\ 
   & running time (s) & 0.014 & 0.095 & 0.104 & 0.008 & 0.493 \\ 
  \midrule \multirow{3}[2]{*}{Auto (392, 6)} & RMSE & \textbf{2.628} & \textbf{2.628} & \textbf{2.628} & 2.643 & 2.63 \\ 
   & \# predictors & \textbf{3} & 3.003 & \textbf{3} & 5.008 & 3.008 \\ 
   & running time (s) & 0.008 & 0.051 & 0.067 & 0.007 & 0.25 \\ 
  \midrule\multirow{3}[2]{*}{College (777, 17)} & RMSE & 1565.476 & 1568.234 & 1569.625 & \textbf{1564.807} & 1573.975 \\ 
   & \# predictors & 17.991 & 16.333 & 16.067 & 16.008 & \textbf{15.385} \\ 
   & running time (s) & 0.058 & 0.092 & 0.451 & 0.01 & 0.734 \\ 
  \midrule\multirow{3}[2]{*}{Forest Fires (517, 55)} & RMSE & \textbf{18.603} & 18.707 & 18.757 & 18.726 & 19.163 \\ 
   & \# predictors & \textbf{0} & 0.983 & 0.986 & 2.985 & 6.899 \\ 
   & running time (s) & 0.084 & 356.651 & 0.593 & 0.014 & 0.785 \\ 
   \bottomrule 
\end{tabular}
}
\end{table}

\section{Justification of hdf and AICc}
\label{sec:justification_aicc_hdf}

In this section, we provide justifications for hdf and AICc as a heuristic degrees of freedom and a selection rule for BS, respectively, when the predictors $X$ are orthogonal.

\subsection{The heuristic degrees of freedom}
\label{sec:justification_hdf}

We provide a theoretical justification for hdf in a restricted scenario where $X$ is orthogonal and the true model is null. We further give numerical justification for a general true model, and demonstrate the use of hdf in model selection for BS. 

\subsubsection{Theoretical justification of hdf under a null true model}
Assume $\mu=0$, with $X$ still being orthogonal. In such a restricted scenario, df$_C(k)$ (edf of BS for subset size $k$) has an analytical expression, which allows us to provide some theoretical justification for hdf$(k)$. We start by introducing notation, and present the main result in Theorem \ref{thm:hdf_ydf_representation} and its Corollary. The detailed proofs are given in the Supplemental Material Section \ref{sec:proof_hdf_ydf}.

Denote $\tilde{X}_{(i)}$ as the $i$-th largest order statistic in an i.i.d sample of size $p$ from a $\chi^2_1$ distribution. \citet{Ye1998} showed that
\begin{equation*}
\text{df}_C(k) = E\left( \sum_{i=1}^{k} \tilde{X}_{(i)} \right).
\end{equation*}
Let $\tilde{H}(s) = -\tilde{Q}(1-s)$ where $\tilde{Q}$ is the quantile function of a $\chi_1^2$ distribution, and $s\in (0,1)$. For $0\le s \le t \le 1$, the truncated variance function is defined as
\begin{equation*}
\tilde{\sigma}^2(s,t) = \int_{s}^{t} \int_{s}^{t} (u \wedge v -uv) d \tilde{H}(u) d \tilde{H}(v),
\end{equation*}
where $u \wedge v =\min(u,v)$. Denote $\tilde{Y}_p = \tilde{\sigma}_p^{-1}(\sum_{i=1}^k \tilde{X}_{(i)} - \tilde{\mu}_p)$, where
\begin{equation*}
\tilde{\sigma}_p = \sqrt{p} \cdot \tilde{\sigma}(1/p,k/p),
\end{equation*}
and
\begin{equation*}
\tilde{\mu}_p = -p \int_{1/p}^{k/p} \tilde{H}(u) du - \tilde{H}\left(\frac{1}{p}\right).
\end{equation*}

\begin{restatable}{theorem}{dfasy}
	\label{thm:hdf_ydf_representation}
	Assume $X$ is orthogonal and the true model is null ($\mu=0$). As $p\rightarrow  \infty$,  $k\rightarrow  \infty$ with $k=\left \lfloor{px}\right \rfloor$, we have
	\begin{equation}
	\label{eq:hdf_ydf_yp_representation}
	\frac{1}{2p} \text{hdf}(k) = \frac{1}{2p}\text{df}_C(k) - \frac{\tilde{\sigma}_p}{2p}E(\tilde{Y}_p) + O\left(\frac{\log(p)}{p} \right),
	\end{equation}
	where $x \in (0,1)$ is a constant and $\left \lfloor{\cdot}\right \rfloor$ denotes the greatest integer function.
\end{restatable}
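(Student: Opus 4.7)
The plan is to convert the theorem into an elementary direct calculation. Because $X$ is orthogonal (forcing $K=p$) and $\mu=0$, the terms $(X^T\mu)_i$ vanish in \eqref{eq:thdf_size_expression} and \eqref{eq:thdf_expression}. Moreover, by the very definition of $\tilde Y_p$ one has the exact identity $\mathrm{df}_C(k)=\tilde\mu_p+\tilde\sigma_p E(\tilde Y_p)$, so the entire statement collapses to a single bound, namely $\mathrm{hdf}(k)-\tilde\mu_p=O(\log p)$. I would therefore produce closed forms for both $\mathrm{hdf}(k)$ and $\tilde\mu_p$ in a common parametrization (the normal quantile) and subtract.

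Taking $\sigma=1$ without loss of generality, the size equation $E(k_L(\lambda))=2p[1-\Phi(\sqrt{2\lambda})]=k$ gives $\sqrt{2\lambda_k^\star}=a$ with $a:=\Phi^{-1}(1-k/(2p))$, so Algorithm~\ref{alg:hdf} returns
\[
\mathrm{hdf}(k)=k+2pa\phi(a).
\]
For $\tilde\mu_p$, since $\tilde X\sim\chi^2_1$ is distributed as $Z^2$ with $Z\sim N(0,1)$, one has $\tilde Q(1-u)=[\Phi^{-1}(1-u/2)]^2$. Substituting $w=\Phi^{-1}(1-u/2)$ (so $du=-2\phi(w)\,dw$) into the integral defining $\tilde\mu_p$, with new limits $a$ and $b:=\Phi^{-1}(1-1/(2p))$, and using the antiderivative $\int w^2\phi(w)\,dw=-w\phi(w)+\Phi(w)$, I obtain
\[
\tilde\mu_p=2p\!\int_a^b w^2\phi(w)\,dw+b^2=2pa\phi(a)-2pb\phi(b)+(k-1)+b^2.
\]
The $a$-terms cancel upon subtraction, leaving the clean identity
\[
\mathrm{hdf}(k)-\tilde\mu_p=2pb\phi(b)-b^2+1.
\]

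To bound this, I would invoke the Mills'-ratio expansion $1-\Phi(b)=\phi(b)(b^{-1}-b^{-3}+O(b^{-5}))$ combined with the defining equation $1-\Phi(b)=1/(2p)$. Together these yield $2pb\phi(b)=b^2+1+O(b^{-2})$, so $\mathrm{hdf}(k)-\tilde\mu_p=2+O(b^{-2})$, which is $O(1)\subset O(\log p)$ since $b^2\sim 2\log p$. Dividing by $2p$ and applying $\mathrm{df}_C(k)=\tilde\mu_p+\tilde\sigma_p E(\tilde Y_p)$ then delivers \eqref{eq:hdf_ydf_yp_representation}.

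The main potential pitfall is the careful bookkeeping of the two small-$u$ boundary contributions, namely $-\tilde H(1/p)$ inside $\tilde\mu_p$ and the lower endpoint $1/p$ of the defining integral: these must combine to produce precisely the $+b^2-2pb\phi(b)$ piece that the Mills'-ratio expansion then cancels to leading order. A secondary subtlety is the $\sigma=1$ reduction, which is legitimate because under $\mu=0$ with orthogonal $X$ both $\mathrm{hdf}(k)$ and $\mathrm{df}_C(k)$ are invariant under the rescaling $y\mapsto y/\sigma$; otherwise an extra $\sigma^2$ factor would need to be tracked through every intermediate equation.
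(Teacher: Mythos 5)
Your proposal is correct and follows essentially the same route as the paper's proof: both reduce the claim, via the definitional identity $\text{df}_C(k)=\tilde{\mu}_p+\tilde{\sigma}_p E(\tilde{Y}_p)$ and the closed form $\text{hdf}(k)=k+2pa\phi(a)$, to showing that $\text{hdf}(k)-\tilde{\mu}_p$ is negligible after dividing by $2p$, and both evaluate $\tilde{\mu}_p$ through the same change of variables to the normal quantile (the paper packages this as the identity $-\int_0^x \tilde{H}(s)\,ds = 2\tilde{G}(x/2)$ with $\tilde{G}(x)=\int_0^x[\Phi^{-1}(u)]^2du$). The only difference is cosmetic: the paper bounds the two boundary terms $\tilde{G}(1/(2p))$ and $[\Phi^{-1}(1/(2p))]^2/(2p)$ separately as $O(\log(p)/p)$ using the small-argument expansion of $\Phi^{-1}$, whereas you exploit their near-cancellation and a Mills'-ratio expansion to get the sharper remainder $O(1/p)$, which of course still implies the stated $O(\log(p)/p)$.
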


\begin{restatable}{corollary}{dfasycorollary}
	\label{thm:hdf_ydf_corollary}
	If $\limsup |E(\tilde{Y}_p)| < \infty$ , we further have
	\begin{equation}
	\label{eq:main_corollary}
	\frac{\text{df}_C(k)}{\text{hdf}(k)} \rightarrow 1.
	\end{equation}
\end{restatable}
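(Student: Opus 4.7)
The plan is straightforward: rearrange the asymptotic representation from Theorem~\ref{thm:hdf_ydf_representation} and verify that both correction terms vanish upon division by $\text{hdf}(k)$. Multiplying equation~(\ref{eq:hdf_ydf_yp_representation}) through by $2p$ and then dividing by $\text{hdf}(k)$ yields
\begin{equation*}
1 \;=\; \frac{\text{df}_C(k)}{\text{hdf}(k)} \;-\; \frac{\tilde{\sigma}_p \, E(\tilde{Y}_p)}{\text{hdf}(k)} \;+\; \frac{O(\log p)}{\text{hdf}(k)},
\end{equation*}
so it suffices to show that the last two terms are $o(1)$. I will do this by locating the growth rates of $\text{hdf}(k)$ and $\tilde{\sigma}_p$.

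For the denominator, I first argue that $\text{df}_C(k) = \Theta(p)$. Writing $\text{df}_C(k) = E\bigl(\sum_{i=1}^{k} \tilde{X}_{(i)}\bigr)$ and using $k = \lfloor px \rfloor$ with $x\in(0,1)$ fixed, standard L-statistic convergence gives $p^{-1}\text{df}_C(k) \to \int_{1-x}^{1} \tilde{Q}(u)\,du \in (0,\infty)$, where finiteness of the integral follows because the $\chi_1^2$ upper tail $\tilde{Q}(1-\varepsilon)\sim 2\log(1/\varepsilon)$ is only logarithmically singular at $u=1$. Combined with (\ref{eq:hdf_ydf_yp_representation}) and the bounds on the correction terms derived below, this forces $\text{hdf}(k) = \Theta(p)$ as well.

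Next I control $\tilde{\sigma}_p = \sqrt{p}\,\tilde{\sigma}(1/p,k/p)$. From the same tail expansion one has $d\tilde{H}(u) \sim 2\,du/u$ near $u=0$, while the kernel $u \wedge v - uv$ is of order $u \wedge v$ near the corner; a direct double-integral estimate then shows $\tilde{\sigma}^2(1/p,k/p)$ is bounded as $p\to\infty$, so $\tilde{\sigma}_p = O(\sqrt{p})$. Combining with $|E(\tilde{Y}_p)| = O(1)$, which holds by the corollary's hypothesis, yields $|\tilde{\sigma}_p E(\tilde{Y}_p)|/\text{hdf}(k) = O(p^{-1/2})$ and $O(\log p)/\text{hdf}(k) = O(\log p / p)$; both terms vanish, giving (\ref{eq:main_corollary}). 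The only genuinely technical step is the bound $\tilde{\sigma}(1/p,k/p) = O(1)$, i.e.\ handling the diverging measure $d\tilde{H}$ at the lower boundary, but because its singularity is only of order $1/u$ and is damped by the kernel's vanishing like $u\wedge v$, this reduces to a routine estimation, after which the remainder of the argument is pure bookkeeping on Theorem~\ref{thm:hdf_ydf_representation}.
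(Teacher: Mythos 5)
Your proposal is correct and follows essentially the same route as the paper's proof: rearrange the representation from Theorem~\ref{thm:hdf_ydf_representation}, bound $\tilde{\sigma}_p = O(\sqrt{p})$ via boundedness of the truncated variance (exactly the content of Lemma~\ref{lemma:sigmasq}), invoke the hypothesis on $E(\tilde{Y}_p)$, and divide by a denominator of order $p$. The only difference is cosmetic, namely how the $\Theta(p)$ growth of the denominator is obtained: the paper reads it off from the exact identity $\tfrac{1}{2p}\text{hdf}(k) = \tilde{G}\bigl(k/(2p)\bigr)$ with $\tilde{G}$ nondecreasing and positive at $x/2$, whereas you first establish $\text{df}_C(k) = \Theta(p)$ from an L-statistic limit and transfer this to $\text{hdf}(k)$ through the representation; both are valid.
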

\noindent\textbf{Remark:} If $\tilde{Y}_p$ is uniformly integrable, then $E(\tilde{Y}_p) \rightarrow 0$, and hence the result of Corollary \ref{thm:hdf_ydf_corollary} holds.

It can be seen that Corollary \ref{thm:hdf_ydf_corollary} holds given the assumptions, since both hdf$(k)$ and df$_C(k)$ diverge while $E(\tilde{Y}_p)$ and the remainder term remain bounded. The Corollary suggests that for large $k$ and large $p$, the ratio of df$_C(k)$ to hdf$(k)$ will be close to $1$. We next explore empirically the relative behavior of the two dfs for a fixed $p$ with an increasing $k$. 

\subsubsection{Numerical justification of hdf}
Figure \ref{fig:dfc_dflambda} shows the comparison of hdf and edf via simulations. We fit BS on $1000$ realizations of the response generated after fixing $X$. The edf is calculated based on definition \eqref{eq:edf} using the sample covariances, while hdf is given by Algorithm \ref{alg:hdf}. We see that in the null case, using hdf to approximate edf becomes more accurate as $k$ approaches $p$, providing a finite-sample justification of Corollary \ref{thm:hdf_ydf_corollary}. 

In addition to the null model, we consider a sparse model (Orth-Sparse-Ex1) with $p_0=6$ true predictors (those with non-zero coefficients), and a dense model (Orth-Dense) where all predictors have non-zero coefficients. Similarly to the null case, we see that hdf approaches edf as $k$ gets close to $p$, i.e. the statement of Corollary \ref{thm:hdf_ydf_corollary} holds in these scenarios as well. Furthermore, we see that hdf generally approximates edf well, where the difference is more pronounced when BS underfits, e.g. a sparse model with high SNR and $k<p_0=6$ or a dense model with high SNR with $k<p=14$. Clearly, underfitting causes the problem, particularly when what is left out is important, such as in a high SNR case.

\begin{figure}[!ht]
	\centering
	\includegraphics[width=0.9\textwidth]{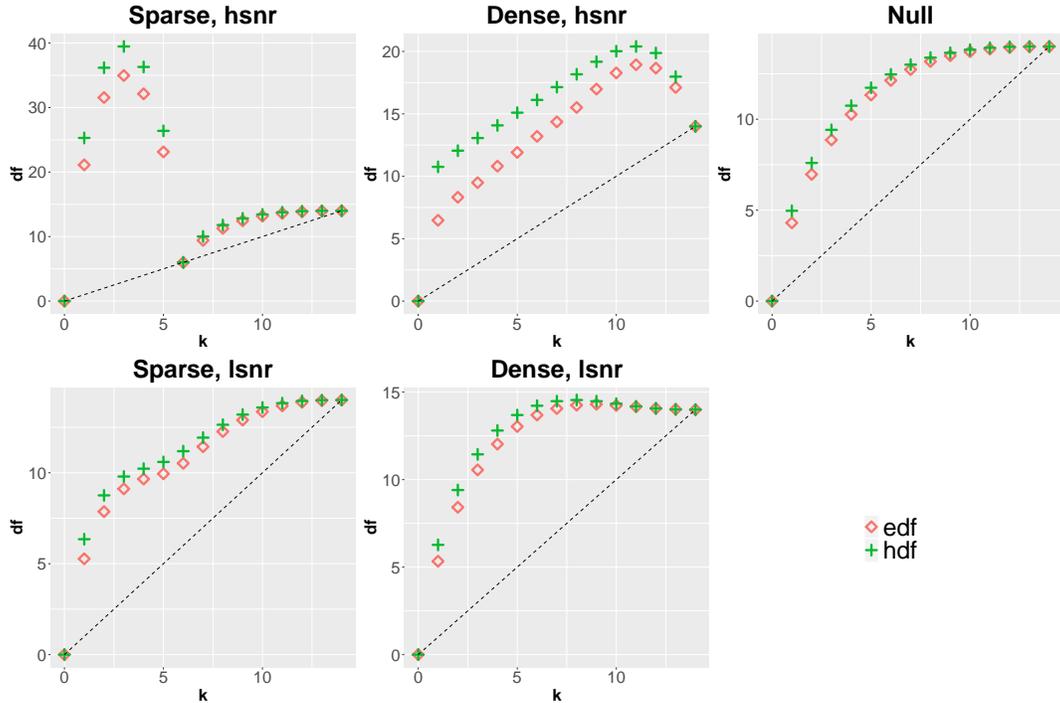}
	\caption{hdf$(k)$ and df$_C(k)$ (edf of constrained BS). The black dashed line is the 45-degree line. Here $X$ is orthogonal with $n=200$ and $p=14$. Three types of the true model and two SNR are considered. We assume knowledge of $\mu$ and $\sigma$.}
	\label{fig:dfc_dflambda}
\end{figure}

\subsubsection{\texorpdfstring{C$_p$}{Lg}-hdf as a feasible implementation of \texorpdfstring{C$_p$}{Lg}-edf }
\label{sec:cp_edf_hdf}
We have shown that hdf generally approximates edf well, and agrees with edf in some situations as $k$ approaches $p$. By replacing edf with hdf in \eqref{eq:cp_edf}, we have a feasible selection rule C$_p$-hdf. Figure \ref{fig:cp_edf_hdf} compares the averages of C$_p$-edf and C$_p$-hdf over $1000$ replications. Similarly to the comparison of the degrees of freedom values, we see C$_p$-hdf agrees with C$_p$-edf on average as $k$ approaches $p$. Even at the places where we see differences between the degrees of freedom values, e.g. a sparse true model with high SNR and $k<p_0=6$, the differences are compensated by the model fit and we see C$_p$-hdf is very close to C$_p$-edf. As we discussed in Section \ref{sec:optimism}, for any general fitting procedure including BS, C$_p$-edf provides an unbiased estimator of the expected prediction error where the error measure $\Theta$ is the squared error (SE), i.e. $E(\text{C}_p\text{-edf}) = E(\text{Err}_{\text{SE}})$. Therefore, by using the sample average to represent the population mean, Figure \ref{fig:cp_edf_hdf} indicates that $E(\text{C}_p\text{-hdf})$ approximates $E(\text{Err}_{\text{SE}})$ well, and moreover C$_p$-hdf gives the same average selected size as C$_p$-edf in all cases, when they are applied as selection rules, supporting the use of hdf in model selection for BS under orthogonal predictors.

\begin{figure}[!ht]
	\centering
	\includegraphics[width=0.9\textwidth]{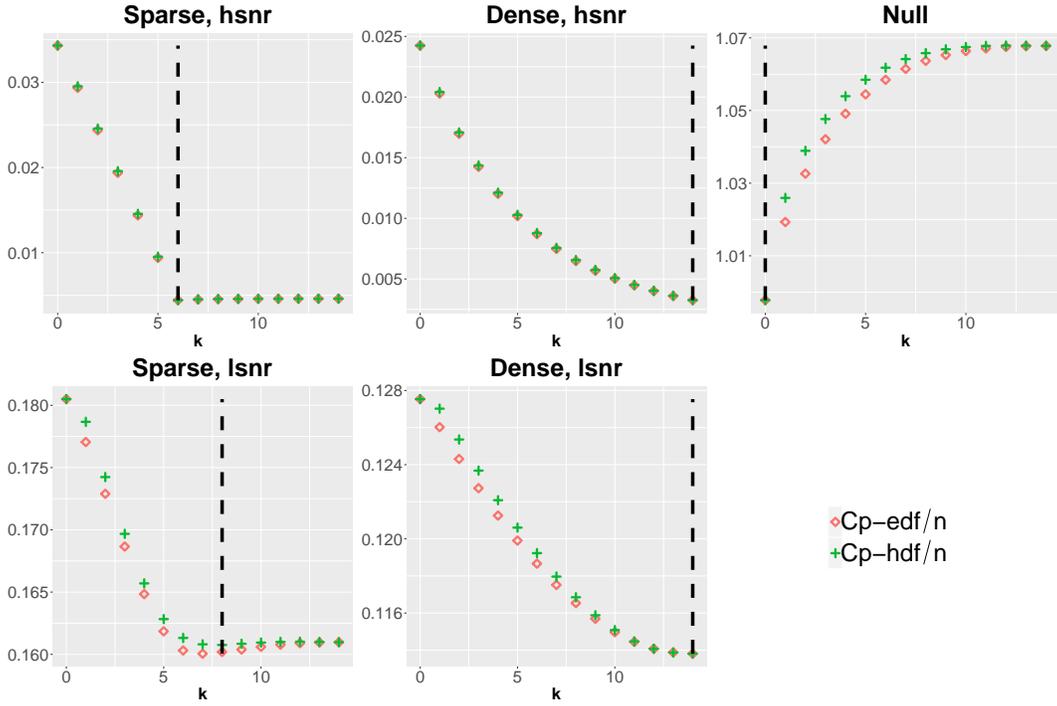}
	\caption{Averages of C$_p$-edf and C$_p$-hdf over $1000$ replications. Both criteria lead to the same average of the selected subset size over the $1000$ replications (rounded to the nearest integer), as denoted by the black dashed vertical lines. Other details are the same as in Figure \ref{fig:dfc_dflambda}.}
	\label{fig:cp_edf_hdf} 
\end{figure}

\subsection{A KL-based information criterion for BS}
When the error measure $\Theta$ is the deviance \eqref{eq:deviance_def}, the prediction error $\text{Err}_\text{KL}$ is the KL discrepancy. AICc-edf \eqref{eq:aicc_edf} is motivated by trying to construct an unbiased estimator of $E(\text{Err}_\text{KL})$. The expected KL-based optimism for BS is given as
\begin{equation}
E(\text{op}_\text{KL}) = E\left(n \frac{n\sigma^2+\lVert \mu-X\hat{\beta}(k) \rVert_2^2}{\lVert y-X\hat{\beta}(k)\rVert_2^2}\right) + n.
\label{eq:eop_expression}
\end{equation}
Note that \eqref{eq:eop_expression} holds for a general $X$. Augmenting $E(\text{op}_\text{KL})$ with the training error $\text{err}_{\text{KL}}$ we have $\widehat{\text{Err}}_{\text{KL}}$ according to \eqref{eq:err_eop}, where (ignoring the constant $n\log(2\pi)$ for convenience)
\begin{equation}
\text{err}_{\text{KL}} = n\log\left(\frac{\text{RSS}}{n}\right) - n,
\label{eq:err_expression}
\end{equation}
since the pre-specified model $f$ in \eqref{eq:deviance_def} follows a Gaussian distribution as assumed in \eqref{eq:truemodel_def}. The derivations of \eqref{eq:eop_expression} and \eqref{eq:err_expression} are presented in the Supplemental Material Section \ref{sec:expectedkl_bs}. 

Figure \ref{fig:eop_approx_withrss} shows the averages of AICc-edf, AICc-hdf and $\widehat{\text{Err}}_{\text{KL}}$ over $1000$ replications. By using the sample average to represent the population mean, we first see that $E(\text{AICc-edf})$ generally tracks the expected KL, $E(\text{Err}_{\text{KL}})$ reasonably well. In fact, they agree with each other in the null case and a sparse true model with high SNR. Noticeable discrepancies can be observed in a sparse true model with high SNR and $k<p_0=6$. This is the place where the set of true predictors is not entirely included in the model. The derivations of the classic AIC and AICc (both with ndf plugged in according to our notation) are based on an assumption that the true predictors are included in the model. In the situation where this assumption is violated, AICc will no longer be unbiased, and a similar conjecture can be made here for AICc-edf in the context of BS. Second, similarly to the comparison of $E(\text{C}_p\text{-edf})$ and $E(\text{C}_p\text{-hdf})$ in Section \ref{sec:cp_edf_hdf}, we see that $E(\text{AICc-hdf})$ approximates $E(\text{AICc-edf})$ well and they agree with each other as $k$ approaches $p$. 
Last and most importantly, both AICc-edf and AICc-hdf yield the same average selected size as $\widehat{\text{Err}}_{\text{KL}}$ across all scenarios, supporting the use of AICc-hdf as a selection rule for BS under orthogonal predictors.

\begin{figure}[!ht]
	\centering
	\includegraphics[width=0.9\textwidth]{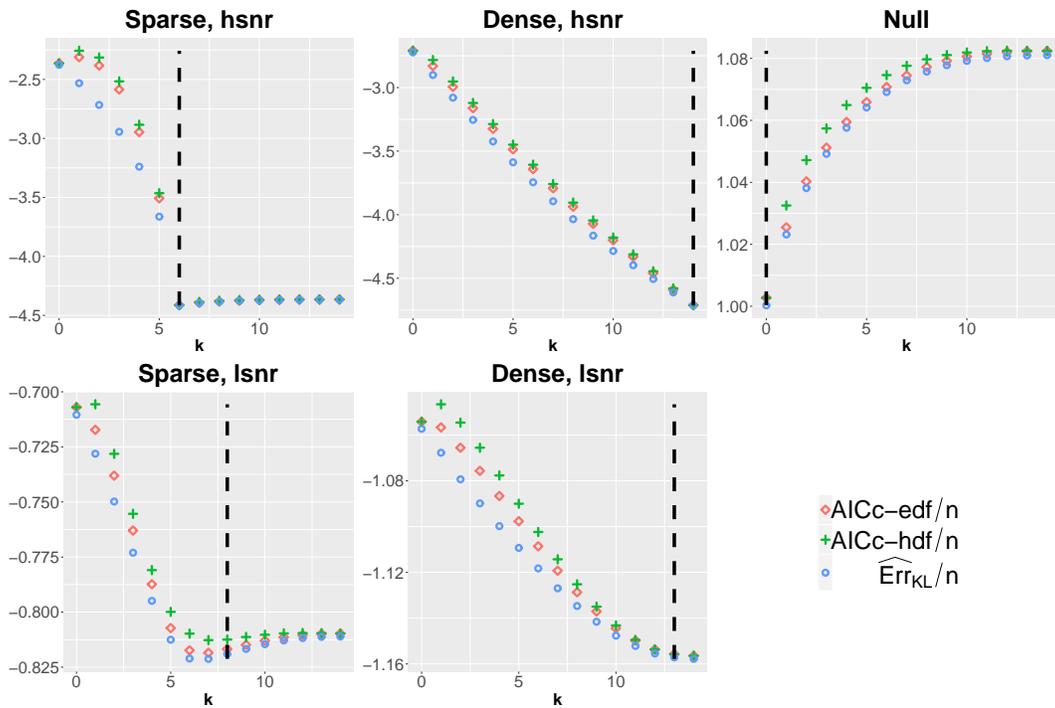}
	\caption{Averages of AICc-edf, AICc-hdf and $\widehat{\text{Err}}_\text{KL}$ over $1000$ replications. All three criteria lead to the same average of the selected subset size over the $1000$ replications (rounded to the nearest integer), as denoted by the black dashed vertical lines. Other details are the same as in Figure \ref{fig:dfc_dflambda}.}
	\label{fig:eop_approx_withrss} 
\end{figure}

\subsection{A discussion on the use of information criteria in LBS}
Since for orthogonal $X$ the edf of LBS has an analytical expression and LBS can recover the solution path of BS, one may ask why not just use LBS with a selection rule such as C$_p$-edf, which is well-defined for any general fitting procedure. 

We consider a fixed sequence of $\lambda$ and compute the LBS solutions for $1000$ realizations. The decreasing sequence of $\lambda$ starts at the smallest value $\lambda_{\text{max}}$ for which the estimated coefficient vector $\hat{\beta}$ equals zero for all of the $1000$ realizations. We then construct a sequence of $200$ values of $\lambda$ equally spaced in log scale from $\lambda_{\text{max}}$ to $\lambda_{\text{min}}$, where $\lambda_{\text{min}} = \alpha \lambda_{\text{max}}$ and $\alpha=0.001$. This procedure of generating the sequence of $\lambda$ has been discussed by \citet{Friedman2010} in the context of lasso.

We find that (see Supplemental Material Table \ref{tab:lbs_bs}) LBS is outperformed by BS in almost all scenarios based on $1000$ simulation replications. We use C$_p$-edf as the selection rule for both methods, where edf of BS (df$_C(k)$) is estimated via simulations and edf of LBS (df$_L(\lambda)$) is calculated using formulas \eqref{eq:thdf_expression} and \eqref{eq:thdf_size_expression}. We see that 1) LBS deteriorates as $p$ gets larger when SNR is low or sample size $n$ is large; and 2) increasing the sample size $n$ does not help LBS. We further compare the number of predictors given by BS and LBS for each of the $1000$ replications (see Supplemental Material Figure \ref{fig:numvar_bs_lbs}), where we consider the Orth-Sparse-Ex1 model with $n=200$ and high SNR. Under this design, LBS never selects fewer predictors than BS, and it chooses more predictors in $31.1\%$ and $37.8\%$ of all replications for $p=30$ and $p=180$, respectively. A possible explanation for this might be that df$_L(\lambda)$ characterizes the model complexity at $\lambda$ on average, but does not correctly describe the model complexity on a given realization. Given a single realization, there are an infinite number of $\lambda$ values that correspond to each distinct solution, and they lead to different values of df$_L(\lambda)$ and further result in different model complexities and different C$_p$ values. This variability in the C$_p$ values for the same solution potentially causes the selected subsets of LBS to have more variabilities than those selected subsets of BS.

\section{Conclusion and future work}
\label{sec:conclusion}
In this paper, we propose an LS-based subset selection method BOSS. In order to select the single optimal subset, we introduce a heuristic degrees of freedom (hdf) for BOSS and a KL-based information criterion AICc-hdf. BOSS together with AICc-hdf has computational cost on the same order as a single LS fit. We find that AICc-hdf is the best selection rule compared to other information criteria and CV. We further show in simulations and real data examples that BOSS using AICc-hdf is competitive in both speed and predictive performance. Finally, we provide justifications of hdf and AICc in a restricted scenario where $X$ is orthogonal.

The strong performance of BOSS using AICc compared to using CV suggests that the pursuit of methods to approximate edf (which normally does not have an analytical expression for complex modeling methods and algorithms), particularly for methods that are more sensitive to small perturbations in the data, is worthy of further research. Also, we focus on a fixed $X$ in this paper, however, in many applications where the data are observational and the experiment is performed in an uncontrolled manner, it is more appropriate to treat $X$ as random. It is interesting to study the performance of BOSS using an information criterion designed for random predictors, e.g. RAICc \citep{tian2020selection}.

\clearpage
\bibliographystyle{chicago}
\bibliography{reference.bib}

\beginsupplement
\appendix
\pagenumbering{arabic}
\begin{center}
\textbf{\large Supplementary Materials \\
On the Use of Information Criteria for Subset Selection \\ in Least Squares Regression}

Sen Tian, Clifford M. Hurvich, Jeffrey S. Simonoff
\end{center}

\section{Technical details}
\subsection{Proof of theorem \ref{thm:hdf_ydf_representation} and its corollary}
\label{sec:proof_hdf_ydf}
In this section, we assume an orthogonal $X$ and a null true model. This is the only scenario under which both df$_C(k)$ and hdf$(k)$ have analytical expressions. We will prove that the ratio of df$_C(k)$ and hdf$(k)$ goes to $1$ as $k,p\rightarrow \infty$ while $k=\left \lfloor{xp}\right \rfloor $, where $\left \lfloor{\cdot}\right \rfloor$ denotes the greatest integer function and $x\in(0,1)$. We start by laying out a few lemmas to be used in the proof of the main theorem.
\begin{lemma}
	\label{lemma:hdf_nulltrue}
	Assume the design matrix is orthogonal and the true model is null ($\mu=0$). Then
	\begin{equation}
	\text{hdf}(k) = df_L(\lambda_k^\star) = k - 2p\cdot \Phi^{-1} \left(\frac{k}{2p}\right) \cdot \phi\left[\Phi^{-1}\left(\frac{k}{2p}\right) \right].
	\label{eq:hdf_nulltrue}
	\end{equation}
\end{lemma}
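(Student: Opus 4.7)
The proof is essentially a direct computation substituting $\mu=0$ into the formulas \eqref{eq:thdf_expression} and \eqref{eq:thdf_size_expression} that were derived by \citet{Tibshirani2015}, together with the defining equation of hdf from Algorithm \ref{alg:hdf}. The plan is to proceed in three short steps.

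First, I would specialize the expected-size formula to the null case. With $\mu=0$ we have $(X^T\mu)_i=0$ for all $i$, and since $X$ is orthogonal with $n\ge p$ we have $K=p$. Using the symmetry of the standard normal CDF, $1-\Phi(z)+\Phi(-z)=2(1-\Phi(z))$, equation \eqref{eq:thdf_size_expression} collapses to
\begin{equation*}
E(k_L(\lambda)) \;=\; 2p\bigl(1-\Phi(\sqrt{2\lambda}/\sigma)\bigr).
\end{equation*}

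Second, I would solve for $\lambda_k^\star$. Setting the right-hand side equal to $k$ and inverting $\Phi$ gives
\begin{equation*}
\sqrt{2\lambda_k^\star}/\sigma \;=\; \Phi^{-1}\!\bigl(1-\tfrac{k}{2p}\bigr) \;=\; -\Phi^{-1}\!\bigl(\tfrac{k}{2p}\bigr),
\end{equation*}
where the last equality uses the symmetry $\Phi^{-1}(1-u)=-\Phi^{-1}(u)$.

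Third, I would substitute into \eqref{eq:thdf_expression}. Again by symmetry ($\phi(z)=\phi(-z)$) the sum in that formula reduces to $2p\,\phi(\sqrt{2\lambda_k^\star}/\sigma)$, so
\begin{equation*}
\text{df}_L(\lambda_k^\star)
\;=\; E(k_L(\lambda_k^\star)) + \tfrac{\sqrt{2\lambda_k^\star}}{\sigma}\cdot 2p\,\phi\!\bigl(\tfrac{\sqrt{2\lambda_k^\star}}{\sigma}\bigr)
\;=\; k + \bigl(-\Phi^{-1}(\tfrac{k}{2p})\bigr)\cdot 2p\,\phi\!\bigl(\Phi^{-1}(\tfrac{k}{2p})\bigr),
\end{equation*}
which is exactly \eqref{eq:hdf_nulltrue}. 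The nuisance scale $\sigma$ cancels throughout, as expected, and there is no real obstacle here beyond being careful with the symmetry manipulations and recording that $K=p$ under the stated assumptions.
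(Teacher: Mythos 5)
Your proposal is correct and follows essentially the same route as the paper's proof: substitute $\mu=0$ into \eqref{eq:thdf_size_expression} to obtain $\sqrt{2\lambda_k^\star}/\sigma = -\Phi^{-1}\left(\tfrac{k}{2p}\right)$, then plug this into \eqref{eq:thdf_expression} and use the symmetry of $\phi$ and $\Phi$. You simply spell out the intermediate symmetry manipulations (and the observation that $K=p$) that the paper leaves implicit.
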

\begin{proof}
	We follow the steps described in algorithm \ref{alg:hdf}. We first find $\lambda_k^\star$ from \eqref{eq:thdf_size_expression}, by using the fact that $\mu=0$, and we get $\displaystyle -\frac{\sqrt{2\lambda_k^\star}}{\sigma} = \displaystyle \Phi^{-1}\left(\frac{k}{2p}\right)$, which we then substituted into \eqref{eq:thdf_expression} to get \eqref{eq:hdf_nulltrue}.
\end{proof}

\begin{lemma}
	\label{lemma:G(x)}
	Define $\tilde{G}(x)=  x-\Phi^{-1}(x)\cdot \phi \left[\Phi^{-1}(x)\right]$, where $x\in (0,1)$ is a continuous variable. We have
	\begin{equation*}
	\lim_{x\to 0} \tilde{G}(x) = 0,
	\end{equation*}
	and 
	\begin{equation*}
	\tilde{G}^\prime(x) = \left[\Phi^{-1}(x)\right]^2.
	\end{equation*}
	Therefore by the fundamental theorem of calculus,
	\begin{equation*}
	\tilde{G}(x) = \int_{0}^{x} \left[\Phi^{-1}(u)\right]^2 du.
	\end{equation*}
\end{lemma}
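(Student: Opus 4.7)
The plan is to prove the three claims in turn, reducing everything to a single change of variables $y = \Phi^{-1}(x)$.

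First I would handle the limit. Setting $y = \Phi^{-1}(x)$, as $x \to 0^+$ we have $y \to -\infty$, and $\tilde{G}(x) = \Phi(y) - y\phi(y)$. The first term tends to $0$ by definition of $\Phi$, and for the second term I would use the explicit form $y\phi(y) = \frac{y}{\sqrt{2\pi}} e^{-y^2/2}$ and observe that exponential decay beats the linear factor, so $y\phi(y) \to 0$. (An equivalent one-line argument is L'Hôpital applied to $y / e^{y^2/2}$.) This yields $\lim_{x\to 0^+}\tilde{G}(x) = 0$.

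Next I would compute the derivative. Writing $y = \Phi^{-1}(x)$, the inverse function theorem gives $dy/dx = 1/\phi(y)$. Then
\begin{equation*}
\tilde{G}'(x) = 1 - \frac{d}{dx}\bigl[y\,\phi(y)\bigr] = 1 - \bigl[\phi(y) + y\,\phi'(y)\bigr]\cdot \frac{1}{\phi(y)}.
\end{equation*}
Using the identity $\phi'(y) = -y\,\phi(y)$, the bracket becomes $\phi(y)(1 - y^2)$, and after dividing by $\phi(y)$ we obtain
\begin{equation*}
\tilde{G}'(x) = 1 - (1 - y^2) = y^2 = \bigl[\Phi^{-1}(x)\bigr]^2.
\end{equation*}

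Finally, I would invoke the fundamental theorem of calculus. Since $\Phi^{-1}$ is continuous on $(0,1)$, the derivative $\tilde{G}'$ is continuous there, and $\tilde{G}$ itself extends continuously to $0$ by the first step. Therefore, for every $x \in (0,1)$,
\begin{equation*}
\tilde{G}(x) = \tilde{G}(0^+) + \int_0^x \tilde{G}'(u)\,du = \int_0^x \bigl[\Phi^{-1}(u)\bigr]^2\,du,
\end{equation*}
which is the desired representation. The only subtle point is the integrability of $[\Phi^{-1}(u)]^2$ near $u=0$, but this is automatic: the continuity of $\tilde{G}$ at $0$ together with $\tilde{G}' \ge 0$ forces the improper integral to converge to the finite value $\tilde{G}(x)$. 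There is no real obstacle here; the calculation is routine once the substitution $y=\Phi^{-1}(x)$ and the identity $\phi'(y)=-y\phi(y)$ are in hand.
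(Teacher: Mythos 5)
Your proof is correct and follows essentially the same route as the paper's: the substitution $y=\Phi^{-1}(x)$ with the identity $\phi'(y)=-y\,\phi(y)$ to get the limit and the derivative, followed by the fundamental theorem of calculus. The only (minor, harmless) difference is that you justify $y\phi(y)\to 0$ by direct exponential decay rather than via $\lim_{v\to\pm\infty}\phi'(v)=0$, and you add an explicit remark on integrability near $0$ that the paper leaves implicit.
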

\begin{proof}
	First note that, since $\phi^\prime(v)= -v\cdot \phi(v)$ and $\lim_{v\to \pm \infty} \phi^\prime(v) =0$, we have
	\begin{equation*}
	\lim_{v\to \pm \infty} v \cdot \phi(v) = 0.
	\end{equation*}
	Let $v=\Phi^{-1}(x)$. Then
	\begin{equation*}
	\lim_{x\to 0} \tilde{G}(x)  = \lim_{v\to -\infty} -v \cdot \phi(v) = 0.
	\end{equation*}

	Next, we obtain the derivative of $\tilde{G}(x)$. Since $\Phi^\prime(x) = \phi(x)$, we have
	\begin{equation}
	\left[\Phi^{-1}(x)\right]^\prime = \frac{1}{\Phi^\prime \left[\Phi^{-1}(x)\right]}=\frac{1}{\phi \left[\Phi^{-1}(x)\right]}.
	\label{eq:G(x)_derivative_1}
	\end{equation}
	Also since $\phi^\prime(x) = -x\cdot \phi(x)$, we have
	\begin{equation}
	\phi^\prime\left[\Phi^{-1}(x)\right] = - \Phi^{-1}(x) \cdot \phi\left[\Phi^{-1}(x)\right] \cdot \left[\Phi^{-1}(x)\right]^\prime = -\Phi^{-1}(x).
	\label{eq:G(x)_derivative_2}
	\end{equation}
	By \eqref{eq:G(x)_derivative_1} and \eqref{eq:G(x)_derivative_2}, we have
	\begin{equation*}
	\tilde{G}^\prime(x) = 1 - \left[\Phi^{-1}(x)\right]^\prime \cdot  \phi\left[\Phi^{-1}(x)\right] - \left[\Phi^{-1}(x)\right] \cdot \phi^\prime\left[\Phi^{-1}(x)\right] =  \left[\Phi^{-1}(x)\right]^2.
	\end{equation*}
	
	Therefore, by the fundamental theorem of calculus, we have
	\begin{equation*}
	\tilde{G}(x) = \int_{0}^{x} \tilde{G}^\prime(u) du + \tilde{G}(0) = \int_{0}^{x} \left[\Phi^{-1}(u)\right]^2 du.
	\end{equation*}
\end{proof}

\begin{lemma}
	\label{lemma:sigmasq}
	Denote $\tilde{Q}$ as the quantile function of a $\chi_1^2$ distribution, and let $\tilde{H}(s) = -\tilde{Q}(1-s)$ where $s\in (0,1)$. For $0\le s \le t \le 1$, consider the truncated variance function
	\begin{equation}
	\tilde{\sigma}^2(s,t) = \int_{s}^{t} \int_{s}^{t} (u \wedge v -uv) d \tilde{H}(u) d \tilde{H}(v),
	\label{eq:sigmasq}
	\end{equation}
	where $u \wedge v =\min(u,v)$. We have
	\begin{equation*}
	0 \le \tilde{\sigma}^2(s,t) \le 1.
	\end{equation*}
\end{lemma}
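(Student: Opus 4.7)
My plan is to treat the lower and upper bounds by different mechanisms, both keyed to the interpretation of the kernel $K(u,v)=u\wedge v - uv$ as the covariance function of a Brownian bridge on $[0,1]$.

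For the nonnegativity, I would substitute the elementary identity
\[
K(u,v) \;=\; \int_0^1 \bigl(\mathbbm{1}(w<u) - u\bigr)\bigl(\mathbbm{1}(w<v) - v\bigr)\,dw
\]
into \eqref{eq:sigmasq} and swap the order of integration via Fubini. The swap is legitimate since $|\mathbbm{1}(w<u)-u|\le 1$ while $\tilde H$ is of finite total variation on any compact sub-interval of $(0,1]$ (its only singularity being a logarithmic one at $0$, controlled by an estimate of the form $\tilde H(s)\sim -2\log(2/s)$ inherited from the $\chi_1^2$ quantile). This yields
\[
\tilde\sigma^2(s,t) \;=\; \int_0^1 \Bigl(\int_s^t \bigl(\mathbbm{1}(w<u) - u\bigr)\,d\tilde H(u)\Bigr)^{\!2} dw \;\ge\; 0.
\]

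For the upper bound I would proceed in two steps. First, I would evaluate $\tilde\sigma^2(0,1)$ in closed form by combining Fubini on $\int\int (u\wedge v)\,d\tilde H(u)\,d\tilde H(v) = \int_0^1 (\tilde H(1)-\tilde H(w))^2\,dw$ with integration by parts on $\int_0^1 u\,d\tilde H(u) = \tilde H(1) - \int_0^1 \tilde H(u)\,du$. Expanding the squares and cancelling the $\tilde H(1)$ terms reduces $\tilde\sigma^2(0,1)$ to $\int_0^1 \tilde H(u)^2\,du - \bigl(\int_0^1 \tilde H(u)\,du\bigr)^2 = \operatorname{Var}(\tilde H(U))$ with $U\sim\operatorname{Unif}(0,1)$, which is a fixed constant determined entirely by the $\chi_1^2$ distribution via $\tilde H(U)\stackrel{d}{=} -\chi_1^2$. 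Second, I would establish the monotonicity $\tilde\sigma^2(s,t) \le \tilde\sigma^2(0,1)$ through the Brownian-bridge representation above: decomposing $g_{0,1}(w) = g_{0,s}(w) + g_{s,t}(w) + g_{t,1}(w)$ where $g_{a,b}(w) = \int_a^b (\mathbbm{1}(w<u) - u)\,d\tilde H(u)$, expanding the square under the outer integral, and showing that the three diagonal pieces plus cross terms combine into $\tilde\sigma^2(s,t)$ plus a nonnegative remainder, with nonnegativity verified via the Cauchy--Schwarz inequality for positive-definite kernels, $K(u,u)K(v,v)\ge K(u,v)^2$.

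The principal obstacle is this monotonicity step, since $g_{s,t}(w)$ and $g_{0,1}(w)$ are genuinely distinct functions of $w$ rather than restrictions of one another, so a pointwise comparison is unavailable; one is instead forced to decompose $\tilde\sigma^2(0,1) - \tilde\sigma^2(s,t)$ into quadratic forms supported on the complementary region $[0,1]^2 \setminus [s,t]^2$ and verify the nonnegativity of each piece using the PSD structure of the bridge kernel. A secondary, milder technicality is the logarithmic singularity of $\tilde H$ at $0$, which is benign because the growth rate $\tilde H(s)\sim -2\log(2/s)$ keeps all the relevant Stieltjes integrals absolutely convergent.
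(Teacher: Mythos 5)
Your overall architecture (nonnegativity from a square representation; upper bound from domain monotonicity plus a closed-form evaluation of $\tilde{\sigma}^2(0,1)$) parallels the paper's, but you have made the two easy steps hard, and the one step you do not finish is the one that breaks. Both the lower bound and the monotonicity $\tilde{\sigma}^2(s,t)\le\tilde{\sigma}^2(0,1)$ are immediate from two pointwise facts that the paper's proof uses: $u\wedge v-uv\ge 0$ on $[0,1]^2$ (if $u\le v$ then $u-uv=u(1-v)\ge0$) and $d\tilde{H}(u)/du=-\Phi^{-1}(u/2)/\phi[\Phi^{-1}(u/2)]\ge 0$, so one is integrating a nonnegative integrand against a product of positive measures, and enlarging $[s,t]^2$ to $[0,1]^2$ can only increase the value. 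Your ``principal obstacle'' is therefore not an obstacle, and the tool you propose for it would not work anyway: the Cauchy--Schwarz inequality $K(u,v)^2\le K(u,u)K(v,v)$ for the bridge kernel controls the \emph{magnitude} of the off-diagonal contributions but says nothing about their \emph{sign}, which is what you need in order to discard the remainder supported on $[0,1]^2\setminus[s,t]^2$. The Brownian-bridge square representation for nonnegativity is correct but likewise unnecessary.

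The more consequential issue is your upper-bound constant. Your reduction of $\tilde{\sigma}^2(0,1)$ to $\operatorname{Var}(\tilde{H}(U))$ with $U\sim\operatorname{Unif}(0,1)$ is correct, and since $\tilde{H}(U)\overset{d}{=}-\chi_1^2$ this equals $\operatorname{Var}(\chi_1^2)=2$, not anything $\le 1$. So your route, carried to completion, proves $0\le\tilde{\sigma}^2(s,t)\le 2$ and in fact shows the bound of $1$ claimed in the lemma fails at $(s,t)=(0,1)$. The discrepancy traces to the paper's own computation, which drops a factor of $2$ when rewriting $2\tilde{G}(v/2)-v$ as $-\Phi^{-1}(v/2)\,\phi[\Phi^{-1}(v/2)]$ (the correct identity is $2\tilde{G}(v/2)-v=-2\,\Phi^{-1}(v/2)\,\phi[\Phi^{-1}(v/2)]$), and consequently reports $\int_0^1\tilde{M}(v)\,d\tilde{H}(v)=1$ instead of $2$. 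This is harmless downstream, since Corollary \ref{corollary:Yp_order} only needs $\tilde{\sigma}(1/p,k/p)=O(1)$ and hence $\tilde{\sigma}_p=O(\sqrt{p})$; but as a proof of the lemma as stated, your proposal cannot close, because the constant it (correctly) produces is $2$.
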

\begin{proof}
	We first note three facts.
	\begin{align}
	\tilde{H}(s) &= -\left[\Phi^{-1}\left(1-\frac{s}{2}\right) \right]^2=-\left[\Phi^{-1}\left(\frac{s}{2}\right) \right]^2,\label{eq:Hs} \\
	d\tilde{H}(s) &= \frac{\Phi^{-1}(1-s/2)}{\phi\left[\Phi^{-1}(1-s/2)\right]}ds=-\frac{\Phi^{-1}(s/2)}{\phi\left[\Phi^{-1}(s/2)\right]}ds, \quad \text{by \eqref{eq:G(x)_derivative_1}},\label{eq:dH} \\
	\Phi^{-1}(w) &= -\sqrt{\log\frac{1}{w^2} - \log\log\frac{1}{w^2} - \log(2\pi)} + o(1), \quad \text{for small $w$, by \citetonline{Fung2017}}. \label{eq:Phiinv_order}
	\end{align}
	Hence for small $w$,
	\begin{equation}
	\label{eq:Phiinvsq_order}
	\left[\Phi^{-1}(w)\right]^2 = O\left(\log \frac{1}{w^2}\right).
	\end{equation}
	Then by \eqref{eq:Hs} and \eqref{eq:Phiinvsq_order}, we have
	\begin{equation}
	\label{eq:sHs_limit}
	\lim_{s \to 0} s\cdot \tilde{H}(s) = \lim_{s \to 0} -s\cdot \left[\Phi^{-1}\left(\frac{s}{2}\right)\right]^2 = 0.
	\end{equation}
	Also, by \eqref{eq:Hs} and Lemma \ref{lemma:G(x)},
	\begin{equation}
	\label{eq:Hs_integral}
	-\int_{0}^{x} \tilde{H}(s) ds =  2\cdot \tilde{G}\left(\frac{x}{2}\right).
	\end{equation}
	Since $u,v\in[0,1]$, we have $u \wedge v-uv \ge 0$. By \eqref{eq:dH}, we have $d\tilde{H}(s)/ds \ge 0$. Therefore, the integrand in \eqref{eq:sigmasq} is non-negative, so that
	\begin{equation*}
	\tilde{\sigma}^2(s,t) \ge 0,
	\end{equation*}
	and 
	\begin{equation*}
	\begin{aligned}
	\tilde{\sigma}^2(s,t) &\le \int_{0}^{1} \int_{0}^{1} (u \wedge v -uv) d \tilde{H}(u) d \tilde{H}(v),\\
	&= \int_{0}^{1} \left[\int_{0}^{v} u(1-v) d\tilde{H}(u) + \int_{v}^{1} v(1-u) d\tilde{H}(u)   \right]  d\tilde{H}(v),\\
	&= \int_{0}^{1} \left[\int_{0}^{v} u d\tilde{H}(u) + v\int_{v}^{1} d\tilde{H}(u) -v\int_{0}^{1}u d\tilde{H}(u)  \right]  d\tilde{H}(v).
	\end{aligned}
	\end{equation*}
	Denote 
	\begin{equation*}
	\tilde{M}(v) = \int_{0}^{v} u d\tilde{H}(u) + v\int_{v}^{1} d\tilde{H}(u) -v\int_{0}^{1}u d\tilde{H}(u).
	\end{equation*}
	Now, we consider the three integrals in $\tilde{M}(v)$. First note that
	\begin{equation*}
	\begin{aligned}
	\int_{0}^{x} u d\tilde{H}(u) &= u\cdot \tilde{H}(u) \Bigr|_{0}^x - \int_{0}^{x} \tilde{H}(u) du,\\
	&= x\cdot \tilde{H}(x) - \int_{0}^{x} \tilde{H}(u) du,\quad \text{by \eqref{eq:sHs_limit}}\\
	&=  x\cdot \tilde{H}(x) + 2\cdot \tilde{G}(x/2), \quad \text{by \eqref{eq:Hs_integral}}.
	\end{aligned}
	\end{equation*}
	It is easily verified that $\tilde{H}(1)=0$ and $\tilde{G}(1/2)=1/2$, we have
	\begin{equation*}
	\int_{0}^{1} u d\tilde{H}(u) = 2\cdot \tilde{G}(1/2)=1,
	\end{equation*}
	and
	\begin{equation*}
	v\int_{v}^{1} d\tilde{H}(u) = -v\cdot \tilde{H}(v).
	\end{equation*}
	Therefore,
	\begin{equation*}
	\begin{aligned}
	\tilde{M}(v) &= v\cdot \tilde{H}(v) +2 \cdot \tilde{G}(v/2)-v\cdot \tilde{H}(v)-2v\cdot \tilde{G}(1/2)\\
	&=2 \cdot \tilde{G}(v/2) - v.
	\end{aligned}
	\end{equation*}
	Finally,
	\begin{equation*}
	\begin{aligned}
	\int_{0}^{1} \tilde{M}(v) d\tilde{H}(v) &= \int_{0}^{1} 2 \cdot \tilde{G}(v/2)d\tilde{H}(v) - \int_{0}^{1}v d\tilde{H}(v),\\
	&= -\int_{0}^{1} \Phi^{-1}\left(\frac{v}{2}\right)\cdot \phi\left[\Phi^{-1}\left(\frac{v}{2}\right)\right]d\tilde{H}(v),\quad \text{by the definition of $\tilde{G}(x)$},\\
	&= 2\int_{0}^{1/2} \left[\Phi^{-1}(v)\right]^2 dv, \quad \text{by \eqref{eq:dH}},\\
	&=2\cdot \tilde{G}(1/2),\\
	&=1.
	\end{aligned}
	\end{equation*}
	Therefore,
	\begin{equation*}
	0 \le \tilde{\sigma}^2(s,t) \le 1.
	\end{equation*}
\end{proof}

\begin{theorem}
	\label{thm:ydf_representation}
	Assume the design matrix is orthogonal and the true model is null ($\mu=0$). Let $\tilde{X}_{(i)}$ be the $i$-th largest order statistic in an i.i.d sample of size $p$ from a $\chi^2_1$ distribution. Denote $\tilde{Y}_p = \tilde{\sigma}_p^{-1}(\sum_{i=1}^k \tilde{X}_{(i)} - \tilde{\mu}_p)$, where
	\begin{equation*}
	\tilde{\sigma}_p = \sqrt{p} \cdot \sigma(1/p,k/p),
	\end{equation*}
	and
	\begin{equation*}
	\tilde{\mu}_p = -p \int_{1/p}^{k/p} \tilde{H}(u) du - \tilde{H}\left(\frac{1}{p}\right),
	\end{equation*}
	where $\sigma(s,t)$ and $\tilde{H}(x)$ are defined in Lemma \ref{lemma:sigmasq}.
	
	As $k \to \infty$, $p \to \infty$ and $k=\left \lfloor{px}\right \rfloor$ with $x \in (0,1)$, we have
	\begin{equation}
	\frac{\text{df}_C(k)}{2p} = \frac{1}{2p} E\left[ \sum_{i=1}^k \tilde{X}_{(i)} \right]=  \frac{\tilde{\sigma}_p}{2p}E(\tilde{Y}_p) + \tilde{G}\left(\frac{k}{2p}\right) + O\left(\frac{\log(p)}{p}\right),
	\label{eq:ydf/2p_representation}
	\end{equation}
	where $\left \lfloor{\cdot}\right \rfloor$ denotes the greatest integer function, $\tilde{G}(x)$ is defined in Lemma \ref{lemma:G(x)}.
	
\end{theorem}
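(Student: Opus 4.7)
The plan is to unwind the definition of $\tilde{Y}_p$ to rewrite $E[\sum_{i=1}^k \tilde{X}_{(i)}]$ as $\tilde{\sigma}_p E(\tilde{Y}_p) + \tilde{\mu}_p$, and then show that $\tilde{\mu}_p/(2p)$ equals $\tilde{G}(k/(2p))$ up to an $O(\log(p)/p)$ error. All the analytic machinery to do this is already developed in Lemmas \ref{lemma:G(x)} and \ref{lemma:sigmasq}.

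First, directly from $\tilde{Y}_p = \tilde{\sigma}_p^{-1}\bigl(\sum_{i=1}^k \tilde{X}_{(i)} - \tilde{\mu}_p\bigr)$, I take expectations to get
$E[\sum_{i=1}^k \tilde{X}_{(i)}] = \tilde{\sigma}_p E(\tilde{Y}_p) + \tilde{\mu}_p$, so dividing by $2p$ reduces the theorem to proving
\begin{equation*}
\frac{\tilde{\mu}_p}{2p} = \tilde{G}\!\left(\frac{k}{2p}\right) + O\!\left(\frac{\log p}{p}\right).
\end{equation*}
Using equation \eqref{eq:Hs_integral} (from the proof of Lemma \ref{lemma:sigmasq}), $-\int_0^x \tilde{H}(s)\,ds = 2\tilde{G}(x/2)$, so splitting $\int_{1/p}^{k/p} = \int_0^{k/p} - \int_0^{1/p}$ gives
\begin{equation*}
-p\int_{1/p}^{k/p}\!\tilde{H}(u)\,du = 2p\,\tilde{G}\!\left(\frac{k}{2p}\right) - 2p\,\tilde{G}\!\left(\frac{1}{2p}\right),
\end{equation*}
hence $\tilde{\mu}_p/(2p) = \tilde{G}(k/(2p)) - \tilde{G}(1/(2p)) - \tilde{H}(1/p)/(2p)$.

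Next I bound the two leftover pieces. For $\tilde{H}(1/p)/(2p)$: by \eqref{eq:Hs} we have $\tilde{H}(1/p) = -[\Phi^{-1}(1/(2p))]^2$, and by \eqref{eq:Phiinvsq_order} this is $O(\log p)$, so the term is $O(\log(p)/p)$. For $\tilde{G}(1/(2p))$: by Lemma \ref{lemma:G(x)} it equals $\int_0^{1/(2p)}[\Phi^{-1}(u)]^2\,du$. Using \eqref{eq:Phiinvsq_order} to bound the integrand by $C\log(1/u^2)$ near $0$ and integrating, $\int_0^{1/(2p)} \log(1/u)\,du = O(\log(p)/p)$, so this term is also $O(\log(p)/p)$. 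Combining these two bounds yields the desired expression for $\tilde{\mu}_p/(2p)$.

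There is no serious obstacle here: the heart of the argument is the identity \eqref{eq:Hs_integral} already established inside Lemma \ref{lemma:sigmasq}, and the two error estimates follow from the tail behavior of $\Phi^{-1}$ near $0$ given in \eqref{eq:Phiinv_order}--\eqref{eq:Phiinvsq_order}. The only place to be a little careful is ensuring the bound $[\Phi^{-1}(u)]^2 = O(\log(1/u^2))$ is uniform on $(0,1/(2p))$ with a constant independent of $p$, but this is immediate from the explicit asymptotic expansion cited from \citetonline{Fung2017}, so the remainder estimate holds with a universal constant. Corollary \ref{thm:hdf_ydf_corollary} of the main theorem would then follow by combining \eqref{eq:hdf_ydf_yp_representation} with the divergence of both $\text{df}_C(k)$ and $\text{hdf}(k)$ (each of order $p$) and the assumed boundedness of $E(\tilde{Y}_p)$.
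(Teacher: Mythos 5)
Your proof is correct and follows essentially the same route as the paper's: take expectations in the definition of $\tilde{Y}_p$ to get $E[\sum_{i=1}^k \tilde{X}_{(i)}] = \tilde{\sigma}_p E(\tilde{Y}_p) + \tilde{\mu}_p$, rewrite $-p\int_{1/p}^{k/p}\tilde{H}(u)\,du$ via \eqref{eq:Hs_integral}, and absorb $\tilde{G}(1/(2p))$ and $\tilde{H}(1/p)/(2p)$ into the $O(\log(p)/p)$ remainder using \eqref{eq:Phiinv_order}--\eqref{eq:Phiinvsq_order}. The only cosmetic differences are that you bound $\tilde{G}(1/(2p))$ through its integral representation rather than its closed form, and you omit the paper's invocation of the Cs\"org\H{o} central limit theorem for $\tilde{Y}_p$, which is indeed not needed for the identity \eqref{eq:ydf/2p_representation} itself but only motivates the boundedness of $E(\tilde{Y}_p)$ assumed in the corollary.
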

\begin{proof}
	We first apply a result in \citetonline{Csorgo1991}, to show that $\tilde{Y}_p=\tilde{\sigma}_p^{-1}(\sum_{i=1}^{k} \tilde{X}_{(i)}-\tilde{\mu}_p)$ converges in distribution to a standard normal. We then show how $\tilde{\mu}_p$ can be expressed in terms of function G plus a remainder term, which further leads to expression \eqref{eq:ydf/2p_representation}. 
	
	It follows from \citetonline{Csorgo1991} Corollary 2, that if there exist centering and normalizing constants $c_p$ and $d_p>0$, s.t.
	\begin{equation}
	d_p^{-1}(\tilde{X}_{(1)} - c_p) \xrightarrow{D} Y, \quad \text{where Y is the standard Gumbel distribution},
	\label{eq:scorgo_condition_gumbel} 
	\end{equation}
	then as $k \to \infty$, $p \to \infty$ and $k=\left \lfloor{px}\right \rfloor$ with $x \in (0,1)$,
	\begin{equation}
	\left(\sum_{i=1}^k \tilde{X}_{(i)}  - \tilde{\mu}_p\right) / \tilde{\sigma}_p \xrightarrow{D} Z, \quad \text{where Z is standard normal}.
	\label{eq:scorgo_result}
	\end{equation}
	
	First, it follows from \citetonline{Embrechts2013} that \eqref{eq:scorgo_condition_gumbel} holds, with $c_p=2\log(p)-\log\log(p)-\log(\pi)$ and $d_p=2$.

	Next, we have
	
	\begin{equation*}
	\begin{aligned}
	\tilde{\mu}_p &= -p \int_{1/p}^{k/p} \tilde{H}(u) du - \tilde{H}\left(\frac{1}{p}\right),\\
	&= -p \int_{0}^{k/p} \tilde{H}(u) du + p \int_{0}^{1/p} \tilde{H}(u) du - \tilde{H}\left(\frac{1}{p}\right),\\
	&= 2p\cdot \tilde{G}\left(\frac{k}{2p}\right) - 2p \cdot \tilde{G}\left(\frac{1}{2p}\right) + \left[\Phi^{-1}\left(\frac{1}{2p}\right)\right]^2, \quad \text{by \eqref{eq:Hs_integral}}.
	\end{aligned}		
	\end{equation*}
	Also, since
	\begin{equation*}
	\begin{aligned}
	\tilde{G}(\frac{1}{2p}) &= \frac{1}{2p} - \Phi^{-1}\left(\frac{1}{2p}\right) \cdot \phi\left[\Phi^{-1}\left(\frac{1}{2p}\right) \right],\quad \text{by definition of $\tilde{G}(x)$ in Lemma \ref{lemma:G(x)}},\\
	&= \frac{1}{2p} - \frac{1}{\sqrt{2\pi}}\Phi^{-1}\left(\frac{1}{2p}\right) \cdot \exp\left(-\frac{1}{2}\left[\Phi^{-1}\left(\frac{1}{2p}\right) \right]^2\right),\\
	&= \frac{1}{2p} + \frac{1}{\sqrt{2\pi}} \cdot \left(\sqrt{\log(4p^2)-\log\log(4p^2)-\log(2\pi)}+o(1)\right)\cdot \\
	& \qquad \exp\left[-\frac{1}{2} \left(\log(4p^2)-\log\log(4p^2)-\log(2\pi) +o(1)\right)\right], \quad \text{by \eqref{eq:Phiinv_order}},\\
	&= \frac{1}{2p} + \left(\sqrt{\log(4p^2)-\log\log(4p^2)-\log(2\pi)}+o(1)\right)\cdot \frac{\sqrt{\log(4p^2)}}{2p},\\
	&= O\left(\frac{\log(p)}{p}\right).
	\end{aligned}
	\end{equation*}
	Also
	\begin{equation*}
	\begin{aligned}
	\frac{1}{2p}\left[\Phi^{-1}\left(\frac{1}{2p}\right)\right]^2 &= O\left( \frac{\log(p)}{p}\right), \quad \text{by \eqref{eq:Phiinv_order}},
	\end{aligned}
	\end{equation*}
	and hence
	\begin{equation*}
	\begin{aligned}
	\frac{\tilde{\mu}_p}{2p} &= \tilde{G}\left(\frac{k}{2p}\right) - \tilde{G}\left(\frac{1}{2p}\right) + \frac{1}{2p}\left[\Phi^{-1}\left(\frac{1}{2p}\right)\right]^2,\\
	&= \tilde{G}\left(\frac{k}{2p}\right
	) + O\left( \frac{\log(p)}{p}\right).
	\end{aligned}
	\end{equation*}
	Therefore, \eqref{eq:ydf/2p_representation} holds, i.e.
	\begin{equation*}
	\frac{\text{df}_C(k)}{2p}=\frac{1}{2p} E\left(\sum_{i=1}^{k} \tilde{X}_{(i)}\right) = \frac{\tilde{\sigma}_p}{2p} E(\tilde{Y}_p) + \frac{\tilde{\mu}_p}{2p}=\frac{\tilde{\sigma}_p}{2p} E(\tilde{Y}_p) + \tilde{G}\left(\frac{k}{2p}\right) + O\left( \frac{\log(p)}{p}\right).
	\end{equation*}
\end{proof}

\begin{corollary}
	\label{corollary:Yp_order}
	If $\limsup |E(\tilde{Y_p})| < \infty$, we further have:
	\begin{equation}
	\frac{\text{df}_C(k)}{2p} = \tilde{G}\left(\frac{k}{2p}\right) + O\left(\frac{\log(p)}{p}\right) + O\left(\frac{1}{\sqrt{p}}\right).
	\label{eq:ydf/2p_representation_remark}
	\end{equation}
\end{corollary}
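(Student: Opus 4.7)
The plan is to derive the corollary directly from Theorem \ref{thm:ydf_representation} by controlling the size of the term $\frac{\tilde{\sigma}_p}{2p} E(\tilde{Y}_p)$. Theorem \ref{thm:ydf_representation} already gives
\[
\frac{\text{df}_C(k)}{2p} \;=\; \frac{\tilde{\sigma}_p}{2p}E(\tilde{Y}_p) \;+\; \tilde{G}\!\left(\frac{k}{2p}\right) \;+\; O\!\left(\frac{\log(p)}{p}\right),
\]
so the only thing to check is that $\frac{\tilde{\sigma}_p}{2p}E(\tilde{Y}_p) = O(1/\sqrt{p})$ under the hypothesis $\limsup |E(\tilde{Y}_p)| < \infty$.

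First I would handle the $E(\tilde{Y}_p)$ factor: by the corollary's hypothesis, there exists a constant $M$ such that $|E(\tilde{Y}_p)| \le M$ for all $p$ sufficiently large, so this factor is $O(1)$. Next I would bound $\tilde{\sigma}_p$. By definition, $\tilde{\sigma}_p = \sqrt{p}\,\tilde{\sigma}(1/p, k/p)$, and Lemma \ref{lemma:sigmasq} gives the uniform bound $0 \le \tilde{\sigma}^2(s,t) \le 1$ for all $0 \le s \le t \le 1$. Hence $\tilde{\sigma}(1/p, k/p) \le 1$ and so $\tilde{\sigma}_p \le \sqrt{p}$.

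Combining these two observations yields
\[
\left|\frac{\tilde{\sigma}_p}{2p} E(\tilde{Y}_p)\right| \;\le\; \frac{\sqrt{p}}{2p} \cdot M \;=\; \frac{M}{2\sqrt{p}} \;=\; O\!\left(\frac{1}{\sqrt{p}}\right),
\]
which when substituted into the expression from Theorem \ref{thm:ydf_representation} gives exactly \eqref{eq:ydf/2p_representation_remark}.

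There isn't really a hard step here since Theorem \ref{thm:ydf_representation} and Lemma \ref{lemma:sigmasq} do all the work; the corollary is essentially a bookkeeping exercise combining the asymptotic representation with the uniform variance bound. If any subtlety arises, it would be in checking that the constant $M$ from $\limsup |E(\tilde{Y}_p)| < \infty$ can indeed be taken to hold for all $p$ large enough (rather than merely along a subsequence), which is immediate from the definition of $\limsup$ for real sequences, so no additional argument is needed.
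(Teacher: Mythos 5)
Your argument is correct and is essentially identical to the paper's own proof: both invoke Lemma \ref{lemma:sigmasq} to get $\tilde{\sigma}_p = O(\sqrt{p})$, use the $\limsup$ hypothesis to bound $E(\tilde{Y}_p)$, and substitute into the representation from Theorem \ref{thm:ydf_representation}. Your write-up is if anything slightly more explicit about the boundedness of $E(\tilde{Y}_p)$, which the paper leaves implicit.
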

\begin{proof}
	By Lemma \ref{lemma:sigmasq} we have $0 \le \sigma(1/p,k/p) \le 1$, and hence $\tilde{\sigma}_p = O(\sqrt{p})$. Therefore by Theorem \ref{thm:ydf_representation}, we have
	\begin{equation*}
	\frac{\text{df}_C(k)}{2p}=  \tilde{G}\left(\frac{k}{2p}\right) + O\left( \frac{\log(p)}{p}\right) + O\left(\frac{1}{\sqrt{p}}\right).
	\end{equation*}	
\end{proof}

\dfasy*

\begin{proof}
	By Lemma \ref{lemma:hdf_nulltrue}, we have
	\begin{equation*}
	\text{hdf}(k) = df_L(\tilde{M}^{-1}(k)) = k - 2p\cdot \Phi^{-1} \left(\frac{k}{2p}\right) \cdot \phi\left[\Phi^{-1}\left(\frac{k}{2p}\right) \right].
	\end{equation*}
	Then by the definition of $\tilde{G}(x)$ in Lemma \ref{lemma:G(x)},
	\begin{equation*}
	\frac{1}{2p} \text{hdf}(k) = \tilde{G}\left(\frac{k}{2p}\right).
	\end{equation*}
	By Theorem \ref{thm:ydf_representation}, we also have
	\begin{equation*}
	\frac{1}{2p}\text{df}_C(k) = \frac{\sigma_p}{2p}E(\tilde{Y}_p) + \tilde{G}\left(\frac{k}{2p}\right) + O\left(\frac{\log(p)}{p} \right).
	\end{equation*}
	Therefore, \eqref{eq:hdf_ydf_yp_representation} holds, i.e.
	\begin{equation*}
	\frac{1}{2p} \text{hdf}(k) = \frac{1}{2p}\text{df}_C(k) - \frac{\tilde{\sigma}_p}{2p}E(\tilde{Y}_p) + O\left(\frac{\log(p)}{p} \right).
	\end{equation*}
\end{proof}

\dfasycorollary*
\begin{proof}
	By Theorem \ref{thm:hdf_ydf_representation} and Corollary \ref{corollary:Yp_order},
	\begin{equation*}
	\frac{1}{2p} \text{hdf}(k) = \frac{1}{2p}\text{df}_C(k) + O\left(\frac{1}{\sqrt{p}}\right) + O\left(\frac{\log(p)}{p} \right).
	\end{equation*}
	From Lemma \ref{lemma:G(x)}, $\tilde{G}(x)$ is a non-decreasing function with $\tilde{G}(0+)=0$ and $\tilde{G}(1/2)=1/2$. Thus, 
	\begin{equation*}
	\frac{2p}{\text{hdf}(k)} = \frac{1}{\tilde{G}\left(\frac{k}{2p}\right)} = O(1),
	\end{equation*}
	since $k=\left \lfloor{px}\right \rfloor$ and $x\in(0,1)$. Therefore, 
	\begin{equation*}
	\frac{\text{df}_C(k)}{\text{hdf}(k)} = 1 + O\left(\frac{1}{\sqrt{p}}\right) + O\left(\frac{\log(p)}{p} \right),
	\end{equation*}
	and hence
	\begin{equation*}
	\frac{\text{df}_C(k)}{\text{hdf}(k)} \to 1.
	\end{equation*}
\end{proof}

\subsection{Expected KL-based optimism, in the context of BS }
\label{sec:expectedkl_bs}
In this section, we obtain the expected Kullback-Leibler (KL) based optimism for BS with subset size $k$. Let's first consider fitting least squares regression on $k$ prefixed predictors. Recall that 
\begin{equation*}
y = \mu + \epsilon,
\end{equation*}
where $\epsilon \sim \mathcal{N}(0,\sigma^2 I)$. We use the deviance to measure the predictive error, that is 
\begin{equation*}
\Theta=-2 \log f(y|\mu,\sigma^2).
\end{equation*}
The training error is then 
$$\text{err}_{\text{KL}} = -2 \log f (y|\hat{\mu},\hat{\sigma}^2),$$
and the testing error (KL information) is
$$\text{Err}_{\text{KL}}  = -2 E_0 \left[ \log f(y^0|\hat{\mu},\hat{\sigma}^2)\right],$$
where $\hat{\mu}$ and $\hat{\sigma}^2$ are the maximum likelihood estimators (MLE) based on training data $(X,y)$, $y^0$ is independent and has the same distribution of $y$ and $E_0$ is the expectation over $y^0$. 

Due to the assumption of normality, the deviance can be expressed as
\begin{equation}
\Theta = n\log(2\pi \sigma^2) + \frac{\lVert y- \mu \rVert_2^2}{\sigma^2}.
\label{eq:deviance}
\end{equation}
Maximizing the likelihood, or minimizing the deviance \eqref{eq:deviance}, gives
\begin{equation}
\begin{aligned}
& \hat{\mu} = \argmin_\mu  \lVert y-\mu \rVert_2^2,\\
&\hat{\sigma}^2 = \frac{1}{n} \lVert y-\hat{\mu}\rVert_2^2.
\label{eq:appen_mle}
\end{aligned}
\end{equation}
Using these expressions, we then have
\begin{equation}
\text{err}_\text{KL} = n \log(2\pi \hat{\sigma}^2) +n,
\label{eq:err_kl}
\end{equation}
and
\begin{equation*}
\text{Err}_\text{KL} = n\log(2\pi \hat{\sigma}^2) + n\frac{\sigma^2}{\hat{\sigma}^2} +\frac{\lVert \mu- \hat{\mu} \rVert_2^2}{\hat{\sigma}^2}.
\end{equation*}
The expected optimism is then
\begin{equation}
\begin{aligned}
E(\text{op}_\text{KL})  &= E(\text{Err}_\text{KL}) - E(\text{err}_\text{KL}),\\
&= E\left(n\frac{\sigma^2}{\hat{\sigma}^2}\right) + E\left(\frac{\lVert \mu- \hat{\mu}) \rVert_2^2}{\hat{\sigma}^2}\right) -n.
\end{aligned}
\label{eq:eop}
\end{equation}

So far we've been considering a subset with $k$ fixed predictors. At subset size $k$, BS chooses the one with minimum residual sum of squares (RSS) from all $\binom{p}{k}$ possible subsets. In order for the above derivation to continue to hold for BS of subset size $k$, we need to show that the MLE from \eqref{eq:appen_mle} is also the BS fit. This can be easily obtained from the full likelihood (-2 times) \eqref{eq:err_kl}, which after substituting the expression of $\hat{\sigma}$ leads to
\begin{equation*}
n\log\left(\frac{2\pi}{n}\lVert y-\hat{\mu}\rVert_2^2\right) + n.
\end{equation*}
Therefore, for all $\binom{p}{k}$ models of size $k$, the one with largest log likelihood, is also the one with smallest RSS. Hence \eqref{eq:eop} holds for BS fit with subset size $k$ as well.

\subsection{Proof of Theorem \ref{thm:correspondence}}
\label{sec:correspondence}

\begin{proof}	
	Since $[X_1,X_2,\cdots,X_j]$ and $[Q_1,Q_2,\cdots,Q_j]$ span the same space, we have
	\begin{equation}
	\hat{\alpha}^{(j)} = \hat{\beta}^{(j)}.
	\label{eq:thmproof-correspondence-zrq-zrx-subset}
	\end{equation}
	We can express $\hat{\gamma}(k_Q)$ as
	\begin{equation}
		\hat{\gamma}(k_Q) = \sum_{j\in S_k} \hat{\gamma}^{(j)} - \hat{\gamma}^{(j-1)}.
		\label{eq:zs_expand}
	\end{equation}
	We multiply both sides by $R^{-1}$ ($X$ is assumed to have full column rank), and use \eqref{eq:thmproof-correspondence-zrq-zrx-subset} to get
	\begin{equation*}
		\hat{\beta}(k_Q) = \sum_{j\in S_k} \hat{\alpha}^{(j)} - \hat{\alpha}^{(j-1)}.
	\end{equation*}
\end{proof}

\section{Simulation setup}
\subsection{Setup for studying the performance of BS}
\label{sec:simulation_setup_orthx}

We consider a trigonometric configuration of $X$ that is studied by \citet{Hurvich1991}, where $X=(x^1,x^2)$ is an $n$ by $p$ matrix with components defined by 
$$ x_{tj}^1 = \sin\left(\frac{2\pi j}{n}t\right),$$
and 
$$ x_{tj}^2 = \cos\left(\frac{2\pi j}{n}t\right),$$
for $j=1,\cdots,p/2$ and $t=0,\cdots,n-1$. The columns of $X$ are then standardized to have $l_2$ norm 1, to make them orthonormal. By fixing $X$, the responses are generated by \eqref{eq:truemodel_def}, where $\mu=X\beta$. The error $\epsilon$ is also shifted to have mean $0$, hence the intercept will be zero. 

We consider the following configurations of the experiment:
\begin{itemize}
	\item Sample size: $n \in \{200, 2000\}$.
	\item Number of predictors: $p \in \{14,30,60,180\}$.
	\item Signal-to-noise ratio: $\text{SNR} \in \{0.2,1.5,7\}$ (low, medium and high). The average oracle $R^2$ (linear regression on the set of true predictors) corresponding to these three SNR values are roughly $20\%$, $50\%$ and $90\%$, respectively.
	\item Coefficient vector $\beta$ (Orth in the following denotes for orthogonal $X$):
	\begin{itemize}
		\item Orth-Sparse-Ex1: $\beta=[1_6,0_{p-6}]^T$
		\item Orth-Sparse-Ex2: $\beta=[1,-1,5,-5,10,-10,0_{p-6}]^T$
		\item Orth-Dense \citep{Taddy2017}: $\beta_j = (-1)^j \exp(-\frac{j}{\kappa})$, $j=1,\cdots,p$. $\kappa=10$ 
	\end{itemize}
\end{itemize}

In total, there are $72$ different scenarios in the experiment. The full set of simulation results is presented in the Online Supplemental Material. In each scenario, $1000$ replications of the response $y$ are generated. A fitting procedure $\hat{\mu}$, is evaluated via the average RMSE, where 
\begin{equation*}
\text{RMSE}(\hat{\mu}) = \sqrt{ \frac{1}{n} \lVert \hat{\mu}-X\beta \rVert_2^2}.
\end{equation*}
To make the scales easier to compare, we construct two relative metrics: $\%$ worse than the best possible BS, and relative efficiency, which are defined as follows:
\begin{itemize}
	\item \textbf{$\%$ worse than best possible BS}
	\begin{equation*} 
	=\displaystyle 100 \times \left( \frac{\text{averge RMSE of a fitting procedure } \hat{\mu}}{\text{average RMSE of the best possible BS}} - 1 \right) \%,
	\end{equation*}
	where the best possible BS here means that on a single fit, choosing the subset size with the minimum RMSE among all $p+1$ candidates, as if an oracle tells us the best model.
	
	\item \textbf{Relative efficiency:} For a collection of fitting procedures, the relative efficiency for a particular procedure $j$, is defined as
	\begin{equation*}
	\displaystyle \frac{\min_l \text{ average RMSE of fitting procedure }l}{\text{average RMSE of fitting procedure }j}.
	\end{equation*}
	The relative efficiency is a measure between $0$ and $1$. Higher value indicates better performance. Besides the fitting procedures specified, we include the null and full OLS in the calculation of relative efficiency. 
\end{itemize}
We also present the sparsistency (number of true positives) and number of extra predictors (number of false positives).

\subsection{Setup for studying the performance of BOSS}
\label{sec:simulation_setup_generalx}
We consider a general $X$, where  $x_i\sim \mathcal{N}(0,\Sigma)$, $i=1,\cdots,n$ are independent realizations from a $p$-dimensional multivariate normal distribution with mean zero and covariance matrix $\Sigma=(\sigma_{ij})$. 

The correlation structure and true coefficient vector $\beta$ include the following scenarios:
\begin{itemize}
	\item Sparse-Ex1: \textbf{All of the predictors (both signal and noise) are correlated.} We take $\sigma_{i,j}=\rho^{|i-j|}$ for $i,j\in\{1,\cdots,p\}\times\{1,\cdots,p\}$. As to $\beta$, we have $\beta_j=1$ for $p_0$ equispaced values and $0$ everywhere else. 
	\item Sparse-Ex2: \textbf{Signal predictors are pairwise correlated with opposite effects.} We take $\sigma_{i,j}=\sigma_{j,i}=\rho$ for $1\le i <j \le p_0$. Other off-diagonal elements in $\Sigma$ are zero. For the true coefficient vector, we have $\beta_{2j-1}=1$ and $\beta_{2j}=-1$ for $1\le j \le p_0/2$, and all other $\beta_j=0$ for $j=p_0+1,\cdots,p$.
	\item Sparse-Ex3: \textbf{Signal predictors are pairwise correlated with noise predictors.} We take $\sigma_{i,j}=\sigma_{j,i}=\rho$ for $1\le i \le p_0$ and $j=p_0+i$. Other off-diagonal elements in $\Sigma$ are zero. $\beta=[1_{p_0},0_{p-p_0}]^T$.
	\item Sparse-Ex4: \textbf{Same correlation structure as Sparse-Ex2, but with varying strengths of coefficients.} We have $\beta_j=-\beta_{j+1}$ where $j=2k+1$ and $k=0,1,\cdots,p_0/2-1$. Suppose that $\beta^\prime=[1,5,10]$, then $\beta_j=\beta^\prime_k$ where $k=j (\text{mod} 3)$. 
	\item Dense: \textbf{Same correlation structure as Ex1, but with diminishing strengths of coefficients}. The true coefficient vector has: $\beta_j = \displaystyle (-1)^j \exp(-\frac{j}{\kappa})$, $j=1,\cdots,p$, and here $\kappa=10$.
\end{itemize}
The setup of Sparse-Ex1 is very common in the literature, such as in \citet{Bertsimas2016} and \citet{Hastie2017}. All of the predictors are correlated (when $\rho \ne 0$) where the strength of correlation depends on the physical positions of variables. Sparse-Ex2 is designed such that the pair of correlated predictors, e.g. $(X_1,X_2)$, leads to a good fit (high $R^2$), while either single one of them contribute little to the fitted $R^2$. Sparse-Ex4 is similar to Sparse-Ex2, but has varying strengths of coefficients for the true predictors. In Sparse-Ex3, signal predictors are only correlated with the noise ones. Finally, the dense setup is built on the dense example in Section \ref{sec:simulation_setup_orthx}, by having correlated predictors.

For the sparse examples, we take $p_0=6$. We consider three values of the correlation parameter, $\rho \in [0, 0.5, 0.9]$. For the case of $n>p$, other configuration options, including $n$, $p$, and SNR, are the same as in Section \ref{sec:simulation_setup_orthx}. For the case of $n \le p$, we consider $n \in \{200, 500\}$ and $p \in \{550, 1000\}$. This implies a total of $540$ different combinations of configuration options. For each configuration, $1000$ replications are estimated and we present the same evaluation measures as introduced in Section \ref{sec:simulation_setup_orthx}. The full set of results can be found in the Online Supplemental Material.

\section{Extra figures and tables}
\begin{figure}[ht!]
	\centering
	\includegraphics[width=0.9\textwidth]{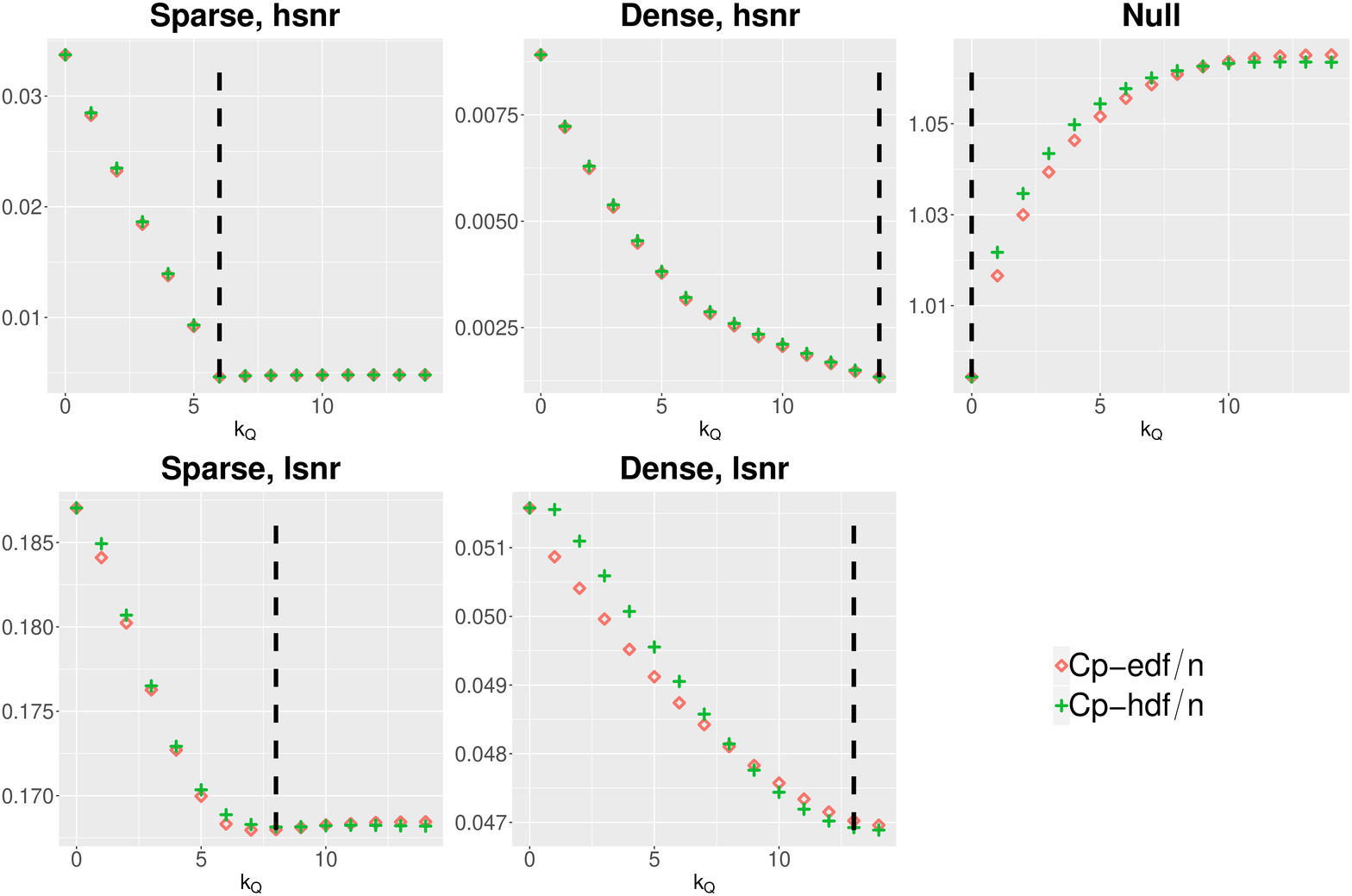}
	\caption{Averages of C$_p$-edf and C$_p$-hdf for BOSS over $1000$ replications. Here $X$ is general with $n=200$, $p=14$. Both criteria result in the same average of the selected subset size over the $1000$ replications (rounded to the nearest integer) as denoted by the dashed vertical lines. We assume knowledge of $\mu$ and $\sigma$.}
	\label{fig:boss_cp_edf_hdf}
\end{figure}

\begin{table}[ht]
\centering
\caption{The performance of BS and LBS. The selection rule for both methods is C$_p$-edf. We assume knowledge of $\mu$ and $\sigma$.} 
\label{tab:lbs_bs}
\scalebox{0.9}{
\begin{tabular}{|c|c|c|cc|cc|cc|}
  \toprule 
 \multicolumn{1}{|c}{} & \multicolumn{1}{c}{} &       & \multicolumn{2}{c|}{Orth-Sparse-Ex1} & \multicolumn{2}{c|}{Orth-Sparse-Ex2} & \multicolumn{2}{c|}{Orth-Dense} \\
 \cmidrule{4-9}\multicolumn{1}{|c}{} & \multicolumn{1}{c}{} &       & BS    & LBS   & BS    & LBS   & BS    & LBS  \\
 \midrule 
 \multicolumn{1}{|c}{} & \multicolumn{1}{c}{} &       & \multicolumn{6}{c|}{\% worse than the best possible BS} \\
 \midrule 
 \multirow{4}[4]{*}{n=200} & \multirow{2}[2]{*}{hsnr} & p=30 & 4 & 28 & 21 & 25 & 1 & 1 \\ 
   &  & p=180 & 1 & 43 & 12 & 25 & 7 & 10 \\ 
  \cmidrule{2-9} & \multirow{2}[2]{*}{lsnr} & p=30 & 20 & 26 & 21 & 30 & 15 & 16 \\ 
   &  & p=180 & 15 & 20 & 15 & 32 & 8 & 12 \\ 
  \midrule \multirow{4}[4]{*}{n=2000} & \multirow{2}[2]{*}{hsnr} & p=30 & 3 & 29 & 3 & 28 & 0 & 0 \\ 
   &  & p=180 & 0 & 42 & 0 & 41 & 6 & 9 \\ 
  \cmidrule{2-9} & \multirow{2}[2]{*}{lsnr} & p=30 & 3 & 28 & 7 & 25 & 2 & 2 \\ 
   &  & p=180 & 0 & 41 & 1 & 37 & 8 & 12 \\ 
   \midrule 
 \multicolumn{1}{|c}{} & \multicolumn{1}{c}{} &       & \multicolumn{6}{c|}{Relative efficiency} \\
 \midrule 
\multirow{4}[4]{*}{n=200} & \multirow{2}[2]{*}{hsnr} & p=30 & 1 & 0.81 & 1 & 0.96 & 1 & 1 \\ 
   &  & p=180 & 1 & 0.71 & 1 & 0.89 & 1 & 0.97 \\ 
  \cmidrule{2-9} & \multirow{2}[2]{*}{lsnr} & p=30 & 1 & 0.96 & 1 & 0.93 & 0.95 & 0.94 \\ 
   &  & p=180 & 1 & 0.96 & 1 & 0.87 & 1 & 0.95 \\ 
  \midrule \multirow{4}[4]{*}{n=2000} & \multirow{2}[2]{*}{hsnr} & p=30 & 1 & 0.8 & 1 & 0.81 & 1 & 1 \\ 
   &  & p=180 & 1 & 0.71 & 1 & 0.71 & 1 & 0.98 \\ 
  \cmidrule{2-9} & \multirow{2}[2]{*}{lsnr} & p=30 & 1 & 0.81 & 1 & 0.86 & 1 & 1 \\ 
   &  & p=180 & 1 & 0.71 & 1 & 0.74 & 1 & 0.97 \\ 
   \midrule 
 \multicolumn{1}{|c}{} & \multicolumn{1}{c}{} &       & \multicolumn{6}{c|}{Sparsistency (number of extra variables)} \\
 \midrule 
\multirow{4}[4]{*}{n=200} & \multirow{2}[2]{*}{hsnr} & p=30 & 6(0.1) & 6(0.7) & 4.8(0.4) & 5(0.9) & 30 & 29.9 \\ 
   &  & p=180 & 6(0) & 6(0.9) & 4.2(0) & 4.6(0.9) & 20.5 & 21.9 \\ 
  \cmidrule{2-9} & \multirow{2}[2]{*}{lsnr} & p=30 & 4.5(1.9) & 4.4(2.3) & 2.7(0.6) & 2.9(1.1) & 12.8 & 7.6 \\ 
   &  & p=180 & 1.9(0.5) & 2.3(1.4) & 1.9(0.3) & 2.2(1.2) & 0.8 & 2.7 \\ 
  \midrule \multirow{4}[4]{*}{n=2000} & \multirow{2}[2]{*}{hsnr} & p=30 & 6(0.1) & 6(0.7) & 6(0.1) & 6(0.7) & 30 & 30 \\ 
   &  & p=180 & 6(0) & 6(0.8) & 6(0) & 6(0.8) & 32.1 & 34.1 \\ 
  \cmidrule{2-9} & \multirow{2}[2]{*}{lsnr} & p=30 & 6(0.1) & 6(0.7) & 4.1(0.2) & 4.3(0.7) & 28.8 & 29.4 \\ 
   &  & p=180 & 6(0) & 6(0.8) & 4(0) & 4.1(0.8) & 13.9 & 15.9 \\ 
   \bottomrule 
\end{tabular}
}
\end{table}

\clearpage
\begin{figure}[!ht]
	\centering
	\includegraphics[width=0.9\textwidth]{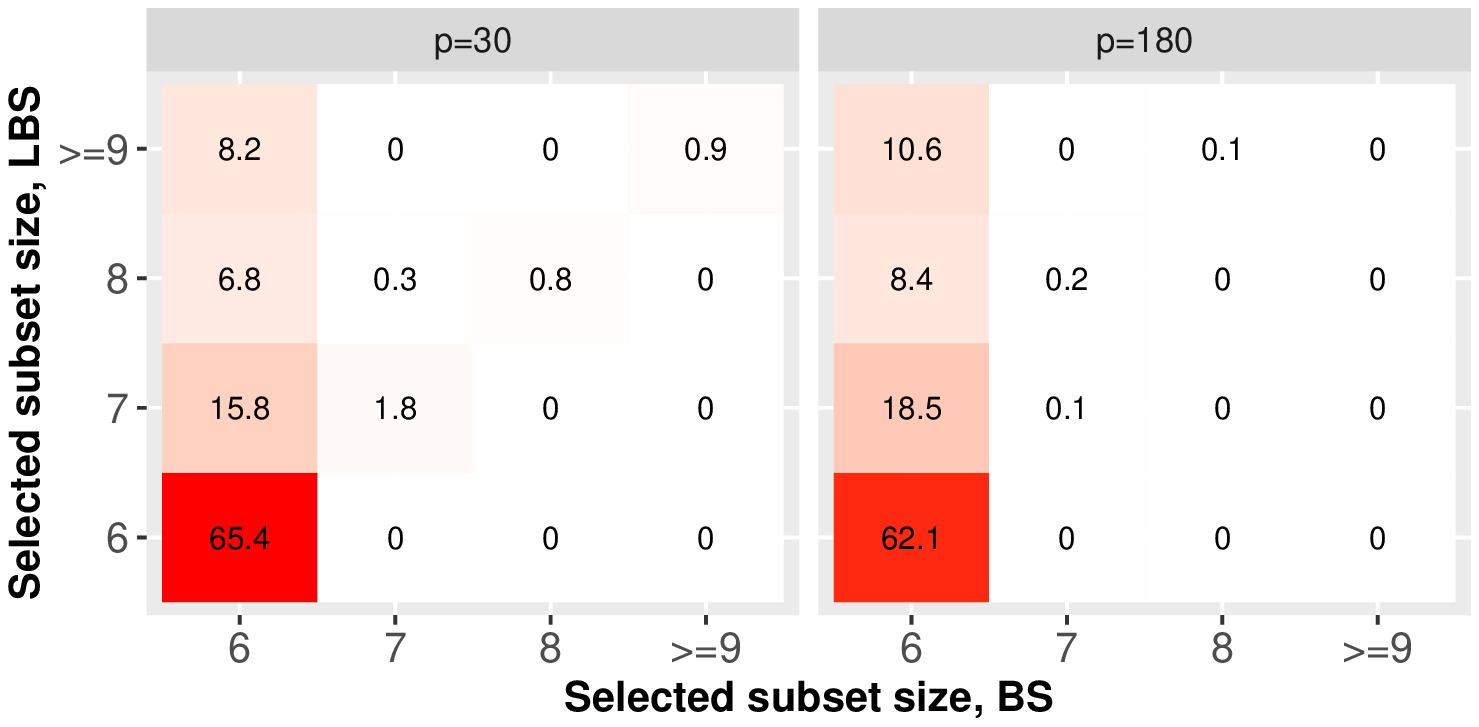}
	\caption{Frequency distributions (in $\%$) of the selected subset size  given by BS and LBS, based on $1000$ replications. The selection rule is C$_p$-edf. The true model is Orth-Sparse-Ex1 with $n=200$, $p_0=6$ and high SNR.}
	\label{fig:numvar_bs_lbs} 
\end{figure}

\clearpage
\bibliographystyleonline{chicago}
\bibliographyonline{reference.bib}
\end{document}